\documentclass{article}
\usepackage{arxiv}
\renewcommand{\shorttitle}{Scalable Composition of Byzantine Agreements under Reorder Attacks}

\usepackage[utf8]{inputenc}
\usepackage[T1]{fontenc}
\usepackage{amsmath,amsfonts,amssymb,amsthm}
\usepackage{amsopn}
\usepackage{algorithm}
\usepackage{algorithmic}
\usepackage{array}
\usepackage{tabularx}
\usepackage[caption=false,font=normalsize,labelfont=sf,textfont=sf]{subfig}
\usepackage{textcomp}
\usepackage{url}
\usepackage{verbatim}
\usepackage{graphicx}
\usepackage{xcolor}
\usepackage[hidelinks]{hyperref}
\hypersetup{
  pdftitle={Scalable Composition of Byzantine Agreements under Reorder Attacks},
  pdfauthor={J. Chen, J. Dong, J. Li, X. Xia, and W. Zhou}
}
\def\BibTeX{{\rm B\kern-.05em{\sc i\kern-.025em b}\kern-.08em
    T\kern-.1667em\lower.7ex\hbox{E}\kern-.125emX}}

\floatname{algorithm}{Protocol}

\theoremstyle{plain}
\newtheorem{theorem}{Theorem}
\newtheorem{lemma}[theorem]{Lemma}
\newtheorem{definition}[theorem]{Definition}
\newtheorem{corollary}[theorem]{Corollary}

\begin{document}
\title{Scalable Composition of Byzantine Agreements under Reorder Attacks\thanks{A preliminary version of this work appeared in the Proceedings of the 7th international conference on Advances in Financial Technologies (AFT'25). The authors thank several anonymous reviewers for their valuable comments for the conference version of this work, and Guillermo Angeris and Gerui Wang for helpful discussions. This work is partially supported by Beijing Advanced Innovation Center for Future Blockchain and Privacy Computing.}}

\author{%
  Jing Chen\thanks{J. Chen is the corresponding author. Department of Computer Science and Technology, Tsinghua University, Beijing, China (jchencs@tsinghua.edu.cn).}
  \And
  Jin Dong\thanks{J. Dong, Beijing Academy of Blockchain and Edge Computing (BABEC), Beijing, China (dongjin@baec.org.cn).}
  \And
  Jichen Li\thanks{J. Li, Department of Computer Science and Technology, Tsinghua University, Beijing, China (jichenli@mail.tsinghua.edu.cn).}
  \And
  Xuanzhi Xia\thanks{X. Xia, Department of Computer Science and Technology, Tsinghua University, Beijing, China (xiaxz24@mails.tsinghua.edu.cn).}
  \And
  Wentao Zhou\thanks{W. Zhou, Department of Computer Science and Technology, Tsinghua University, Beijing, China (zhouwt24@mails.tsinghua.edu.cn).}
}

\date{September 9, 2026}

\setcounter{footnote}{-1}

\maketitle

\begin{abstract}
  Byzantine agreement (BA) is a foundational building block in distributed systems and has been extensively studied for decades. With the growing demand for protocol composition in practice, the security analysis of BA protocols under multi-instance executions has attracted increasing attention. However, most existing adversary models focus solely on party corruption and neglect important threats posed by adversarial manipulations of communication channels in the network. Through channel attacks, messages can be reordered across multiple executions and lead to violations of the protocol's security guarantees, without the participating parties being corrupted.
  
  In this work, we present the first adversary model that combines party corruption and channel attacks. Based on this model, we establish new security thresholds for Byzantine agreement under parallel and concurrent compositions, supported by complementary impossibility and possibility results that match each other to form a tight bound. For the impossibility result, we show that even authenticated Byzantine agreement protocols cannot be secure under parallel composition when $n \leq 3t$ or $n \leq 2c + 2t + 1$, where $t$ and $c$ denote the number of corrupted parties and communication channels, respectively, and $n$ is the number of parties.

  For the possibility result, we prove the existence of secure protocols for unauthenticated Byzantine agreement under parallel and concurrent composition, when $n > \max\{3t, 2c+2t+1\}$.
  We first provide general black-box compilers that transform any single-instance secure BA protocol into one that is secure under parallel and concurrent executions without additional security assumptions. To optimize performance, we further design refined compilers using erasure-correcting codes. These refined versions significantly reduce communication overhead, particularly for long messages, where they achieve a constant multiplicative overhead compared with the original protocol, thus achieving the same asymptotic communication complexity.

\end{abstract}

\keywords{Byzantine agreement, protocol composition, channel reorder attack, security threshold}

\section{Introduction}
\label{sec:intro}


\begin{table*}[!t]
\renewcommand{\arraystretch}{1.25}
\caption{Communication Complexity of our Concurrently Secure Primitives}
\label{tab:complexity_results}
\centering
{\footnotesize
\begin{tabularx}{\textwidth}{|l|>{\centering\arraybackslash}X|>{\centering\arraybackslash}X|>{\centering\arraybackslash}X|}
\hline
\textbf{Primitive} & \textbf{Short Messages} & \textbf{Long Messages$^\ddagger$} & \textbf{Very Long Message$^\ast$} \\
\hline
\textbf{RMT} & $O(n|M| + n\log n)$ & $O(|M| + \lambda n \log n)$ & $O(|M| + \lambda n \log n)$ \\
\hline
\textbf{RB} & $O(n^3 |M| + n^3\log n)$ & $O(n^2|M| + \lambda n^3 \log n)$ & $O(n|M| + \lambda n^3 \log n)$ \\
\hline
\end{tabularx}}
\vspace{0.5ex}
\begin{minipage}{\textwidth}
\footnotesize
\setlength{\parskip}{0.15em}
$^\ddagger$ Long messages: $|M| \in \Omega(\lambda \log n)$. \\
$^\ast$ Very long messages: $|M| \in \Omega(\lambda n \log n)$. \\
Protocols for long and very long messages are secure against a polynomial-time adversary. Protocols for short messages remain our general compilers and secure against an unbounded adversary.
\end{minipage}
\end{table*}

\begin{table*}[!t]
\renewcommand{\arraystretch}{1.2}
\caption{Communication Complexity of Original and Compiled Asynchronous Consensus Protocols}
\label{table:before-after}
\centering
{\footnotesize
\setlength{\tabcolsep}{3pt}
\begin{tabularx}{\textwidth}{|l|>{\hsize=1.1\hsize\raggedright\arraybackslash}X|>{\hsize=0.967\hsize\centering\arraybackslash}X|>{\hsize=0.967\hsize\centering\arraybackslash}X|>{\hsize=0.956\hsize\centering\arraybackslash}X|}
\hline
Protocol & Input Length $L$ & Assumptions & Orig.\ Comm. & Compiled Comm. \\
\hline
Huang et al.~\cite{huang2024byzantine} & $|L| \in \Omega(1)$ & None & $\tilde{O}(n^{12}|L|)$ & $\tilde{O}(n^{13}|L|)$ \\
\hline
Das et al.~\cite{das2024asynchronous} & $|L| \in O(\lambda \log n)$ & Random Oracle & $O(\lambda n^3)$ & $O(\lambda n^4)$ \\
& $|L| \in O(\lambda n \log n)$ &   & $O(\lambda n^3 \log n)$  & $O(\lambda n^4 \log n)$ \\
& $|L| \in \Omega(\lambda n^2 \log n)$ &   & $O(n^2 |L|)$ & $O(n^2 |L|)$ \\
\hline
Guo et al.~\cite{guo2022speeding} & $|L| \in O(\lambda n \log n)$ & Threshold Public Key Encryption & $O(\lambda n^3 \log n)$ & $O(\lambda n^4 \log n)$ \\
 & $|L| \in \Omega(\lambda n^2 \log n)$ &  & $O(n^2 |L|)$ & $O(n^2 |L|)$ \\
\hline
\end{tabularx}}
\vspace{0.5ex}
\begin{minipage}{\textwidth}
\footnotesize
\setlength{\parskip}{0.15em}
The notation $\tilde{O}(\cdot)$ in the original paper omits poly-logarithmic factors in $n$. \\
Based on the original protocols' message lengths and by properly choosing our primitives, we achieve concurrent security without adding any extra assumption.
\end{minipage}
\end{table*}

The Byzantine agreement (BA) problem~\cite{pease1980reaching} is a foundational challenge in distributed systems, concerned with achieving consensus among parties even in the presence of failures or malicious behavior.
This problem has been extensively studied under various models, leading to significant theoretical breakthroughs, including both impossibility results \cite{lamport1982the,fischer1985impossibility} and the development of innovative protocols \cite{feldman1997opt, castro1999practical,pfitzmannW1992unconditional,katzK2009expected}.
As a core mechanism for achieving distributed consistency and fault tolerance, Byzantine agreement underpins a wide range of applications, including blockchain technologies~\cite{chen19algorand,malek2005fault,qu2023quantum,yoo2019formal,kuo2020fair,zhao2023secure}, secure multiparty computation~\cite{yao82protocols,goldreich87how,michael1988completeness,chaum1988multiparty,deligios2021round,fitzi2002unconditional,gennaro2002on} and diverse distributed services~\cite{galil1987cryptographic,locher2020fast,agrawal2016performance}.

Much of the early research on the Byzantine agreement problem was conducted under a classical model where a network of $n$ parties, each holding an initial input value, communicates over reliable synchronous channels. 
In this model, faults are restricted to the parties themselves: up to $t$ parties may be corrupted by an adversary, but the communication network is assumed to be secure ---messages cannot be forged, altered, or dropped by the adversary once sent by an honest party.
The standard goals of a BA protocol in this setting are:
(1) {\em agreement} ---all honest parties eventually terminate and output the same value; and
(2) {\em validity} ---if all honest parties start with the same input, that value is the output. 
Studies show that in unauthenticated settings, BA is achievable if and only if $t < \frac{n}{3}$~\cite{pease1980reaching,lamport1982the}, while in authenticated settings, where digital signatures prevent forgery~\cite{rivest1978a,goldwasser1988digital}, BA can be achieved with any number of corruption $t < \frac{n}{2}$~\cite{fitzi2003generalized}. 
The expected constant-round protocols with optimal resiliency in the two settings were then constructed in \cite{feldman1988optimal} and \cite{katzK2009expected}, respectively.

However, the aforementioned studies primarily focus on single-instance executions of Byzantine protocols, leaving the critical issue of protocol composition unaddressed. In modern distributed environments, such as sharded blockchains~\cite{zamani2018rapidchain,zhang2020cycledger,zhang2025scaling}, multiparty computation frameworks~\cite{saha2024application,gai2025scheme}, or cross-chain settings~\cite{yin2021sidechains,chen2026efficient}, multiple consensus protocols may be composed sequentially, in parallel, or concurrently.
In sequential composition, each protocol instance begins only after the previous one has been completed. In parallel composition, all instances are initiated at the same time and proceed with their steps aligned with each other. The most general model, concurrent composition, grants the adversary full control over the start times and execution rates of different instances.

The shift from single-instance to multi-instance execution introduces new problems, demanding a re-examination of existing assumptions about fault models and protocol security. In particular, \cite{lindell2006composition} showed that, without a unique and common session identifier for every execution of the protocol, no authenticated BA protocol can remain secure even under just two parallel executions when corruption exceeds $n/3$. In other words, authenticated BA performs as poorly as unauthenticated BA under parallel composition.
Common session IDs can sometimes be achieved via a bootstrap phase, but in this work we would like to pursue the power of stateless composition without relying on such a bootstrap, so that the resulting BA protocols can be directly applied whenever compositional executions are needed.

The challenge with protocol composition is that 
the adversary’s power is amplified, and in our study we will amplify it even further. Indeed, a key limitation of the classical model is its implicit assumption that adversarial power is confined to corrupting parties, while communication channels remain trustworthy in the sense that messages sent between honest parties will be received correctly. This abstraction overlooks more realistic adversarial behaviors, such as man-in-the-middle or message-reordering attacks, which target the network rather than the parties themselves, 
especially when protocol composition is concerned.
By carefully disrupting or manipulating some communication channels between multiple protocol instances, an adversary may induce cross-protocol interference, leading to violations of agreement or validity that would not occur in isolated executions. 

\subsection{Adversary Model}
To address the limitations of classical adversary models, we introduce an extended adversarial framework tailored to protocol composition environments. 
In this model, the adversary retains the classical ability to corrupt up to $t$ parties, allowing them to deviate arbitrarily from the protocol. 
In addition, channels can also be corrupted, without sending and receiving parties being corrupted or even aware of the channel attack. 
This reflects realistic attack approaches in networks where multiple protocols share the underlying network infrastructure.

In the real world, the adversary may exploit man-in-the-middle attacks, such as ALPACA~\cite{ALPACA}, to redirect the messages and bypass the security guarantees of TLS and application layer protocols, resulting in cross-protocol attacks that rearrange the messages from different protocols without breaking the signature scheme.

Such a setting captures the inherent interdependence of parallel or concurrently running consensus protocols, where messages from different instances may traverse overlapping physical or logical channels. 
Through reorder attacks, 
the adversary can introduce inter-instance inconsistencies that undermine global agreement guarantees, even if each protocol remains individually secure under classical assumptions.

This hybrid adversary model, which combines party corruption with channel attacks, introduces new theoretical challenges. 
In particular, these two kinds of attacks are coupled together, thus determining the exact resilience thresholds under simultaneous control of $t$ parties and $c$ channels requires a fresh analytical approach that goes beyond traditional BA models. We formally define the adversary model in Section \ref{sec:model}.

\subsection{Our Results}

Our main results establish a new tight security threshold under the new adversary model.
We first present the {\em impossibility} result (Theorem~\ref{thm:impossibility}) in Section \ref{sec:impossibility} showing that even in the authenticated setting, BA under parallel (and concurrent) composition is not achievable if $n\leq 3t$ or $n\leq 2c+2t+1$, where $t$ is the number of corrupted parties and $c$ is the number of channels that the adversary can manipulate.

Intuitively, even if a majority of parties are honest, agreement may still fail if the adversary can confuse one party, making it unable to tell which specific protocol instance it is participating in. For example, when two protocols are running in parallel, an attacker can target a particular party and swap the messages it is supposed to receive in the two protocols, causing its result to differ from that of other honest parties. Therefore, it is not enough to simply bound the number of corrupted parties ---one must also ensure that the honest parties maintain a sufficient number of communication with parties in the same protocol.
This leads to a trade-off between party corruption and channel interference, and motivates the combined threshold $n>2c+2t+1$ as the condition necessary to preserve the communication advantage of the honest majority.

We further prove {\em possibility} results in Sections \ref{sec:black-box} and \ref{sec:concurrentBA} under compositional executions, showing that the condition $n>2c+2t+1$ along with the condition $n>3t$ in classical setting (which is the tight bound when only party corruption is considered) is sufficient even in the unauthenticated setting. 
That is, when the combined adversarial power satisfies both conditions, it is possible to design a BA protocol that remains secure under arbitrary parallel and concurrent executions.  
These results complement our impossibility result, forming a tight security threshold: $n>\max\{2c+2t+1, 3t\}$ is both necessary and sufficient for protocol composition in our adversary model.
This tight threshold determines \emph{when} Byzantine agreement is achievable under composition, and we further fill the gap regarding \emph{how} to convert any authenticated protocol into one that is secure under concurrent and parallel composition.

Noticeably, in Section \ref{sec:black-box} we provide a general black-box compiler that transforms any single-instance secure BA protocol into one that remains secure under {\em parallel} executions, in synchronous networks (Theorem~\ref{thm:parallel}).
Extending this result, in Section~\ref{sec:concurrentBA} we propose a general black-box compiler designed for the more challenging setting of {\em concurrent} composition in asynchronous networks. 
The core of this transition lies in the construction of two fundamental communication primitives ---Reliable Broadcast (RB) and Reliable Message Transmission (RMT)---  in a concurrent environment (Theorems~\ref{thm:concurrent-RB} and \ref{thm:concurrent-RMT}).

While our general black-box compilers
are assumption-free and apply to an unbounded adversary, 
they incur an $O(n)$ or $O(n\log n)$ multiplicative factor in the resulting protocol's communication complexity, depending on the protocol's input length. 
To reduce the complexity in practice, in Section~\ref{sec:long} we design highly efficient RMT and RB primitives optimized for long messages (Theorems~\ref{thm:long-RMT} and~\ref{thm:long-RB}). 
We leverage erasure-correcting codes and cryptographic proof schemes, thus shifting from an unbounded adversary to a polynomial-time adversary when such constructions are adopted. 
By doing so, we reduce the complexity dependence of both primitives on the message length $|M|$, 
at the tradeoff of an additive factor now also depending on the security parameter~$\lambda$.

In Section \ref{sec:long} we formalize our complexity results for concurrent protocol executions in asynchronous networks. The same technique naturally applies to parallel executions and we omit this part of the discussion from the paper for succinctness.
As summarizations, Table~\ref{tab:complexity_results} provides the communication complexity of our refined primitives. Also, Table~\ref{table:before-after} summarizes the overall complexity of our compiled protocols compared with the original protocols, when our technique is applied to several representative asynchronous consensus protocols to make them concurrently secure. Here $|L|$ represents the input length of the protocol. Noticeably, those original protocols currently achieve the best communication efficiency under respective security assumptions. 

In conclusion, our work provides the first characterization of security for BAs under protocol composition with both party corruption and channel attacks, filling a critical gap in the theoretical understanding of consensus under adversarial network interference. 
By introducing a unified adversary model and establishing the tight security threshold, 
we extend the classical BA framework to more accurately reflect the complication of modern distributed systems.

\section{Related Work}

Research on Byzantine agreements has a long tradition,
and with the development of blockchains there has been a growing body of literature that recognizes the importance of this area.

\textbf{Composition of Byzantine Agreements.}
In a stateless model where protocols do not have distinguishing IDs, which is also considered by our work, \cite{lindell2006composition} studied the sequential, parallel, and concurrent composition of authenticated BA protocols and presented the impossibility result as mentioned in the introduction. It also constructed secure randomized protocols under sequential composition.
Without cryptographic primitives, \cite{benor2003Resilient} considered concurrently secure BAs with expected-constant round in both synchronous and asynchronous networks with optimal resiliency in the classical model (i.e., $n>3t$). 
However, as pointed out by \cite{cohen2023concurrent}, its security proof had some subtle issues regarding, e.g., the use of oblivious leader election.

Neither \cite{benor2003Resilient} nor \cite{cohen2023concurrent} is stateless: they assume a unique and common session ID for every execution of the protocol. 
As pointed out by \cite{lindell2006composition}, concurrent composition of any number of executions in this case is possible. Indeed, the focus of \cite{benor2003Resilient} and \cite{cohen2023concurrent} was to construct expected-constant round protocols that are concurrently composable. 
Similarly, \cite{Gupta2012ANL} assumed unique session IDs but considered an adversary who can corrupt a player in some but not all parallel executions.

Moreover, \cite{cohen2023concurrent}  followed the Universal Composability (UC) framework~\cite{Canetti2001uc}, which means the protocol is secure under arbitrary composition with itself and other cryptographic protocols
in an adversarially controlled manner. 
Focusing on the asynchronous setting, \cite{cohen2023concurrent} considered concurrent BA and presented the first information-theoretic multi-valued oblivious common coin (OCC) protocol with optimal resiliency. It further provided a modularized protocol for round-preserving parallel composition of BA that was simpler than the construction in \cite{benor2003Resilient}.

\smallskip
\textbf{Computation and Communication Complexity.}
An important line of work considers the complexity of a single execution of 
a Byzantine agreement. A body of foundational studies, in particular~\cite{dolev1983authenticated, dwork1988consensus, castro1999practical}, proposed Byzantine agreement protocols that required a quadratic number of messages. 
More specifically, the $O(n^2)$ communication complexity necessitates nearly all-to-all communication between parties. Subsequently, for {\em randomized Byzantine protocols}, a seminal work~\cite{king2006scalable} proposed a protocol where each party speaks to only $\Tilde{O}(1)$ other parties, though their protocol achieved almost-everywhere agreement~\cite{dwork1986fault} rather than agreement, where $1 - O(\log^{-1}n)$ fraction of the parties reach agreement. 
Subsequent studies have sought to bridge this gap by achieving agreement from almost-everywhere agreement. 
To the best of our knowledge, so far the best communication complexity achievable for agreement in authenticated setting
is $\Tilde{O}(1)$ rounds with $\Tilde{O}(\sqrt{n})$ communication and computation~\cite{gelles2024optimal}. With enhanced cryptographic primitives such as LWE, this can be improved to $\Tilde{O}(1)$ rounds and $poly(\lambda,\log n)$ communication~\cite{boyle2021breaking,fernando2024scalable}.
For asynchronous Byzantine agreements,~\cite{huang2024byzantine} presents the first polynomial-time protocol that achieves optimal resilience $n > 3t$ against a computationally unbounded adversary.

However, for {\em deterministic Byzantine protocols},~\cite{dolev1985bounds} established that achieving agreement inherently requires at least $\Omega(t^2)$ communication. Furthermore,~\cite{civit2024all} extended this lower bound to other generalized validity definitions (e.g., weak validity), demonstrating that any well-defined Byzantine agreement variant necessitates $\Omega(t^2)$ communication. Thus, deterministic Byzantine agreements incur fundamentally more overhead than randomized ones. 

Different from this line of works that focused on single-instance executions of BA protocols, our work investigates compositional executions of BA protocols. Although it remains unknown whether our protocols are optimal in terms of their complexity, they do achieve polynomial communication and computation. Moreover, our refined RMT and RB primitives effectively reduce the communication overhead when transforming single-instance secure BA protocols with long messages. Further improving the complexity of our protocols and proving complexity lower-bounds in our model are interesting open problems.

\smallskip
\textbf{Network Attacks.}
Many prior works on network security have shown that channel attacks (e.g., man-in-the-middle attacks) without corrupting participating parties exist in a wide range of scenarios and pose a significant threat in real applications. By delaying, forging, dropping, or redirecting, attackers can break SSL/TLS certificate validation in many critical software applications \cite{georgiev2012the, sounthiraraj2014smv}, cause security issues for HTTPS and its certificate trust model \cite{clark2013sok}, inflict devastating damage on private and consortium blockchains \cite{ekparinya2018impact}, and so on. 

It is also worth noticing that for transaction messages in DeFi, attackers (miners) can gain Miner Extracted Value (MEV) through transaction reordering attacks such as those studied in \cite{wang2024mvtl}, while \cite{heimbach2023sok} indicated that currently no mitigation schemes can fully solve this problem. 
For a fully connected large-scale network, \cite{tamir2022simple} analyzed Simple Majority Protocol (SMP) under probabilistic message loss and proved that it can reach consensus in three rounds of communication with probability approaching~1.
However, those are within the same consensus protocol and the role of channel attacks in the composition of Byzantine agreements has not been explored yet.

\smallskip
\textbf{Efficient Long Message Dissemination.}
Finally, various asynchronous BA protocols~\cite{abraham2021reaching,gao2021efficient} rely on the broadcast of long messages with sizes typically reaching $O(n \log n)$. 
Reliable dissemination of large blocks is also a fundamental requirement for many consensus protocols~\cite{adam2019aleph,guo2020dumbo,miller2016the}.
Consequently, an important research direction focuses on optimizing the processing of these long messages to effectively reduce communication complexity.

Approaches to this problem primarily rely on coding theory.
For instance, \cite{das2021asynchronous} utilizes Reed-Solomon error-correcting codes to restructure the reliable broadcast process.
In order to satisfy the redundancy constraints in error correction, this method necessitates the additional use of a hash function to verify the correctness of decoding. 
An alternative approach 
shifts from error-correcting codes to erasure-correcting codes, which requires knowing the positions of correct symbols. 
This technique, adopted by \cite{cachin2005asychronous,zhang2025optimistic,shoup2025kudzu}, typically employs Merkle trees to verify the integrity of the erasure-coded fragments. 
\section{Our Model}
\label{sec:model}

\subsection{Byzantine Agreement}
We consider the problem of \textit{Byzantine Agreement} where $P = \{P_1, \dots, P_n\}$ is a set of $n$ parties, and $P$ is common knowledge among all parties. 
The protocol $\Pi$ is executed $m$ times in parallel (or concurrently), where we denote the $k$-th instance by $\Pi_k$.
Each party $P_i$ in each instance $\Pi_k$ is formally modeled by an interactive (randomized) Turing machine $ITM_k^i$, with an input tape containing its initial value, an output tape for its final output, a random tape, and $n-1$ pairs of (input, output) communication tapes corresponding to the other $n-1$ parties in the same $\Pi_k$, denoted by $(Input_{i,j}^k, Output_{i,j}^k)$  for each $P_j \neq P_i$. More specifically, $Output_{i,j}^k$ contains messages that $i$ sends to $j$, and $Input_{i,j}^k$ contains messages that $i$ receives from $j$. 
Under parallel or concurrent composition, a party locally runs $m$ copies of such interactive Turing machines simultaneously, each corresponding to a particular protocol instance.

\smallskip
\smallskip
\textbf{Network and Communication.}
In most parts of the paper before Section~\ref{sec:concurrentBA}, the network is synchronous and point-to-point: communication proceeds in rounds, each consisting of a {\em send} phase followed by a {\em receive} phase. 
In the send phase, a party writes the sending message onto the corresponding output tape. Then, in the receive phase, the message is written onto the target party's input tape under the same protocol instance.

More specifically, the communication channel between parties $P_i$ and $P_j$, denoted by $C_{ij}$, is the set of all input and output tapes between the two parties, for all $m$ protocol instances. By symmetry, $C_{ji}$ means the same thing as $C_{ij}$ and an attack on one is also an attack on the other.
In an uncompromised channel $C_{ij}$, messages originating from $Output_{i,j}^k$ are delivered exclusively to $Input_{j,i}^k$ for all $k\in\{1, ..., m\}$, and vice versa.

\smallskip
\smallskip
\textbf{Stateless Setup.}
In stateless compositions of BA protocols, there is no common session identifier for the protocol instances. 
The indices of different protocol instances are for discussion purposes and are not accessible by the parties. 
Indeed, each Turing machine $ITM^i_k$ has access to its tapes but doesn't know the global index $k$.
Internally, each party may have individual numbering for its Turing machines, but those numbers may not be consistent across different parties.

For authenticated Byzantine Agreement (ABA), a {\em common setup} is shared across all protocol instances. In particular, all parties undergo a one-time trusted preprocessing phase that generates cryptographic primitives,
which are then used consistently across all protocol instances. 
Formally, each interactive Turing machine has a read-only {\em setup tape,} and all Turing machines run by the same party share this setup tape. 
For example, this tape may contain the secret key of a signature scheme used by the party, as well as the public verification keys of other parties.
For unauthenticated Byzantine Agreement, no cryptographic primitives are used and there is no setup tape or a setup phase.

\smallskip
\smallskip
\textbf{Composition of BA Protocols.}
We consider two types of protocol composition: parallel composition and concurrent composition. 
In parallel composition, all protocols start simultaneously, and each round has one time step across all protocol instances. 
In concurrent composition, the adversary can independently control the start time and the time steps of each round of each protocol instance. 

\smallskip
\smallskip
\textbf{Adversarial Model.}
We consider an adversary $\mathcal{A}(t, c)$ operating in the {\em point-to-point full information} setting
in a parallel or concurrent execution of protocol instances $\Pi_1, \Pi_2, ..., \Pi_m$.
Its capabilities include:
\begin{itemize}
    \item {\em Corruption:} The adversary can corrupt up to $t$ parties and chooses which parties to corrupt before any protocol instance begins. This is called static corruption.
    \item {\em Channel Reorder Attack:}  
    The adversary can attack up to $c$ channels between honest parties. 
    When $C_{ij}$ is under attack, the adversary is allowed to redirect messages on corresponding input tapes for different protocol instances, in both directions.
    See below for a formal definition, which is also illustrated in Figure \ref{fig:turing-machine}.
    \end{itemize}

    \begin{definition}[Channel Reorder Attack]
    \label{def:reorder-attack}
    For any two honest parties $P_i$ and $P_j$,
    a channel reorder attack on $C_{ij}$ allows the adversary to arbitrarily redirect messages between different protocol instances and in both directions. 
    Specifically, the adversary can deliver any message $msg \in Output_{x,y}^{k_1}$ to any $Input_{y,x}^{k_2}$, where $k_1, k_2 \in \{1, ..., m\}$ and $\{x, y\} =\{i, j\}$. 
    \end{definition}

    \begin{figure}
    \centering
    \includegraphics[width=0.95\linewidth]{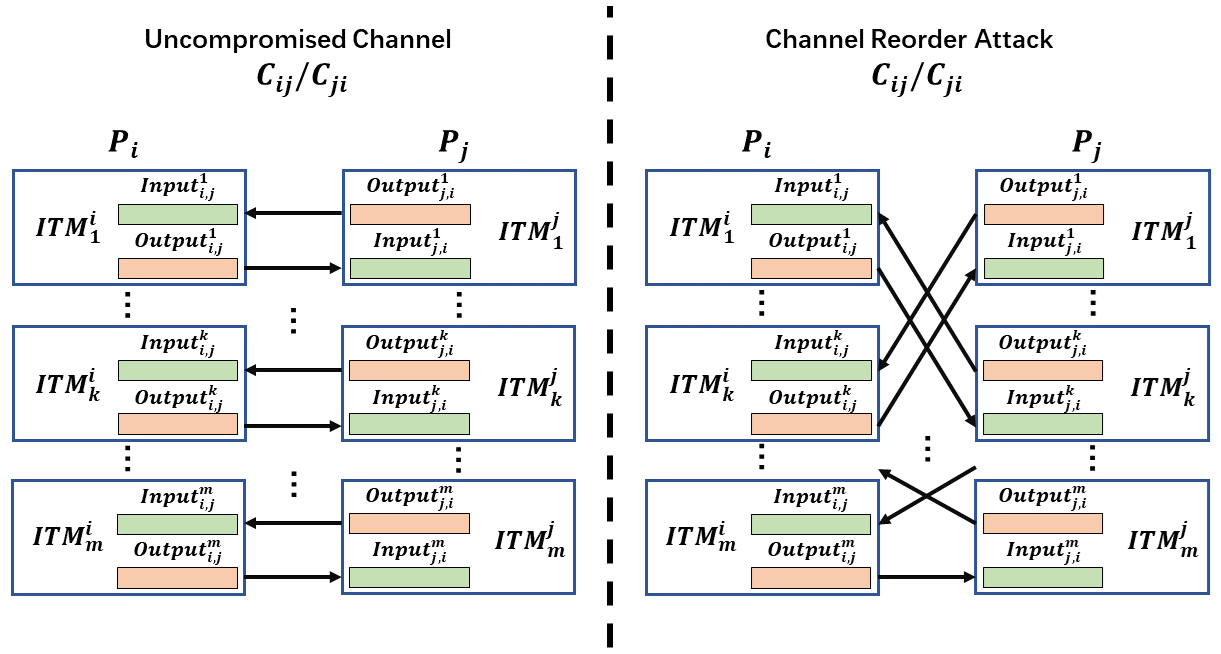}
    \caption{Communication diagrams for channel $C_{ij}/C_{ji}$, with and without channel reorder attack.}
    \label{fig:turing-machine}
    \end{figure}

    \begin{itemize}
    \item {\em Rushing:} The adversary is rushing, meaning it can observe all messages sent by honest parties in a round before deciding the messages to send from corrupted parties and message reordering in the same round.
    
    \item {\em Computation Power:} For our impossibility results, which apply to authenticated BA protocols, we assume that all parties (honest and adversarial) are probabilistic polynomial-time (PPT).
    For our positive results in Sections~\ref{sec:black-box} and~\ref{sec:concurrentBA}, which apply to unauthenticated BA protocols, we allow computationally unbounded adversaries. 
    Finally, for results in Section~\ref{sec:long}, which leverage erasure-correcting codes and cryptographic proof schemes, we again assume polynomial-time adversaries.
\end{itemize}

For this extended adversary model, a protocol $\Pi$ is considered a solution to the BA problem if it satisfies the following properties:

\begin{definition}[Byzantine Agreement]
Let $\mathcal{A}(t, c)$ be an adversary who can corrupt up to $t$ parties and $c$ channels. A protocol $\Pi$ solves the Byzantine Agreement problem if for any adversary $\mathcal{A}(t, c)$, the following two properties hold:
\begin{itemize}
    \item \emph{Agreement}: All honest parties eventually terminate and output the same value.
    \item \emph{Validity}: If all honest parties begin with the same initial value $v$, then their output values must all be $v$.
\end{itemize}
\end{definition}

\subsection{Composition Security}

Composition security plays a pivotal role in analyzing the robustness of cryptographic protocols executed within larger systems or concurrently with other protocol instances, rather than in isolation.
Our work adopts the definition of composition security as presented in~\cite{lindell2006composition}, with the notable distinction that we explicitly consider attacks on communication channels.

\begin{definition}[Composition Security of ABA]
Let $P_1, ..., P_n$ be parties for an ABA protocol $\Pi$. 
We say that $\Pi$ remains secure under $m$ instances of parallel (or concurrent) executions, if for every PPT adversary $\mathcal{A}(t, c)$, the requirement of BA holds for every individual instance of $\Pi$ within the following procedure:
\begin{enumerate}
    \item \emph{Setup Phase:} A single, global trusted setup phase is performed once. This phase generates setup strings $s_1, \dots, s_n$ for parties $P_1, \dots, P_n$, 
    which are then consistently used across all $m$ protocol instances. 

    \item \emph{Static Corruption:} The adversary $\mathcal{A}(t, c)$ statically corrupts up to $t$ parties, gaining full control over their actions.
    Additionally, the adversary can attack up to $c$ communication channels $C_{ij}$ between honest parties, enabling channel reorder attacks described in Definition~\ref{def:reorder-attack}.
    
    \item \emph{Parallel (Concurrent) Executions:} The following procedure is repeated in parallel (or concurrently) for each protocol instance, until the adversary halts:
    \begin{itemize}
        \item The adversary specifies the initial input values for all parties.
        \item Each party uniformly generates the content of its random tape for the execution in this protocol instance.
        \item All parties are invoked for the execution of $\Pi$.
        \item The adversary determines the messages sent by the corrupted parties, while honest parties strictly adhere to the protocol $\Pi$.
        \item Messages transmitted over attacked channels are subject to the adversary's reordering on corresponding input tapes, whereas messages on unattacked channels between honest parties are delivered reliably.
    \end{itemize}
\end{enumerate}
\end{definition}


For unauthenticated Byzantine Agreement protocols, the definition of composition security is analogous, with the difference that no setup phase or cryptographic primitives are needed when considering unbounded adversaries. 
\section{Impossibility Result}
\label{sec:impossibility}

Cryptographic authentication allows protocols to 
tolerate a larger number of corrupted 
parties.
In synchronous networks, authenticated BA can tolerate up to $n/2$ corruptions under a single execution, significantly exceeding the $n/3$ limit for unauthenticated protocols.
However, this intuition breaks down in composed scenarios. When multiple executions occur, a signature from a party $P$ on message $x$ does not prove that $P$ signed $x$ in the ``current'' specific execution.
An adversary can borrow signed messages from one execution and use them in another, rendering the public-key infrastructure not as useful as before in distinguishing execution contexts.

\begin{theorem}
\label{thm:impossibility}
    There does not exist a protocol for stateless authenticated Byzantine Agreement in a synchronous network with $n$ parties that remains secure under parallel composition (even for just two executions) against an adversary $\mathcal{A}(t, c)$, provided that (1) $n \leq 3t$, or (2) $n \leq 2c + 2t + 1$.
\end{theorem}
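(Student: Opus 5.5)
Case (1), $n\le 3t$, is the impossibility result of \cite{lindell2006composition}: without session identifiers, no authenticated BA protocol stays secure under two parallel executions once $t\ge n/3$. Their result is stated for $c=0$. Our adversary $\mathcal{A}(t,c)$ is at least as strong, since it may simply leave every channel untouched. So any protocol that is secure against $\mathcal{A}(t,c)$ would also be secure in their model, which is a contradiction. Nothing more is needed for this case.

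For case (2), $n\le 2c+2t+1$, I would use an indistinguishability argument on two parallel instances $\Pi_1,\Pi_2$, in the style of the usual ``hexagon/scenario'' proofs. Since adding parties only helps the protocol, assume $n=2c+2t+1$, or reduce to this boundary by shrinking $c$ or $t$. Split the parties into a target party $P^*$, a set $A$ of size $c$, a set $B$ of size $c$, a set $T_A$ of size $t$ and a set $T_B$ of size $t$. Give inputs $0$ to $A$ in both instances and $1$ to $B$ in both instances.

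The attack proceeds as follows.
\begin{itemize}
\item In $\Pi_1$, the corrupted set $T_A$ behaves honestly with input $0$.
\item In $\Pi_2$, the set $T_A$ behaves honestly with input $1$.
\item The $c$ channels between $P^*$ and $A$ are attacked so that the traffic of $\Pi_1$ and $\Pi_2$ is swapped on them.
\end{itemize}
Because corrupted parties can run different virtual inputs in different instances, and the reorder attack lets honest $A$ messages from one instance appear in the other, a short check confirms the resulting views. In instance $\Pi_1$, $P^*$ sees $A\cup T_A$ carrying input-$0$ behaviour and $B\cup T_B$ carrying input-$1$ behaviour, while $A$ sees its own side consistently. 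The main obstacle is making these views line up exactly with legitimate executions. Reordered messages are genuine signed messages from honest parties, so authentication gives no help. Still, the messages $A$ receives from $P^*$ are $P^*$'s messages of the other instance, so one must verify that each honest party's view coincides with its view in a legitimate execution, round by round. I would settle this by induction on rounds, defining all views from a single joint simulation.

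The legitimate executions to compare against are single-instance runs in which only $t$ parties are corrupted. In the first, where everyone except $T_B$ is honest, $T_B$ is corrupted and $A\cup T_A$ all hold $0$. In the second, $T_A$ is corrupted and $B\cup T_B$ all hold $1$. To force a validity or agreement conflict, I would choose inputs so that in one instance all honest parties except $P^*$ effectively hold a common value. For example, set $P^*$'s input to $0$ in $\Pi_1$ and to $1$ in $\Pi_2$, and arrange that the corrupted sets play the other side's values. Then in $\Pi_1$, $A$'s view equals its view in a run where all honest parties hold $0$, so validity forces $A$ to output $0$. Meanwhile, $P^*$'s view in $\Pi_1$ equals its view in a run where all honest parties hold $1$, because the swapped $A$ messages come from $\Pi_2$. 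So $P^*$ outputs $1$, and agreement fails in $\Pi_1$.

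The counting requirement is that each comparison scenario contains at most $t$ corrupted parties and uses at most $c$ attacked channels. This is where $|A|=c$, $|T_A|=|T_B|=t$ and the single extra party $P^*$ are used, matching $n\le 2c+2t+1$. If the inequality were strict, enlarging $A$ or $B$ would exceed either the corruption budget or the channel budget.
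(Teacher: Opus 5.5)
Your case (1) is fine and is what the paper does. Case (2), however, has a genuine gap. First, the views you claim are not the ones your attack produces. $A$ holds $0$ in both instances, so the $\Pi_2$-traffic of $A$ that you swap onto $P^*$'s $\Pi_1$ channels is still input-$0$ behaviour, and $P^*$'s $\Pi_1$ view cannot resemble a run in which all honest parties hold $1$. Likewise, $B$ stays honest with input $1$ in $\Pi_1$, so $A$'s view is not that of a run in which all honest parties hold $0$, and validity never applies.

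The deeper obstruction is your choice of comparison executions. A single-instance run with at most $t$ corruptions can explain at most $t$ parties behaving inconsistently with the common honest input. But whichever half ($A\cup T_A$ or $B\cup T_B$, of size $c+t$) must look ``wrong'' to the party whose output you want to force would have to be entirely corrupted there, which needs $c+t>t$ corruptions once $c\ge 1$. A single honest $B$-party also cannot act as its $\Pi_1$ copy toward $A$ and as its $\Pi_2$ copy toward $P^*$ within one instance. Reordering has no meaning in a single instance, so the channel budget $c$ never enters your comparisons.

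The missing idea is twofold. The forcing scenarios must themselves be two-instance parallel executions, in which $c$ of the ``wrong'' parties are honest and only their channels to $P^*$ are reordered. And the two halves must take turns being the bad side. The paper fixes one adversary-free cross-wired system $\mathsf{S}$. Each copy of $P^*$ is wired to the same-index copy of one half and the opposite-index copy of the other half. Inputs are assigned per copy rather than per side: every copy-$0$ ITM gets $0$ and every copy-$1$ ITM gets $1$.

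In your notation, cross-wire $P^*$ with $B\cup T_B$. (i) Reorder the $c$ channels between $P^*$ and $B$, and corrupt $T_B$, who just swap their $P^*$-traffic across instances. This reproduces $\mathsf{S}$ exactly, and now all honest parties of each instance share an input. So validity forces $P^*$'s copy $0$ to output $0$ and $B$'s copy $1$ to output $1$. (ii) Reorder the channels between $P^*$ and $A$, and corrupt $T_A$, grouping instances so that $P^*$'s copy $0$ runs together with every other party's copy $1$. This again reproduces $\mathsf{S}$, and agreement forces $P^*$'s copy $0$ and $B$'s copy $1$ to output the same value, a contradiction.

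Finally, take halves of sizes $\lfloor (n-1)/2\rfloor$ and $\lceil (n-1)/2\rceil$, each at most $c+t$, as the paper does. This avoids your reduction to $n=2c+2t+1$, which cannot be reached by shrinking $c$ or $t$ when $n$ is even.
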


\begin{proof}[Proof Sketch]
The core idea is to construct two systems.
In each system, the adversary corrupts parties and reconfigures the network topology so that certain instances are cross-connected between parallel executions while receiving conflicting inputs.
We show that, for certain honest instances, the two systems are indistinguishable.
However, due to the agreement and validity properties guaranteed by the BA protocol, this honest instance must output $1$ in one system and $0$ in the other. 
This leads to a contradiction.
The formal rigorous proof is deferred to Appendix \ref{appendix:proof of impossibility}.
\end{proof}

\section{Black-box Compiler for Parallel Composable BA}
\label{sec:black-box}

This section presents a novel black-box compiler that transforms any BA protocol into one that remains secure under parallel composition. 
The proposed approach addresses the challenges introduced by message reordering attacks through the integration of a new Reliable Message Transmission (RMT) protocol.

\subsection{RMT for Parallel Composition}

To achieve security under parallel composition, our compiler replaces the standard point-to-point communication primitives in a given BA protocol with our newly designed Reliable Message Transmission protocol. 
This RMT primitive guarantees correct and unforgeable message delivery within the same protocol instance, even in the presence of adversarial message reordering. 
The formal specification of the RMT is presented in Protocol~\ref{sim:RMT}.%
\footnote{If the values of $t$ and $c$ are publicly known, then the threshold $\frac{n-1}{2}$ in Protocol~\ref{sim:RMT} can be replaced by $t+c$ and the protocol continues to work. The description above works when the values of $t$ and $c$ are unknown.}

\begin{algorithm}
\caption{Reliable Message Transmission}
\label{sim:RMT}
Given that party $P_i$ wishes to send message $msg$ to $P_j$, the reliable message transmission protocol proceeds as follows:
    \begin{itemize}
        \item Round 0: Party $P_i$ initiates this message transmission by sending the message $(deliver, msg, P_i, P_j)$ to all parties except $P_j$.
        \item Round 1: Party $P_i$ sends the message $(deliver, msg, P_i, P_j)$ to $P_j$; and each other party $P_k \neq P_j$ that received message $(deliver, msg, P_i, P_j)$ from $P_i$ in round 0 forwards the same message $(deliver, msg, P_i, P_j)$ to party $P_j$.
        \item Decision: Upon party $P_j$ receiving strictly more than $\frac{n-1}{2}$ copies of the message $(deliver, msg, P_i, P_j)$ by the end of round 1, it accepts message $msg$ from~$P_i$. 
    \end{itemize} 
\end{algorithm}

When $n > 2t + 2c + 1$, this RMT protocol ensures message delivery through a redundancy-based approach.
The principle is to use a majority vote to secure message transmission against potential attacks, a technique also found in prior works but in different formats; see, e.g., ~\cite{rabin1989verifiable,fitzi2000partial,civit2023easy}.
When executed by all participating parties, the RMT primitive is characterized by two essential properties for its security within the black-box compiler.
\begin{itemize}
\item \emph{Correctness}: If both parties $P_i$ and $P_j$ are honest, and $P_i$ used the RMT protocol to send a message $msg$ to $P_j$ in some round $r$, then $P_j$ will accept the message $msg$ from $P_i$ by the end of round $r+1$.
\item \emph{Unforgeability}: If both parties $P_i$ and $P_j$ are honest, and $P_i$ did not use the RMT protocol to send message $msg$ to $P_j$ in round $r$, then $P_j$ will not accept $msg$ from $P_i$ by the end of round $r+1$.
\end{itemize}

The following lemma establishes the security guarantees of the RMT protocol:

\begin{lemma}
    Protocol~\ref{sim:RMT} satisfies both \emph{correctness} and \emph{unforgeability} properties against any adversary $\mathcal{A}(t,c)$, provided that $n > 2c + 2t + 1$.
\end{lemma}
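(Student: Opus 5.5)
The plan is a counting argument over the $n-1$ \emph{delivery paths} from $P_i$ to $P_j$ inside a fixed instance $\Pi_k$. These are the direct path $P_i \to P_j$ (round 1) and, for each $P_\ell \notin \{P_i,P_j\}$, the relay path $P_i \to P_\ell \to P_j$ (round 0, then round 1). I will have $P_j$ count copies by distinct sender, i.e.\ at most one copy per input tape $Input^k_{j,\ell}$. Then the number of copies $P_j$ counts in $\Pi_k$ is at most the number of these paths that deliver a copy.

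Call a path \emph{good} if every intermediate party on it is honest and every channel it uses is unattacked. The key step is to bound the number of non-good (\emph{bad}) paths by $t+c$. Each corrupted party can lie on at most one path, namely as the relay $P_\ell$ in its own path, since $P_i,P_j$ are honest. Each attacked channel also lies on at most one path. Indeed, $C_{ij}$ appears only in the direct path; $C_{i\ell}$ and $C_{\ell j}$ appear only in the relay path through $P_\ell$; channels between two relays are never used. Hence at most $t+c$ paths are bad, and at least $n-1-t-c$ are good.

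\emph{Correctness.} Suppose honest $P_i$ sends $msg$ in round $r$ of $\Pi_k$. On every good path the message written on $Output^k$ is delivered to the corresponding $Input^k$ tape in the same instance, because the channel is unattacked. By synchrony and the aligned rounds of parallel composition, an honest relay receives it in round $r$ and forwards it to $P_j$ in round $r+1$ within $\Pi_k$. So $P_j$ gets at least $n-1-t-c$ copies by the end of round $r+1$. The hypothesis $n>2c+2t+1$ is equivalent to $n-1-t-c > \frac{n-1}{2}$, which gives acceptance.

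\emph{Unforgeability.} Suppose $P_i$ did not send $msg$ to $P_j$ in round $r$ of $\Pi_k$. On a good path no copy of $(deliver,msg,P_i,P_j)$ can reach $Input^k_{j,\cdot}$ in round $r+1$. The direct copy would have to come from $P_i$'s own instance-$k$ tape over an unattacked channel. A good relay $P_\ell$ forwards only what it read on $Input^k_{\ell,i}$, and that tape is fed only by $P_i$'s instance-$k$ output because $C_{i\ell}$ is unattacked. This is exactly where cross-instance replay, e.g.\ $P_i$ legitimately sending the same $msg$ in another instance, is blocked: redirecting such a message requires an attacked channel, whose path is already counted as bad. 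So forged copies arrive only along bad paths, at most $t+c < \frac{n-1}{2}$ of them, and $P_j$ does not accept.

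I expect the main obstacle to be making the ``one copy per path'' accounting rigorous under the reorder attack. An attacked channel can inject several messages from various instances, and a rushing adversary can do so in either direction. I will handle this by fixing the counting rule, one copy per sender per instance, and checking that the direction-symmetric definition of $C_{ij}$ still charges each attacked channel to a single path.
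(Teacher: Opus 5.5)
Your proposal is correct and uses essentially the same counting argument as the paper: each corrupted relay or attacked channel can suppress or inject at most one of the $n-1$ copies, so at least $n-1-t-c>\frac{n-1}{2}$ genuine copies arrive and at most $t+c<\frac{n-1}{2}$ forged ones can. Your path decomposition and your explicit one-copy-per-sender counting rule make rigorous what the paper's proof asserts in one line; the paper's unforgeability bound tacitly needs that counting rule, and the paper only states it explicitly for the concurrent version, Protocol~\ref{protocol:RMT-concurrent}.
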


\begin{proof}
    \emph{Correctness:}  Since $P_i$ is honest, it will send $(deliver, msg, P_i, P_j)$ to all other parties except $P_j$ in round 0. 
    In round 1, each party $P_k \ne P_j$, including $P_i$, forwards/sends the message to $P_j$, unless:
    \begin{itemize}
        \item $P_k$ is corrupted by the adversary, or
        \item The communication channel $C_{ik}$ or $C_{kj}$ is attacked by the adversary.
    \end{itemize}
    Since each corruption or channel attack can prevent at most one forwarding, at most $t + c$ of these forwarded messages can be suppressed. 
    Thus, $P_j$ receives at least $n - 1 - t - c$ forwarded messages. 
    Since $n > 2c + 2t + 1$, we have $\frac{n-1}{2} > t + c$, which is equivalent to $n-1-t-c > \frac{n-1}{2}$.
    Therefore, $P_j$ receives strictly more than $\frac{n-1}{2}$ copies of the message $(deliver, msg, P_i, P_j)$, and accepts $msg$ from $P_i$ as valid by the end of round 1.

    \emph{Unforgeability:} If $P_i$ did not initiate the RMT protocol for message $msg$ to $P_j$, then the only way for $P_j$ to receive $(deliver, msg, P_i, P_j)$ is via adversarial intervention. 
    The adversary can inject at most $t + c$ such messages through corrupted parties or attacked channels.
    Since $t + c < \frac{n - 1}{2}$, $P_j$ receives fewer than the majority threshold of messages, and thus does not accept $msg$ from $P_i$.
\end{proof}

\subsection{Black-box Compiler}

Using the RMT protocol, our black-box compiler applies a simple yet effective transformation: given an existing BA protocol $\Pi_{*}$, it constructs a new protocol $\Pi$ by replacing all point-to-point message delivery operations with the RMT protocol.
The internal logic, state transitions, and computation procedures of $\Pi_{*}$ are completely preserved, ensuring that the transformation is non-intrusive and protocol-agnostic. 

Formally, a party $P_i$'s $ITM^i$ in protocol $\Pi$ uses its corresponding $ITM_{*}^i$ in $\Pi_{*}$ as a sub-routine. The outer one, $ITM^i$, handles message sending and receiving according to the RMT protocol, and passes accepted messages to the inner one, $ITM_{*}^i$. The inner one then computes its state transition and out-going messages according to $\Pi_{*}$, and passes the out-going messages to $ITM^i$ to send out.

Notice that in $\Pi$, a message that originally takes one round to deliver in $\Pi_{*}$ now takes two rounds. As such, honest parties will only initiate a message sending in odd rounds (wlog assuming the protocol starts with round 1) 
and will only accept a message in even rounds.
More specifically, if in the execution of $\Pi_{*}$ an honest party $P_i$ sends a message to $P_j$ in round~$r$, then in the execution of $\Pi$, $P_i$ initiates the corresponding message sending in round $2r-1$ and $P_j$ accepts the message by the end of round $2r$.

\begin{theorem}
    \label{thm:parallel}
    Let $\Pi_{*}$ be a BA protocol for $n$ parties that tolerates up to $t$ Byzantine corruptions. 
    Let $\Pi$ be the protocol obtained by replacing all point-to-point communication in $\Pi_{*}$ with the RMT protocol.
    Then, $\Pi$ is secure under parallel composition of any number of executions against any adversary $\mathcal{A}(t, c)$, provided that $n > 2t + 2c + 1$.
\end{theorem}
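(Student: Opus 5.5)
We plan a simulation argument: fix one instance $\Pi_k$ of the $m$ parallel executions of $\Pi$ and show that, from the viewpoint of the honest inner machines $ITM_{*}^i$ in that instance, the run is a legal single-instance execution of $\Pi_{*}$ against an adversary corrupting at most $t$ parties. Agreement and validity of $\Pi_k$ then follow from the security of $\Pi_{*}$. Since $k$ is arbitrary, every instance is secure.

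The key steps, in order, are as follows.
\begin{enumerate}
\item First we need the lemma on Protocol~\ref{sim:RMT} to hold per instance under parallel composition. Because executions are parallel, all instances are round-aligned. A reordering attack on $C_{ab}$ can therefore only move messages between instances within the same round, or swap direction. In the counting of the lemma, each attacked channel still affects at most one forwarding path into $P_j$ for a fixed $(msg,P_i,P_j)$ transmission in instance $k$. Either it suppresses one copy (the channel is on the path $P_i\to P_k\to P_j$) or it injects one spurious copy borrowed from another instance. The same holds for each corrupted party.
\item So we re-run the correctness/unforgeability counts inside instance $k$. The honest $P_j$ in instance $k$ receives at least $n-1-t-c>\frac{n-1}{2}$ genuine copies when $P_i$ sent $msg$ in instance $k$ at round $2r-1$. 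It receives at most $t+c<\frac{n-1}{2}$ copies of any $(deliver,msg',P_i,P_j)$ that $P_i$ did not send in instance $k$ at that round. Copies from other instances or other rounds count as forged for instance $k$.
\item We also note that a party accepts at most one value per sender per round. Two distinct values cannot both exceed $\frac{n-1}{2}$ copies when only $t+c$ copies are adversarial and each honest forwarder relays only what it received.
\end{enumerate}
One subtlety to handle: if $\Pi_{*}$ lets $P_i$ send several messages to $P_j$ in one round, we treat the message set as one $msg$.

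Next we define the simulated single-instance execution of $\Pi_{*}$. Its parties are the honest parties in instance $k$, with corrupted set the same $\le t$ parties. For each pair of honest $P_i,P_j$, $P_j$'s round-$r$ input from $P_i$ in $\Pi_{*}$ is exactly what it accepts at round $2r$ in $\Pi$. By the per-instance lemma this equals what $P_i$'s inner machine output at round $r$. For corrupted senders, whatever $P_j$ accepts (possibly nothing, which we treat as a default/empty message) is attributed to a Byzantine sender. By induction on $r$, the honest inner states in $\Pi_k$ coincide with those of an execution of $\Pi_{*}$ against a rushing adversary controlling those $t$ parties. That adversary simulates the whole multi-instance world, including the other instances and the channel attacks, in its head. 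Full information and rushing make this simulation legal, and honest randomness is independent per instance, so the distribution matches.

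The main obstacle is exactly this simulation step: justifying that an adversary for $\Pi_{*}$ can reproduce the accepted messages from corrupted senders. These messages depend on honest parties' messages in other instances and on cross-instance reordering. We would argue that a single-instance Byzantine adversary is allowed arbitrary (even unbounded, full-information) strategies. It can therefore sample honest inputs and randomness for the other $m-1$ instances itself, since the composition adversary specifies inputs anyway. Crucially, unforgeability guarantees that no message is ever attributed to an honest sender in instance $k$ other than what it actually sent. All adversarial influence is thus confined to corrupted senders, which is precisely the power $\Pi_{*}$ tolerates.
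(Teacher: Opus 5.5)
Your proposal is correct and follows essentially the paper's argument: fix one instance, let $\mathcal{A}_{*}$ corrupt the same parties in a single run of $\Pi_{*}$ and deliver at round $r$ exactly what honest parties accept from corrupted senders at round $2r$, then use RMT correctness and unforgeability with induction on rounds to equate the honest inner views. Your explicit argument that $\mathcal{A}_{*}$ can simulate the other $m-1$ instances in its head fills in a step the paper leaves implicit, and it works precisely because $\Pi_{*}$ is unauthenticated here (with a shared signing setup, $\mathcal{A}_{*}$ could not reproduce honest parties' signed messages from other instances); your ``at most one value per sender per round'' remark is false for corrupted senders and holds only for honest ones, but that is all the argument actually uses.
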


\begin{proof}
    Arbitrarily fixing an instance $\Pi_j$, we prove its security under parallel composition by induction. Consider a mental game with a single execution of the protocol $\Pi_{*}$ as follows.

    For each honest party $P_i$ and the corresponding $ITM^i_j$ in $\Pi_j$, let its inner Turing machine be $ITM^i_{*j}$. We  construct an $ITM^i_{*}$ for $\Pi_{*}$, such that $ITM^i_{*}$ has the same initial input tape, random tape (and setup tape) as $ITM^i_j$, and thus also as $ITM^i_{*j}$.
    
    Note that, for any adversary $\mathcal{A}(t, c)$ in $\Pi$, the only effects it can have on some party $P_i$ in instance $\Pi_j$ are the following:
    \begin{itemize}
        \item It can fully control $ITM^i_j$ if the party $P_i$ is corrupted.
        \item It can cause an honest party $P_i$'s $ITM^i_j$ to accept a message $msg$ from a corrupted party $P_j$ at some round $2r$. This is because, in $\Pi$, no honest ITM accepts a message in odd rounds, and due to the unforgeability property of RMT, the adversary cannot forge a message from other honest parties to make an honest $P_i$ accept it.
    \end{itemize}
    Therefore, for any adversary $\mathcal{A}$ in $\Pi_j$, we simulate a corresponding adversary $\mathcal{A}_{*}$ in $\Pi_{*}$ as follows:
    \begin{itemize}
        \item Whenever $\mathcal{A}$ corrupts a party $P_i$ in $\Pi_j$, $\mathcal{A}_{*}$ corrupts the corresponding $ITM_{*}^i$ in $\Pi_{*}$.
        \item Whenever $\mathcal{A}$ causes an honest $ITM^i_j$ to accept a message $msg$ from a corrupted party $P_j$ at round $2r$ in $\Pi_j$, $\mathcal{A}_{*}$ sends message $msg$ from $ITM^j_{*}$ to $ITM^i_{*}$ at round $r$ in $\Pi_{*}$.
    \end{itemize}

    Since for every honest party $P_i$, both $ITM^i_{*j}$ and $ITM^i_{*}$ have the same input, random, and setup tapes at the beginning of round 1, they have the same state transition and out-going messages.
    Because $ITM^i_j$ handles message sending and receiving according to the RMT protocol, and by the construction of the adversary $\mathcal{A}_{*}$ in $\Pi_{*}$,
    at the end of round 2 in $\Pi_j$, $ITM^i_j$ has accepted exactly the same messages as received by $ITM^i_{*}$ at the end of round 1 in~$\Pi_{*}$.
    Thus $ITM^i_{*j}$ after round 2 of $\Pi_j$ has the same view as $ITM^i_{*}$ after round 1 of $\Pi_{*}$. That is, $\text{view}_{\Pi_j}^{2}(ITM^i_{*j}) = \text{view}_{\Pi_{*}}^{1}(ITM^i_{*})$. 
    
    Let $\text{round}(\Pi_{*})$ be the maximum finite number of rounds in protocol $\Pi_{*}$.
    For each $r = 1, \dots, \text{round}(\Pi_{*})$, we compare the view of $ITM^i_{*j}$ after $2r$ rounds of $\Pi_j$ with the view of $ITM_{*}^i$ after $r$ rounds of $\Pi_{*}$. 
    By the inductive hypothesis, we have $$\text{view}_{\Pi_j}^{2r-2}(ITM^i_{*j}) = \text{view}_{\Pi_{*}}^{r-1}(ITM^i_{*}).$$
    Thus they again have the same state transition and out-going messages in the corresponding round $2r-1$ and round $r$.
    
    Due to the correctness and the unforgeability properties of the RMT protocol, and by the construction of the adversary $\mathcal{A}_{*}$, again by the end of round $2r$ of $\Pi_j$, $ITM^i_j$ accepts exactly the same messages as received by $ITM^i_{*}$ at the end of round $r$ in $\Pi_{*}$.
    Hence, $\text{view}_{\Pi_j}^{2r}(ITM^i_{*j}) = \text{view}_{\Pi_{*}}^r(ITM^i_{*})$ for all $r$.
    
    Accordingly, we conclude that the final output value of $ITM^i_j$ in $\Pi_j$ (which is that of $ITM^i_{*j}$) and the final output value of $ITM^i_{*}$ in $\Pi_{*}$ are identical. 
    As this holds for all honest parties in $\Pi_j$, we have that $\Pi_j$ inherits the same agreement and validity properties as $\Pi_{*}$. Thus $\Pi$ is secure under parallel composition.
\end{proof}

By applying our compiler to any existing BA protocol, we can construct a protocol that remains secure under parallel composition.
In particular, by transforming the protocol of Garay and Moses~\cite{garay1998fully}, which is secure in the unauthenticated, stateless setting whenever 
$n > 3t$, we obtain the following corollary:

\begin{corollary}
    There exists a protocol for stateless, unauthenticated BA that is secure under parallel composition of any number of executions against any adversary $\mathcal{A}(t, c)$, provided that the total number of parties $n$ satisfies $n > 3t$ and $n > 2c + 2t + 1$.
\end{corollary}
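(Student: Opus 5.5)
The corollary follows by instantiating Theorem~\ref{thm:parallel} with a concrete base protocol. I take $\Pi_{*}$ to be the Garay--Moses protocol~\cite{garay1998fully}. It is a deterministic, unauthenticated BA protocol for $n$ parties that tolerates $t$ Byzantine corruptions whenever $n > 3t$, and it terminates in a fixed finite number of rounds. That finite round count is what the induction in the proof of Theorem~\ref{thm:parallel} needs, since it runs up to $\text{round}(\Pi_{*})$. Under the hypothesis $n>3t$, the protocol $\Pi_{*}$ therefore meets the premise of Theorem~\ref{thm:parallel}. Under the hypothesis $n>2c+2t+1$, the RMT lemma gives correctness and unforgeability against $\mathcal{A}(t,c)$. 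Applying the compiler, that is, replacing every point-to-point send in $\Pi_{*}$ by Protocol~\ref{sim:RMT}, yields a protocol $\Pi$ secure under parallel composition of any number of executions against $\mathcal{A}(t,c)$.

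Two points need checking so that this instantiation is legitimate.

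\textbf{Statelessness and lack of authentication.} $\Pi_{*}$ uses no setup tape and no session identifiers. RMT adds neither: the tags $(deliver, msg, P_i, P_j)$ mention only party identities, which are common knowledge, and no instance index. So $\Pi$ is a stateless, unauthenticated protocol, as the corollary requires.

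\textbf{Computational power.} The simulation in the proof of Theorem~\ref{thm:parallel} is information-theoretic. The RMT guarantees are pure counting arguments, and the constructed adversary $\mathcal{A}_{*}$ only corrupts the same parties and injects messages. Since Garay--Moses is secure against unbounded adversaries, the corollary holds for unbounded $\mathcal{A}(t,c)$ as well.

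The step most likely to hide a subtlety is confirming that $\mathcal{A}_{*}$ is a legitimate single-instance adversary against $\Pi_{*}$ with at most $t$ corruptions. Channel attacks must translate into nothing in $\Pi_{*}$, and the only effect visible in $\Pi_j$ must be messages from corrupted parties accepted in even rounds. This is exactly what unforgeability and correctness of RMT provide, and it is already handled in the proof of Theorem~\ref{thm:parallel}. Finally, I would note that combining this corollary with Theorem~\ref{thm:impossibility} gives the tight threshold $n>\max\{3t,2c+2t+1\}$.
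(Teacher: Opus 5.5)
Your proposal is correct and matches the paper's argument: the paper obtains the corollary by applying the compiler of Theorem~\ref{thm:parallel} to the Garay--Moses protocol, which is unauthenticated, stateless, and secure for $n>3t$. Your additional checks make explicit what the paper leaves implicit, and they are consistent with it: the finite round count, the fact that RMT adds no session identifiers or setup, and the information-theoretic simulation that covers unbounded adversaries.
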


While the black-box compiler significantly enhances BA protocol's security, it introduces some performance overheads.
The round number of the compiled protocol $\Pi$ is doubled compared with the original protocol $\Pi_{*}$,
and the communication overhead per message of $\Pi_{*}$ 
is $O(n)$. In particular, for a message from $P_i$ to $P_j$, $P_i$ sends to $n-2$ parties in Round 0 of RMT, and up to $n-1$ parties may send the message to~$P_j$ in Round~1. 
These overheads are the trade-off for achieving security against adversarial message reordering in parallel compositions.
In Section \ref{sec:long} we discuss how to reduce the communication overhead for long messages.
\section{Black-box Compiler for Concurrent Composition under Asynchronous Network}
\label{sec:concurrentBA}

The preceding sections focused on parallel composition under a synchronous network model, where message delivery is guaranteed within a fixed time bound (i.e., a round).
In this section, we shift our attention to a more challenging setting, the asynchronous network model.
Here, messages can be delayed arbitrarily, and there is no global clock to coordinate the actions of different parties.
Moreover, we consider concurrent composition, where protocol instances may not start simultaneously or proceed at the same pace, making it significantly harder for a party to determine whether a received message indeed belongs to a particular instance or has been reordered.

To address these challenges, we propose a black-box compiler that transforms any standard asynchronous BA protocol into one that is secure under concurrent composition. Our main technical contribution lies in securely constructing the fundamental communication primitives ---Reliable Broadcast (RB) and Reliable Message Transmission (RMT) --- within this demanding environment.

\subsection{Reliable Broadcast Under Concurrent Composition}
RB~\cite{chang1984reliable} is a fundamental communication primitive essential for constructing many BA protocols~\cite{bracha1987asynchronous,berman1992bit,gao2021efficient} in asynchronous networks.
It is designed to ensure that a message sent by an honest sender is reliably delivered to all parties. 
Specifically, RB protocols guarantee the following properties:
\begin{itemize}
    \item \emph{Agreement}: If any honest party accepts a message $m$, then all honest parties eventually accept $m$.
    \item \emph{Validity}: If the sender is honest and broadcasts a message $m$, then all honest parties eventually accept $m$.
\end{itemize}

Based on the work of~\cite{bracha1987asynchronous} and utilizing the idea behind our RMT protocol in Section~\ref{sec:black-box}, we present Protocol~\ref{protocol:broadcast} that achieves reliable broadcast under concurrent executions against any adversary $\mathcal{A}(t,c)$, under the conditions $n > 3t$ and $n > 2c + 2t + 1$.
To the best of our knowledge, this is the first RB protocol to achieve concurrent security within the stateless model, a contribution that may be of independent interest.

The core idea is to avoid accepting messages based solely on direct receipt.
Instead, each party maintains a local view of all other parties' states.
This view is only updated after a majority (i.e., more than $\frac{n-1}{2}$) of confirming messages have been received.
Specifically, let the original sender in the RB protocol be $P_g = P_1$.
The local state of a party $P_j$ with $j=1,\dots,n$ at the view of some party $P_i$, denoted by $S^i_j$, can be in one of the following three forms:
\begin{itemize}
    \item $\{initial, \perp\}$: no value has been received by $P_j$ yet;
    \item $\{prepare, v\}$: value $v$ tentatively accepted by $P_j$;
    \item $\{commit, v\}$: value $v$ confirmed by $P_j$ and will not change anymore.
\end{itemize}
Respectively, the messages sent by a party $P_i$ during the protocol reflect its own state and take one of the following forms: 
\begin{itemize}
\item $(receive, v, P_g)$: $P_i$ has received value $v$ sent by $P_g$;
\item $(echo, v, P_i)$: $P_i$ has tentatively accepted $v$;
\item $(ready, v, P_i)$: $P_i$ has confirmed value $v$, that is, $P_i$ will not accept any message other than $v$.
\end{itemize}

\begin{algorithm}[t]
\caption{Reliable Broadcast}
\label{protocol:broadcast}
Say party $P_g$ wishes to broadcast value $v$.
Each party $P_i$ initializes its local view of each party $P_j$'s state as $S^i_j = \{initial, \perp\}$.
The protocol proceeds as follows:
    \begin{itemize}
        \item Step 0 (performed by $P_g$): send $(initial, v)$ to all parties including itself.
    \end{itemize}
    For each party $P_i$:
    \begin{itemize}
        \item Step 1: Upon receiving some message $(initial, v)$ from $P_g$, send $(receive, v, P_g)$ to all parties. 
        \item Step 2: Wait until the receipt of strictly more than $\frac{n-1}{2}$ $(receive, v, P_g)$ messages for the same value $v$ from different parties. 
        If local state $S^i_i = \{initial, \perp~\}$, set $S^i_i = \{prepare, v\}$ and send $(echo, v, P_i)$ to all parties.
        \item Step 3: Upon receiving some message $(echo, v, P_j)$ from party $P_j$, forward it to all parties.
        \item Step 4: Wait until the receipt of strictly more than $\frac{n-1}{2}$ $(echo, v, P_j)$ messages for the same $v$ and $P_j$ from different parties. 
        If $S^i_j = \{initial, \perp\}$, set $S^i_j = \{prepare, v\}$. 
        \item Step 5: Wait until strictly more than $\frac{2n-1}{3}$ parties $P_j$ have local state $S^i_j =\{prepare, v\}$ for the same $v$.
        If $S^i_i = \{initial, \perp\}$ or $S^i_i = \{prepare, *\}$, where~$*$ can be any value, set $S^i_i = \{commit, v\}$ and send $(ready, v, P_i)$ to all parties.
        \item Step 6: Upon receiving some message $(ready, v, P_j)$ from party $P_j$, forward it to all parties.
        \item Step 7: Wait until the receipt of strictly more than $\frac{n-1}{2}$ $(ready, v, P_j)$ messages for the same $v$ and $P_j$ from different parties. 
        If $S^i_j = \{initial, \perp\}$ or $S^i_j = \{prepare, *\}$, where $*$ can be any value, set $S^i_j = \{commit, v\}$. 
        \item Step 8: Wait until strictly more than $\frac{n-1}{3}$ parties $P_j$ have state $S^i_j = \{commit, v\}$ for the same~$v$.
        If $S^i_i = \{initial, \perp\}$ or $S^i_i = \{prepare, *\}$, where $*$ can be any value, set $S^i_i = \{commit, v\}$ and send $(ready, v, P_i)$ to all parties.
        \item Step 9: Wait until strictly more than $\frac{2(n-1)}{3}$ parties $P_j$ have state $S^i_j = \{commit, v\}$ for the same~$v$. Accept~$v$.
    \end{itemize} 
\end{algorithm}

We now prove that Protocol~\ref{protocol:broadcast} satisfies the properties of reliable broadcast under concurrent composition.
In particular, we consider $m$ instances of Protocol~\ref{protocol:broadcast}
concurrently executed 
against adversary $\mathcal{A}(t, c)$, under conditions $n > 3t$ and $n > 2c + 2t + 1$. 

\begin{lemma}
\label{lemma:message-to-state}
In any instance, for any two honest parties $P_i$ and $P_j$, $P_i$ eventually sets its local state $S^i_j = \{commit, v\}$ if and only if $P_j$ broadcasts the message $(ready, v, P_j)$ during the execution of this instance.
\end{lemma}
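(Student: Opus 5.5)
We prove the two directions separately. Both rest on one counting fact. Fix the instance under consideration and the message $(ready, v, P_j)$ with $P_j$ honest. Every copy of this exact message that reaches honest $P_i$ in this instance arrives along some path $P_j \to P_k \to P_i$. It is either sent directly by $P_j$ or forwarded by $P_k$ in Step 6. The key claim is that, because $P_j$ is honest, only corrupted relays or attacked channels can inject a copy of $(ready, v, P_j)$ into this instance when $P_j$ did not send it here.

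\emph{The ``if'' direction.} Suppose $P_j$ sends $(ready, v, P_j)$ to all parties in this instance. Each honest $P_k$ that receives it forwards it to everyone in Step 6, and $P_j$'s direct copy also counts toward the threshold. A copy on the path $P_j \to P_k \to P_i$ can be lost only if $P_k$ is corrupted, or if $C_{jk}$ or $C_{ki}$ is attacked. Each such event kills at most one of the $n-1$ relay paths; for the direct path, $C_{ji}$ is the one at risk. So at least $n-1-t-c > \frac{n-1}{2}$ copies from distinct parties eventually arrive, since the network is asynchronous but unattacked channels between honest parties are reliable. This uses $n > 2c+2t+1$ exactly as in the correctness proof of Protocol~\ref{sim:RMT}. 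Step 7 then fires. We must check that $S^i_j$ ends as $\{commit, v\}$ and not $\{commit, v'\}$. Step 7 only updates from initial or prepare, so the issue is whether $S^i_j$ could already be $\{commit, v'\}$ for some $v' \neq v$. That would contradict the other direction applied to $v'$, together with the fact that an honest party sends at most one ready value. The latter holds because Steps 5 and 8 fire only when $S^i_i$ is not yet commit.

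\emph{The ``only if'' direction.} Suppose $P_i$ sets $S^i_j = \{commit, v\}$. The only way this happens for $j \neq i$ is Step 7, so $P_i$ received strictly more than $\frac{n-1}{2}$ copies of $(ready, v, P_j)$ from distinct parties in this instance. (For $j=i$ the statement is immediate from Steps 5 and 8.) Assume for contradiction that $P_j$ never sent $(ready, v, P_j)$ in this instance. Classify each sender $P_k$ of a copy. If $P_k$ is corrupted, that accounts for at most $t$ senders. If $P_k$ is honest, then either $P_k$ really forwarded a copy it received in this instance, or the copy was misrouted on the attacked channel $C_{ki}$. In the first case we recurse: $P_k$'s copy came from $P_j$, from a corrupted party, or over an attacked channel.

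The main obstacle is making this charging argument rigorous, i.e., showing that distinct senders $P_k$ can be charged injectively to distinct corruptions or attacked channels. The tempting bound of $t+c$ bad contributions is too coarse: a copy wrongly injected over a single attacked channel $C_{jk}$ reaches honest $P_k$, and $P_k$ forwards it to everyone, so one attack could seem to create a copy at every destination. The resolution should be that $P_i$ counts copies per sender, not per path. Each honest $P_k$ contributes at most one counted copy, and that copy is bad only if $P_k$ itself was fooled (via $C_{jk}$, or via some corrupted $P_\ell$) or $C_{ki}$ was attacked.

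The delicate case is a corrupted $P_\ell$ sending $(ready, v, P_j)$ to many honest parties. These honest parties then forward it, and it seems $P_i$ could collect copies from all of them. This suggests the lemma needs either (a) honest parties to forward a $(ready, v, P_j)$ only when it arrives directly from $P_j$, which is how Step 6 should be read ("from party $P_j$"), or (b) reasoning about the direct channel only. I would adopt reading (a). Under it, each honest forwarder $P_k$ is bad only if $C_{jk}$ or $C_{ki}$ is attacked, so each attacked channel spoils at most one sender and each corruption at most one. The total is then at most $t+c < \frac{n-1}{2}$, a contradiction.
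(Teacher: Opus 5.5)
Your proposal is correct and takes the same approach as the paper: in both directions you compare the at most $t+c$ copies of $(ready, v, P_j)$ that corruptions and attacked channels can suppress or inject against the threshold $\frac{n-1}{2}$. Your reading that honest parties forward only copies arriving directly from $P_j$ is the literal wording of Step~6, not an extra assumption. Your per-sender charging to $C_{jk}$ or $C_{ki}$, and your check (via the ``only if'' direction and the one-ready-value fact) that $S^i_j$ cannot already be $\{commit, v'\}$, spell out details the paper's short count leaves implicit.
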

\begin{proof}
    If an honest party $P_j$ broadcasts $(ready, v, P_j)$, then in the execution of any other honest party $P_i$, this message will be forwarded by every party $P_k$ unless:
    \begin{itemize}
        \item $P_k$ is corrupted by the adversary, or
        \item the communication channel $C_{jk}$ or $C_{ki}$ is under attack.
    \end{itemize}
    Since each adversary action can block at most one message, the number of suppressed forwards is at most $t + c$. 
    Therefore, $P_i$ eventually receives more than $\frac{n-1}{2}$ such messages and sets its local state for $P_j$ to $S^i_j = \{commit, v\}$.

    Conversely, if $P_j$ did not send the message in that instance, then the adversary can inject at most $t + c$ copies $(ready, v, P_j)$.
    Since $t + c < \frac{n-1}{2}$, an honest party will not receive enough messages to satisfy the threshold, and thus will not set $P_j$'s local state to $S^i_j =\{commit, v\}$.
\end{proof}

\begin{lemma}
\label{lemma:prepare-to-echo}
In any instance, for any two honest parties $P_i$ and $P_j$, $P_i$ eventually sets its local state $S^i_j = \{prepare, v\}$ only if $P_j$ broadcasts the message $(echo, v, P_j)$ during the execution of this instance.
\end{lemma}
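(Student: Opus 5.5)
The plan is to argue by contrapositive, in the same way as the converse direction of Lemma~\ref{lemma:message-to-state}. Fix an instance and two honest parties $P_i$ and $P_j$. Suppose $P_j$ never broadcasts $(echo, v, P_j)$ in this instance. We show that $P_i$ never sets $S^i_j = \{prepare, v\}$.

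First, inspect Protocol~\ref{protocol:broadcast}. For $j \neq i$, the only step that writes $S^i_j = \{prepare, v\}$ is Step~4. Step~5 and Step~8 only modify $S^i_i$, and Step~7 only writes commit states. So it suffices to show that $P_i$ never receives strictly more than $\frac{n-1}{2}$ copies of $(echo, v, P_j)$, attributed to different senders, on its input tapes for this instance.

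For the case $j = i$, the only step that sets $S^i_i = \{prepare, v\}$ is Step~2. That step immediately has $P_i$ send $(echo, v, P_i)$, so the claim holds trivially.

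Next, count the possible sources of such copies, indexed by the sender $P_k$ from which $P_i$ receives the message on its tape for this instance. Consider an honest $P_k$ whose channel $C_{ki}$ is unattacked, and take the case $k \neq j$. The message arriving on that tape is genuinely what $P_k$ sent in this instance. An honest $P_k$ forwards $(echo, v, P_j)$ in this instance only if it received it in this instance, from $P_j$ itself per Step~3.

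Here an honest $P_k$ may forward any echo it received, including ones that arrived over an attacked channel $C_{jk}$. So the precise claim is this: an honest $k$ forwards a bogus copy only if $C_{jk}$ is attacked, or if the copy came from corruption. Honest parties forward only what appears on their tape from $P_j$, since Step~3 is ``from party $P_j$''. A bogus copy at honest $P_k$ on the $P_j$-tape requires $C_{jk}$ to be attacked, because $P_j$ is honest and never sent it in this instance.

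For $k = j$, an honest $P_j$ with $C_{ji}$ unattacked delivers nothing bogus.

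Therefore each bogus copy received by $P_i$ is charged to one of three causes: a corrupted sender $P_k$, an attacked channel $C_{ki}$, or an attacked channel $C_{jk}$. Each corrupted party accounts for at most one distinct sender. Each attacked channel touches at most one relevant $k$ in this charging, since $C_{ki}$ determines $k$ and $C_{jk}$ determines $k$. Hence at most $t + c$ distinct senders deliver $(echo, v, P_j)$ to $P_i$. Since $n > 2c + 2t + 1$ gives $t + c < \frac{n-1}{2}$, the threshold in Step~4 is never met, and the claim follows.

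The main obstacle is this charging argument. Honest forwarders may relay reordered messages from other instances, so one must check that each such relay is still attributable to a single attacked channel adjacent to $P_j$. Concurrency also lets the adversary schedule arbitrarily, so the bound must be stated over the whole execution rather than per round. Counting distinct senders rather than messages handles both issues.
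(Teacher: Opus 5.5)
Your proposal is correct and is essentially the paper's argument: the paper simply asserts that if $P_j$ never sent $(echo, v, P_j)$ in this instance, the adversary can inject at most $t+c$ copies, which stays below the Step~4 threshold since $t+c<\frac{n-1}{2}$; your charging of each bogus sender to a corrupted party, an attacked $C_{ki}$, or an attacked $C_{jk}$ is a more careful justification of that same bound. One small nit concerns your $j=i$ aside: Step~4 could in principle also write $S^i_i$, so Step~2 is not the only route, but the same count (with $C_{ik}=C_{ki}$) rules that route out, and the lemma concerns two distinct parties anyway.
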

\begin{proof}
    If $P_j$ did not send the message $(echo, v, P_j)$  in that instance, then the adversary can inject at most $t + c$ copies $(echo, v, P_j)$.
    Since $t + c < \frac{n-1}{2}$, an honest party will not receive enough messages to satisfy the threshold, and thus will not set $P_j$'s local state to $S^i_j =\{prepare, v\}$.
\end{proof}

\begin{lemma}
\label{lemma:ready-equal}
    In any instance, for any two honest parties $P_i$ and $P_j$ who respectively send messages $(ready, v, P_i)$ and $(ready, u, P_j)$ during the execution of this instance, we have $v = u$.
\end{lemma}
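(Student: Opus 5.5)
The plan is to show that in a fixed instance, every honest $(ready,\cdot,\cdot)$ message carries the value of some honest Step-5 ready. Then it suffices to show that any two honest Step-5 readies agree, which I will do by a quorum-intersection argument on echoes.

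\emph{Preliminary observations.} An honest $P_i$ sends $(ready, \cdot, P_i)$ only in Step 5 or Step 8. Both steps require $S^i_i$ to be $\{initial,\perp\}$ or $\{prepare,*\}$, and both then set $S^i_i$ to commit. So each honest party sends at most one ready message per instance. Similarly, Step 2 fires only from $S^i_i=\{initial,\perp\}$, so each honest party sends at most one echo $(echo, v, P_i)$, and only for a single value $v$. Finally, $S^i_j$ for $j\neq i$ leaves the initial state only through Step 4 (to prepare) or Step 7 (to commit). By Lemma~\ref{lemma:prepare-to-echo}, if an honest $P_i$ has $S^i_j=\{prepare,v\}$ for an honest $P_j$, then $P_j$ broadcast $(echo,v,P_j)$. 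The same holds for $j=i$ directly from Step 2.

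\emph{Reduction to Step 5.} Order the honest ready messages of the instance by the time they are sent. Suppose an honest $P_i$ sends $(ready,v,P_i)$ in Step 8. Then strictly more than $\frac{n-1}{3}$ parties $P_j$ have $S^i_j=\{commit,v\}$. Since $n>3t$ gives $\frac{n-1}{3}\ge t$, this set has more than $t$ members, so it contains some honest $P_j\neq P_i$. By Lemma~\ref{lemma:message-to-state}, that $P_j$ had already sent $(ready,v,P_j)$. By induction on this order, the value of every honest ready equals the value of some honest ready sent in Step 5.

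\emph{Step-5 agreement (main step).} Suppose honest $P_i$ and $P_j$ send readies for $v$ and $u$ in Step 5. Then there are sets $A$ and $B$ with $|A|,|B|>\frac{2n-1}{3}$ such that $S^i_k=\{prepare,v\}$ for $k\in A$ and $S^j_k=\{prepare,u\}$ for $k\in B$. Since $3|A|$ is an integer greater than $2n-1$, we get $3|A|\ge 2n$, and likewise $3|B|\ge 2n$. Hence
\[
|A\cap B|\ \ge\ |A|+|B|-n\ \ge\ \tfrac{n}{3}\ >\ t,
\]
so some honest $P_k$ lies in $A\cap B$. By the preliminary observations, $P_k$ broadcast both $(echo,v,P_k)$ and $(echo,u,P_k)$. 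An honest party echoes only once, so $v=u$.

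Combining the two parts, all honest readies in the instance carry the same value. The main obstacle I expect is bookkeeping rather than counting. It must be checked that local-view states are only ever justified by messages of the \emph{same} instance, which is exactly what Lemmas~\ref{lemma:message-to-state} and~\ref{lemma:prepare-to-echo} provide despite reordering. It must also be checked that the integer rounding in the thresholds genuinely yields an intersection of size greater than $t$.
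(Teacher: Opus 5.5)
Your proof is correct and takes essentially the same route as the paper: intersect the two Step-5 quorums of size greater than $\frac{2n-1}{3}$ to find an honest $P_k$ recorded as prepared for both $v$ and $u$, then apply Lemma~\ref{lemma:prepare-to-echo} and the fact that an honest party echoes only once. The paper takes the \emph{first} honest senders of the two ready values and silently assumes they fired in Step 5; your Step-8 reduction via Lemma~\ref{lemma:message-to-state}, together with your explicit integer rounding, supplies the justification the paper leaves out.
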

\begin{proof}
    Assume for contradiction.
    Let $P_x$ and $P_y$ be the first honest parties to send $(ready, v, P_x)$ and $(ready, u, P_y)$, respectively. 
    Party $P_x$ must have observed strictly more than $\frac{2n-1}{3}$ party $P_j$ with local state $S^x_j=\{prepare, v\}$.
    Party $P_y$ must have observed strictly more than $\frac{2n-1}{3}$ party $P_j$ with local state $S^y_j = \{prepare, u\}$.
    Because $t < \frac{n}{3}$, this implies an overlap where an honest party $P_k$ has $\{prepare, v\}$ in the view of $P_x$ and $\{prepare, u\}$ in the view of $P_y$.

    Therefore, by Lemma~\ref{lemma:prepare-to-echo}, it implies that $P_k$ sent both $(echo, v)$ and $(echo, u)$ messages.
    However, an honest party sends only one type of \emph{echo} message during a reliable broadcast protocol.
    Therefore, $v = u$.
\end{proof}

\begin{lemma}
\label{lemma:accept-equal}
    In any instance, for any two honest parties $P_i$ and $P_j$ who accept values $v$ and $u$ respectively, we have $v = u$.
\end{lemma}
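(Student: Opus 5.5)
The plan is to reduce acceptance to the existence of an honest \emph{ready} sender, and then apply Lemma~\ref{lemma:ready-equal}. Suppose honest $P_i$ accepts $v$ and honest $P_j$ accepts $u$. By Step~9, $P_i$ has observed strictly more than $\frac{2(n-1)}{3}$ parties $P_k$ with $S^i_k=\{commit, v\}$. At most $t$ of these are corrupted, so it suffices to show that at least one is honest. Since $t<\frac{n}{3}$, i.e.\ $t\le \frac{n-1}{3}$, we get strictly more than $\frac{2(n-1)}{3}-t \ge \frac{n-1}{3}\ge 0$ honest ones, so at least one honest $P_k$ has $S^i_k=\{commit, v\}$. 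The same argument gives an honest $P_{k'}$ with $S^j_{k'}=\{commit, u\}$.

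There is one subtlety when $k=i$ or $k'=j$. In that case the state $S^i_i$ was set to commit by $P_i$ itself in Step~5 or Step~8, and in both steps $P_i$ simultaneously sends $(ready, v, P_i)$. So $P_i$ has sent a ready message for $v$ directly, and no appeal to message counting is needed.

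For $k\neq i$, I would invoke the ``only if'' direction of Lemma~\ref{lemma:message-to-state}. It says that $S^i_k=\{commit,v\}$ with $P_k$ honest implies that $P_k$ broadcast $(ready, v, P_k)$ in this same instance. The point is that the adversary can inject at most $t+c<\frac{n-1}{2}$ copies, whether through corrupted parties or through channel reorders from other concurrent instances. One caveat needs checking: Step~7 only overwrites a state that is still initial or prepare, so a commit state is never changed to a different value. The commit value recorded is therefore exactly the one set on first crossing the Step~7 threshold, and the lemma applies to it. By symmetry, honest $P_{k'}$ sent $(ready, u, P_{k'})$ in the same instance.

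Finally, Lemma~\ref{lemma:ready-equal} applied to the honest parties $P_k$ and $P_{k'}$ gives $v=u$. If $k=k'$, the same conclusion follows because an honest party sends ready for only one value: once $S^k_k$ is commit, neither Step~5 nor Step~8 fires again. The only step needing care is the counting showing that at least one honest party lies in the accepted commit set, together with the self-state case. Everything else is a direct chaining of the three preceding lemmas, so I expect no real obstacle.
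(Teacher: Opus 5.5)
Your proposal is correct and follows the paper's proof. In each acceptor's commit set you find an honest party (the paper claims the stronger count $\lfloor \frac{n-1}{3}\rfloor+1$; you only need one), use the ``only if'' direction of Lemma~\ref{lemma:message-to-state} to obtain honest \emph{ready} senders for $v$ and $u$, and conclude with Lemma~\ref{lemma:ready-equal}. Your separate handling of $k=i$ and $k=k'$ just makes explicit degenerate cases the paper leaves implicit. For $k=i$, Step~7 with $j=i$ is not formally excluded by the protocol text, but the same $t+c<\frac{n-1}{2}$ count shows it cannot set $S^i_i$ before $P_i$ itself sends \emph{ready}, so your claim that the commit comes from Step~5 or Step~8 stands.
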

\begin{proof}
    If $P_i$ accepts $v$, then it must have observed at least $\lfloor \frac{2(n-1)}{3}\rfloor+1$ parties $P_k$ with the state $S^i_k = \{commit, v\}$ in its local view, which includes at least $\lfloor \frac{n-1}{3}\rfloor+1$ honest parties. 
    Similarly, $P_j$ must have observed at least $\lfloor \frac{2(n-1)}{3}\rfloor+1$ parties $P_k$ with state $S^j_k = \{commit, u\}$ in its local view, which includes at least $\lfloor \frac{n-1}{3}\rfloor+1$ honest parties.
    By Lemma~\ref{lemma:message-to-state}, there exists at least one honest party $P_x$ that have sent $(ready, v, P_x)$, and one honest party $P_y$ that have sent $(ready, u, P_y)$.
    Applying Lemma~\ref{lemma:ready-equal}, it follows that $v = u$.
\end{proof}

\begin{lemma}
\label{lemma:honest-accept}
    In any instance, if an honest party $P_i$ accepts a value $v$, then every other honest party eventually accepts $v$ in that instance.
\end{lemma}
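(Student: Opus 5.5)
The plan is the standard Bracha amplification argument, with Lemma~\ref{lemma:message-to-state} used to turn $(ready,\cdot)$ messages into local $commit$ states.

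\textbf{Step 1: honest commits already present.} Suppose honest $P_i$ accepts $v$. Then $P_i$ has more than $\frac{2(n-1)}{3}$ parties $P_k$ with $S^i_k=\{commit,v\}$. Since $t<\frac{n}{3}$, we have $t\le \lfloor\frac{n-1}{3}\rfloor$, so at least $\lfloor\frac{n-1}{3}\rfloor+1$ of these $P_k$ are honest. By the ``only if'' direction of Lemma~\ref{lemma:message-to-state}, each such honest $P_k$ actually broadcast $(ready,v,P_k)$ in this instance. By the ``if'' direction, every honest $P_\ell$ eventually sets $S^\ell_k=\{commit,v\}$ for all of them.

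\textbf{Step 2: every honest party commits.} Before Step 2 can be used, I need that $P_\ell$ cannot have already locked $S^\ell_k$ to some other committed value. Step 7 only changes $initial$ or $prepare$ states, so an earlier commit could block the update. I would rule this out as follows. By Lemma~\ref{lemma:ready-equal}, an honest $P_k$ sends only $ready$ messages with the single value $v$. By Lemma~\ref{lemma:message-to-state}, no honest $P_\ell$ ever sets $S^\ell_k=\{commit,u\}$ for $u\neq v$, because that would require $P_k$ to have sent $(ready,u,P_k)$. Hence each such $S^\ell_k$ eventually equals $\{commit,v\}$.

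So every honest $P_\ell$ eventually sees more than $\frac{n-1}{3}$ parties committed to $v$, and the Step 8 condition fires. Either $P_\ell$ has already sent a ready, which by Lemma~\ref{lemma:ready-equal} must carry $v$, or it now sets $S^\ell_\ell=\{commit,v\}$ and broadcasts $(ready,v,P_\ell)$. The one subtlety is that the Step 8 trigger could fire for a different value $u$ first. I handle this by noting that any ready sent by an honest party carries $v$, by Lemma~\ref{lemma:ready-equal} (it applies to honest readies from both Step 5 and Step 8; the first honest ready comes from Step 5). Thus every honest party eventually broadcasts $(ready,v,\cdot)$.

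\textbf{Step 3: every honest party accepts.} Applying Lemma~\ref{lemma:message-to-state} again, each honest $P_\ell$ eventually has $S^\ell_k=\{commit,v\}$ for all $n-t$ honest $P_k$. Since $n-t> n-\frac{n}{3}=\frac{2n}{3}>\frac{2(n-1)}{3}$, the Step 9 threshold is met and $P_\ell$ accepts $v$. It accepts $v$ rather than some other value because of Lemma~\ref{lemma:accept-equal}.

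\textbf{Main obstacle.} The delicate point is Step 2: showing that the Step 8 amplification yields a ready for $v$ and cannot be preempted by commits to other values. This needs Lemma~\ref{lemma:ready-equal} to cover readies sent via Step 8 as well. That lemma's proof considers the first honest readies, which necessarily arise from Step 5. To cover Step 8 readies, I would add a short induction on the order in which honest readies are sent: a Step 8 ready needs more than $\frac{n-1}{3}$ committed parties, hence at least one honest committed party, whose ready by induction carries $v$.
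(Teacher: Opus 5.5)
Your proposal is correct and follows the same route as the paper: the honest parties committed in $P_i$'s view are propagated to every honest party via Lemma~\ref{lemma:message-to-state}, Step 8 then makes every honest party send $(ready,v,\cdot)$, and the resulting $n-t>\frac{2(n-1)}{3}$ honest commits trigger Step 9. The paper leaves two points implicit: that commit states cannot be pre-empted by a different value, and that Lemma~\ref{lemma:ready-equal} also covers Step-8 readies because the first honest ready for any value must come from Step 5. Your handling of both is sound.
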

\begin{proof}
    Suppose $P_i$ accepts value $v$.
    Then its local view must include at least $\lfloor \frac{2(n-1)}{3}\rfloor+1$ parties $P_x$ with the state $S^i_x = \{commit, v\}$, including at least $\lfloor \frac{n-1}{3}\rfloor+1$ honest parties.
    By Lemma~\ref{lemma:message-to-state}, every honest party $P_k$ in these $\lfloor \frac{n-1}{3}\rfloor+1$ parties must have sent $(ready, v, P_k)$ in that instance.
    Also by Lemma~\ref{lemma:message-to-state}, eventually, all honest parties $P_j$ will receive these messages and update their local views, setting $P_k$'s state to $S^j_k = \{commit, v\}$.
    According to step 8 of the protocol, upon observing $\lfloor \frac{n-1}{3}\rfloor+1$ parties in state $\{ commit, v \}$, every honest party $P_j$ updates its own state to $S^j_j=\{commit, v\}$, and send $(ready, v, P_j)$ to all parties.
    Again, by Lemma~\ref{lemma:message-to-state}, all honest parties $P_y$ will receive these messages and update their local views, setting $P_j$'s state $S^y_j=\{commit, v\}$ eventually.
    Since $n > 3t$, the number of honest parties exceeds $\lfloor \frac{2(n-1)}{3}\rfloor+1$.
    Thus, every honest party will eventually have at least $\lfloor \frac{2(n-1)}{3}\rfloor+1$ $\{commit, v\}$ states in its local view and will accept $v$.
\end{proof}

\begin{lemma}
\label{lemma:broadcast-accept}
    In any instance, if an honest party $P_g$ broadcasts a value $v$, then all honest parties eventually accept $v$ in that instance.
\end{lemma}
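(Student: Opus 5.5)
We follow the honest sender's value $v$ through the steps of Protocol~\ref{protocol:broadcast} in order. At each step we show that all honest parties eventually reach the next state. Once every honest party has state $\{commit, v\}$ in every honest view, Step~9 fires, and Lemma~\ref{lemma:accept-equal} rules out acceptance of any other value. Throughout we use the counting fact from the RMT analysis: a message sent by an honest party and forwarded by everyone loses at most $t+c$ copies to corruptions and attacked channels. So at least $n-1-t-c>\frac{n-1}{2}$ copies arrive. The adversary can also inject at most $t+c<\frac{n-1}{2}$ forged copies, as in Lemmas~\ref{lemma:message-to-state} and~\ref{lemma:prepare-to-echo}.

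\textbf{Steps 0--2.} Since $P_g$ is honest, every honest $P_i$ on an unattacked channel $C_{gi}$ receives $(initial, v)$ and sends $(receive, v, P_g)$. At most $t+c$ parties fail to send this correctly to a given honest receiver: corrupted parties, or parties cut off from $P_g$ or from the receiver by attacked channels. I would make this precise by charging each missing copy to a distinct corruption or channel, as in the RMT correctness proof. Hence every honest party collects more than $\frac{n-1}{2}$ copies of $(receive, v, P_g)$. By the symmetric injection bound, no other value $v'$ gets more than $t+c<\frac{n-1}{2}$ copies. So every honest $P_i$ sets $S^i_i=\{prepare, v\}$ and broadcasts $(echo, v, P_i)$, and never echoes anything else.

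\textbf{Steps 3--5.} By the forwarding count, every honest $P_j$ eventually sets $S^j_i=\{prepare, v\}$ for each honest $P_i$. Lemma~\ref{lemma:prepare-to-echo} gives the converse, so no honest $P_i$ is ever seen in prepare with another value. There are $n-t$ honest parties, and $n>3t$ gives $n-t>\frac{2n-1}{3}$ in integer arithmetic; I would check this carefully since the thresholds are fractional. Hence Step~5 fires for $v$ at every honest party, unless it already committed earlier through Step~8. In that case the commit is to some $u$ with an honest $(ready,u)$ sender, and Lemma~\ref{lemma:ready-equal} ties $u$ to the first honest ready. That first honest ready must come from Step~5, and by the echo argument Step~5 can only see $v$ with enough prepares, so $u=v$. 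Either way, every honest party sends $(ready, v, \cdot)$.

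\textbf{Steps 6--9.} By Lemma~\ref{lemma:message-to-state}, every honest $P_j$ eventually sets $S^j_i=\{commit, v\}$ for all honest $P_i$. This gives $n-t>\frac{2(n-1)}{3}$ commits for $v$, so Step~9 fires and $P_j$ accepts $v$.

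The main obstacle is the Steps~0--2 count. The failure sources there are a corrupted forwarder, an attacked $C_{gk}$ that redirects the initial message to another instance, and an attacked $C_{ki}$. In concurrent executions, reordering could also deliver $(initial, v')$ from another instance to a party. I would argue that such a party sends a wrong $(receive)$ only because its channel to $P_g$ is attacked. Each wrong or missing copy at $P_i$ is therefore charged to either $P_k$'s corruption, $C_{gk}$, or $C_{ki}$, for at most $t+c$ in total.
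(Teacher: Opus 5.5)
Your route is the paper's: you push $v$ through the receive, echo, prepare, ready and commit stages, using the $t+c<\frac{n-1}{2}$ forwarding count and finishing with $n-t>\frac{2(n-1)}{3}$. You are more careful than the paper in two places: ruling out honest echoes of a value other than $v$, and handling a Step~8 commit that preempts Step~5. However, one step fails for the protocol as literally written, and the paper's proof asserts the same step. It is the claim that every honest $P_j$ eventually sets $S^j_i=\{prepare,v\}$ for each honest $P_i$. Step~4 only upgrades an entry that is still $\{initial,\perp\}$, while Step~7 can move it from $\{initial,\perp\}$ directly to $\{commit,v\}$. This happens if more than $\frac{n-1}{2}$ copies of $(ready,v,P_i)$ reach $P_j$ before more than $\frac{n-1}{2}$ copies of $(echo,v,P_i)$, and an asynchronous scheduler can arrange exactly that. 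Step~5 counts only entries equal to $\{prepare,v\}$, so your $n-t$ prepare entries need never coexist. Your fallback through Step~8 needs more than $\frac{n-1}{3}$ commit entries, which need not appear either.

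Concretely, take $n=4$, $t=1$, $c=0$, with honest sender $P_1$ and a silent corrupted $P_4$.
\begin{itemize}
\item Deliver $P_2$'s and $P_3$'s echoes, and their mutual forwards, to $P_1$, so $P_1$ reaches Step~5 and sends $(ready,v,P_1)$.
\item Hold back every copy of $(echo,v,P_1)$ addressed to $P_2$ or $P_3$.
\item Deliver $(ready,v,P_1)$ to $P_2$ and $P_3$, and let each forward it to the other. Now $S^2_1=S^3_1=\{commit,v\}$.
\item Deliver all remaining messages.
\end{itemize}
In the end $P_2$ sees one commit entry, two prepare entries and $S^2_4=\{initial,\perp\}$, and $P_3$ symmetrically. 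Step~5 needs $3$ prepare entries and Step~8 needs $2$ commit entries, so neither ever fires and nobody reaches Step~9. To close the gap, read Step~5 as counting parties in state $\{prepare,v\}$ or $\{commit,v\}$, which the paper presumably intends. Your own ingredients then finish the argument. You already show that every honest ready carries $v$, so each honest $P_i$ eventually appears in every honest view as $\{prepare,v\}$ or $\{commit,v\}$, which gives $n-t>\frac{2n-1}{3}$ qualifying entries. Your first-honest-ready argument is unaffected, because no honest party can appear as committed in an honest view before it has itself sent ready.
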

\begin{proof}
    Since $P_g$ is honest, it sends $(initial, v, P_g)$ to all parties.
    At least $(n-t-c)$ honest parties receive this message and forward $(receive, v, P_g)$ to all other parties.
    Each honest party $P_i$ eventually receives more than $\frac{n-1}{2}$ such $(receive, v, P_g)$ messages.
    It then sets its local state $S^i_i = \{prepare, v\}$ and sends $(echo, v, P_i)$ to all parties.
    As a result, each honest party $P_i$ receives more than $\frac{n-1}{2}$ \emph{echo} messages for value $v$ and each honest party $P_j$, and thus sets the state $S^i_j = \{prepare, v\}$.
    This ensures that every honest party $P_i$ observes enough \emph{prepare} state in its local view to satisfy the condition for committing.
    Accordingly, each honest party updates its own state $S^i_i=\{commit, v\}$ and sends $(ready, v, P_i)$ to all parties.
    
    In Step 7 of the protocol, each honest party $P_i$ receives more than $\frac{n-1}{2}$ \emph{ready} messages for value $v$ and each honest party $P_j$, causing each party to update the state of $P_j$ to $S^i_j=\{commit, v\}$.
    Therefore, since $n>3t$, every honest party $P_i$ eventually has at least $\lfloor \frac{2(n-1)}{3}\rfloor+1$ $\{commit, v\}$ states in local view and accepts $v$.
\end{proof}

\begin{theorem}\label{thm:concurrent-RB}
    Protocol~\ref{protocol:broadcast} realizes RB against any adversary $\mathcal{A}(t, c)$ under concurrent composition in an asynchronous network, provided that $n > 3t$ and $n > 2t + 2c + 1$; and for any message $M$, it has communication complexity $O(n^3 |M| + n^3 \log n)$.
\end{theorem}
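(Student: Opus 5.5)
The theorem has two parts: the RB properties under concurrent composition, and the communication bound. For the properties, I would assemble the lemmas already proved for a single fixed instance. The point to stress first is why they hold for an arbitrary instance despite concurrency and asynchrony. Lemmas~\ref{lemma:message-to-state} and~\ref{lemma:prepare-to-echo} only count, per instance, the copies of a message $(echo, v, P_j)$ or $(ready, v, P_j)$ that reach an honest party through distinct senders. A reordering on an attacked channel $C_{xy}$ can move a copy between instances, but it still travels over one channel. So the copies that reach $P_i$'s input tapes for a fixed instance, without originating there from the honest $P_j$, come either from corrupted forwarders ($\le t$) or through attacked channels ($\le c$). Since the thresholds count distinct senders, each such source adds at most one copy. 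Independent start times and speeds do not change this counting, because the majority condition is about \emph{which} senders vouched, not \emph{when}. Asynchrony only affects liveness, and liveness follows because every honest forwarder on unattacked channels is eventually delivered.

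With that in hand, agreement of RB follows from Lemma~\ref{lemma:honest-accept} together with Lemma~\ref{lemma:accept-equal}: if one honest party accepts $v$, all honest parties eventually accept, and they accept the same value. Validity is Lemma~\ref{lemma:broadcast-accept}. I would also check that an honest party accepts at most one value per instance. Lemma~\ref{lemma:accept-equal} gives this across parties; within one party, Step~9 could fire for two values, but both would trace back to honest \emph{ready} messages, so Lemma~\ref{lemma:ready-equal} rules this out. Both $n>3t$ and $n>2t+2c+1$ are used: the second gives $t+c<\frac{n-1}{2}$ in the state-update lemmas, and the first gives the quorum intersections.

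For complexity, I would count messages per instance. Step~0 costs $n$ messages. Step~1 costs $n^2$. Steps~2 and~5 each cost at most $n$ broadcasts per party, so $O(n^2)$ messages. The forwarding Steps~3 and~6 dominate: each honest party forwards each distinct $(echo,\cdot,P_j)$ and $(ready,\cdot,P_j)$ it receives to all $n$ parties. An honest party forwards only the first echo and first ready per origin $P_j$, or at most a constant number per origin. This yields $O(n)\cdot O(n)\cdot O(n)=O(n^3)$ messages. Each message carries a value of size $|M|$ plus an identity of $O(\log n)$ bits, giving $O(n^3|M|+n^3\log n)$.

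The main obstacle is this bounding of forwards. As written, Steps~3 and~6 say to forward ``some message'', and a corrupted $P_j$ or relay could inject many distinct values, blowing up the count. I would make explicit that each party forwards at most one echo and one ready per claimed origin $P_j$ in an instance. Then I would argue that this restriction does not break Lemmas~\ref{lemma:message-to-state} and~\ref{lemma:prepare-to-echo}: for honest $P_j$, honest relays on unattacked channels first receive the genuine message directly from $P_j$, since any other copy would require a corrupted or attacked link on that relay's path. Making this first-receipt argument airtight under adversarial scheduling is the step I expect to need the most care.
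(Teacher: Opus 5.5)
Your proposal is correct and is essentially the paper's proof. Validity comes from Lemma~\ref{lemma:broadcast-accept}, agreement from Lemma~\ref{lemma:honest-accept} (your appeal to Lemma~\ref{lemma:accept-equal} just makes the same-value part explicit), and complexity from $O(n^3)$ messages of size $O(|M|+\log n)$ each. Your one-forward-per-origin rule makes explicit what the paper's count tacitly assumes. The first-receipt step you worry about is immediate: Steps~3 and~6 relay only what arrives directly on $C_{jk}$, and for honest $P_j$ and unattacked $C_{jk}$ that channel carries only the (at most one) echo and ready values $P_j$ itself produces in that instance. The same rule is also needed in Step~1, because your $n^2$ count there likewise assumes a single \emph{initial} message per party.
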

\begin{proof}
    \emph{Validity:} In any instance, if the sender $P_g$ is honest and broadcast value~$v$, then by Lemma~\ref{lemma:broadcast-accept}, all honest parties eventually accept $v$ in that instance.
    
    \emph{Agreement:} In any instance, if the sender $P_g$ is corrupted, and some honest party~$P_i$ accepts a value $v$ in that instance, then by Lemma~\ref{lemma:honest-accept}, all other honest parties eventually accept the same value~$v$ in that instance.

    \emph{Complexity:} 
    The communication complexity is derived from the message lengths and the total number of transmissions.
    The length of any single transmitted message is determined by the message length and party identifiers, which requires $O(|M| + \log n)$ bits.
    The protocol involves $O(n^3)$ messages in total: in Steps 3 and~6, each of the $n$ parties may forward messages from each of the other $n$ parties to all $n$ parties, contributing $O(n^3)$ messages.
    Therefore, the overall complexity is the product of the message count and length: $O(n^3) \cdot O(|M| + \log n) = O(n^3|M| + n^3 \log n)$.
\end{proof}

By adapting the Byzantine agreement protocol $\Pi^*$ from Figure~4 of~\cite{bracha1987asynchronous} (see Appendix~\ref{appendix:bracha-BA} of this paper for completeness) and replacing all broadcast procedures with our RB protocol, we can construct an unauthenticated BA protocol that is secure under concurrent composition in an asynchronous network.
In general, any BA protocol whose security relies solely on the underlying broadcast primitive can be transformed into a compositionally secure protocol in a black-box manner.

\subsection{Reliable Message Transmission Under Concurrent Composition}

Having established the RB primitive for transforming specific types of BA protocols, we now turn our attention to a more fundamental level of communication.
Specifically, to address asynchronous BA protocols that rely on point-to-point messaging, we propose a black-box compiler based on the RMT primitive.
This compiler transforms any existing asynchronous BA protocol into a compositionally secure one. Indeed, one can also use the RMT primitive to simulate message broadcast by first replacing broadcast with point-to-point communication.

Building on Section~\ref{sec:black-box}, our RMT primitive emphasizes eventual delivery. 
In an asynchronous network, this ensures that messages from honest nodes are never lost, despite of arbitrary delays.
The security guarantees of the RMT primitive in an asynchronous network are characterized by the following two properties: 

\begin{itemize}
\item \emph{Correctness}: If both parties $P_i$ and $P_j$ are honest, and $P_i$ initiates the protocol to send a message $M$ to $P_j$, then $P_j$ will \emph{eventually} accept the message $M$ from $P_i$.
\item \emph{Unforgeability}: If both parties $P_i$ and $P_j$ are honest, and $P_i$ does not initiates the protocol to send message $M$ to $P_j$, then $P_j$ will never accept $M$ from $P_i$.
\end{itemize}

We present Protocol~\ref{protocol:RMT-concurrent}, which achieves reliable message transmission under concurrent executions. 
Unlike the round-driven Protocol~\ref{sim:RMT}, this construction is entirely event-driven and ensures delivery through a redundancy-based approach.
The desired properties above hold against any adversary $\mathcal{A}(t,c)$ provided that $n > 2c + 2t + 1$.

\begin{algorithm}[t]
\caption{Reliable Message Transmission - Concurrent Composition}
\label{protocol:RMT-concurrent}
Say party $P_i$ wishes to send a message $M$ to party $P_j$.
The protocol proceeds as follows:
    \begin{itemize}
        \item Step 0 (performed by sender $P_i$): send $(deliver, M, P_i, P_j, P_i)$ to all parties. 
    \end{itemize}
    For each party $P_k$ ($P_k \ne P_j$): 
    \begin{itemize}
        \item Step 1 (performed by each $P_k$): Upon receiving message $(deliver, M, P_i, P_j, P_i)$ from $P_i$, send $(deliver, M, P_i, P_j, P_k)$ to $P_j$. 
    \end{itemize} 
    For receiver $P_j$:
    \begin{itemize}
        \item Step 2: Maintain a message box $\mathcal{T}$, upon receiving $(deliver, M, P_i, P_j, P_k)$ from~$P_k$, add it to $\mathcal{T}$.
        \item Step 3: Upon seeing a subset of $\mathcal{T}$, denoted by $\mathcal{M} =$ \\ $\{(deliver, M, P_i, P_j, P_{k_1}), \ldots, (deliver, M, P_i, P_j, P_{k_m})\}$, such that:
            \begin{enumerate}
                \item The size of $\mathcal{M}$ is strictly greater than $\frac{n-1}{2}$, i.e., $m > \frac{n-1}{2}$.
                \item The set of senders $S = \{P_{k_1}, \ldots, P_{k_m}\}$ are all distinct parties, i.e., the last element $P_{k_j}$ in each message is unique within $\mathcal{M}$.
                \item All messages in $\mathcal{M}$ share the same first four components $(deliver, M, P_i, P_j)$.
            \end{enumerate}
            $P_j$ accepts message $M$ from $P_i$.
    \end{itemize}   
\end{algorithm}

The following lemma establishes the security guarantees of the RMT protocol:

\begin{lemma}
    Protocol~\ref{protocol:RMT-concurrent} 
    realizes RMT
    against any adversary $\mathcal{A}(t,c)$ under concurrent composition, provided that $n > 2c + 2t + 1$; and for any message $M$, it has communication complexity $O(n|M| + n \log n)$.
\end{lemma}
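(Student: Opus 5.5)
I will prove the two RMT properties for a fixed instance, in the same counting style as the synchronous RMT lemma and Lemma~\ref{lemma:message-to-state}, and then bound the communication directly. The key quantity, for an honest receiver $P_j$ in a fixed instance, is the set of \emph{distinct} last-component identities $P_k$ that can appear in messages $(deliver, M, P_i, P_j, P_k)$ arriving on $P_j$'s input tapes for that instance. Condition~2 of Step~3 requires the senders in $\mathcal{M}$ to be distinct. So acceptance needs more than $\frac{n-1}{2}$ distinct identities, and the adversary cannot inflate the count by replaying or duplicating messages.

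\emph{Correctness.} Suppose $P_i$ and $P_j$ are honest and $P_i$ initiates the transmission in this instance. In Step~0, $P_i$ sends $(deliver, M, P_i, P_j, P_i)$ to every party. Consider any party $P_k \neq P_j$ (including $P_i$ itself) that is honest and such that neither $C_{ik}$ nor $C_{kj}$ is attacked. Such a $P_k$ eventually receives the Step~0 message in this instance, by eventual delivery on unattacked channels, and forwards $(deliver, M, P_i, P_j, P_k)$ to $P_j$, which $P_j$ eventually receives. Each corrupted party or attacked channel disqualifies at most one $P_k$, so at least $n-1-t-c$ distinct forwarders succeed. The bound $n>2c+2t+1$ gives $n-1-t-c > \frac{n-1}{2}$. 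These forwarded messages share the first four components and have distinct senders, so Step~3 fires and $P_j$ accepts $M$.

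\emph{Unforgeability.} Now suppose $P_i$ never initiates sending $M$ to $P_j$ in this instance. An honest $P_k$ (including $P_i$) sends $(deliver, M, P_i, P_j, P_k)$ to $P_j$ in this instance only after receiving $(deliver, M, P_i, P_j, P_i)$ from $P_i$ on its input tape for this instance. This is the delicate step, since under concurrent composition the adversary may reorder messages across instances on attacked channels, possibly into instances that have not yet started or that run at different speeds. I will charge every message in $P_j$'s box $\mathcal{T}$ carrying last component $P_k$ to one adversarial resource, using three facts.
\begin{enumerate}
\item On an unattacked channel, messages stay within their instance. So a message that reaches $P_j$ from an honest $P_k$ over an unattacked $C_{kj}$ was genuinely sent by $P_k$ in this instance.
\item If in addition $C_{ik}$ is unattacked, that send was triggered by a genuine Step~0 message of $P_i$ in this instance. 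This contradicts the hypothesis unless $P_k = P_i$, and if $P_k = P_i$ the message could only arrive over $C_{ij}$, which forces that channel to be attacked.
\item Consequently, each identity $P_k$ appearing in $\mathcal{T}$ is either corrupted, or has $C_{ik}$ attacked, or has $C_{kj}$ attacked. A message whose claimed last component is $P_k$ but which arrives over a different channel is ignored, since the protocol records the message as received from $P_k$.
\end{enumerate}
Each of the $t$ corruptions and $c$ attacked channels covers at most one identity $P_k$; a channel $C_{ij}$ covers only $P_i$ or $P_j$. Hence at most $t+c < \frac{n-1}{2}$ distinct identities appear, and Step~3 never fires. I must make explicit that $P_j$ attributes a message to $P_k$ only if it arrived on channel $C_{kj}$, matching the model in which input tapes are per-channel.

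\emph{Complexity.} Step~0 sends $n$ messages and Step~1 sends at most $n-1$ messages. Each message consists of $M$ plus three party identifiers and a constant tag, so it has length $O(|M| + \log n)$. The total is $O(n|M| + n\log n)$.
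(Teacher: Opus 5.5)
Your proposal is correct and follows essentially the same route as the paper. It uses the same $t+c$ count of suppressed forwarders for correctness, the same bound of at most $t+c<\frac{n-1}{2}$ adversarially attributable sender identities for unforgeability, and the same $O(n)\cdot O(|M|+\log n)$ tally. Your unforgeability argument is slightly more careful than the paper's: you explicitly charge attacks on the first-hop channel $C_{ik}$, which can make an honest $P_k$ relay a reordered Step~0 message, and you state that $P_j$ attributes a message to $P_k$ only when it arrives on $C_{kj}$, whereas the paper counts only injections over attacked channels leading to $P_j$.
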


\begin{proof}
    \emph{Correctness:}  Since $P_i$ is an honest party, it performs Step 0 and sends the message $(deliver, M, P_i, P_j, P_i)$ to all $n-1$ parties. Given the asynchronous network model, we rely on the eventual delivery of messages from honest parties over unattacked channels.
    Consider the $n-1$ parties $P_k \ne P_i$. Each honest party $P_k$ that receives the initial message from $P_i$ via an unattacked $C_{ik}$ will perform Step 1 and send $(deliver, M, P_i, P_j, P_k)$ to $P_j$. This message will eventually arrive at $P_j$ if $C_{kj}$ is unattacked.
    A forwarding message to $P_j$ can be suppressed or indefinitely delayed in only two ways:
    \begin{itemize}
        \item The forwarding party $P_k$ is corrupted by the adversary ($t$ possible corruption), or
        \item The channel $C_{ik}$ (Step 0) or $C_{kj}$ (Step 1) is attacked by the adversary. With~$c$ channel attacks available, this can suppress at most $c$ distinct forwarded messages to $P_j$.
    \end{itemize}
    The total number of forwarded messages that can be suppressed is at most $t + c$.

    Since there are $n-1$ potential forwarders, the number of distinct forwarded messages $(deliver, M, P_i, P_j, P_k)$ that eventually reach the honest receiver $P_j$ is at least $n - 1 - (t + c)$.
    Given the condition $n > 2c + 2t + 1$, we can rearrange it to show:
    $$n - 1 - (t + c) > (t + c) \implies n - 1 - (t + c) > \frac{n-1}{2}$$
    Therefore, $P_j$ eventually receives strictly more than $\frac{n-1}{2}$ distinct forwarded messages, and by Step 3, $P_j$ eventually accepts $M$ from $P_i$.

    \emph{Unforgeability:} If $P_i$ did not initiate the RMT protocol for message $M$ to $P_j$, the only way for $P_j$ to receive any message of the form $(deliver, M, P_i, P_j, \_)$ is via adversarial intervention.
    The adversary $\mathcal{A}(t,c)$ can create and send this message through two mechanisms:
    \begin{itemize}
        \item Corrupting up to $t$ parties $P_k$ and forcing them send $(deliver, M, P_i, P_j, P_k)$ to $P_j$. 
        \item Directly injecting up to $c$ messages $(deliver, M, P_i, P_j, P_k)$ over $c$ attacked channels leading to $P_j$.
    \end{itemize}
    The maximum number of distinct messages that the adversary can forge and ensure their delivery to the honest $P_j$ is at most $t + c$.
    Since $n > 2c + 2t + 1$, we have $t + c < \frac{n-1}{2}$.
    The number of forged messages $t + c$ is strictly less than the required acceptance threshold of $\frac{n-1}{2}$.
    Thus, $P_j$ will never accumulate the required majority of forwarded messages to satisfy the condition in Step 3, and therefore never accepts~$M$ from $P_i$.

    \emph{Complexity:} In the protocol, the length of any single transmitted message, which has the format $(deliver, M, P_i, P_j, P_k)$, is determined by $M$ and the three party identifiers. 
    Since the message length is $|M|$ and each party requires $O(\log n)$ bits for unique identification, the length of one message is $O(|M| + \log n)$. 
    The protocol involves $O(n)$ messages in total: $n-1$ messages for the initial broadcast (Step 0) and at most $n-2$ messages for forwarding (Step 1). Therefore, the overall complexity is the product of the message count and length: $O(n) \cdot O(|M| + \log n) = O(n|M| + n \log n)$.
\end{proof}

With the security properties of the RMT primitive established, we now proceed to prove the security of the compiled protocol.

\begin{theorem}
    \label{thm:concurrent-RMT}
    Let $\Pi_{*}$ be an asynchronous BA protocol for $n$ parties that tolerates up to $t$ Byzantine corruptions. 
    Let $\Pi$ be the protocol obtained by replacing all point-to-point communication in $\Pi_{*}$ with the RMT protocol.
    Then, $\Pi$ is secure under concurrent composition of any number of executions against any adversary $\mathcal{A}(t, c)$, provided that $n > 2t + 2c + 1$.
\end{theorem}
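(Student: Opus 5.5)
I will follow the simulation strategy of Theorem~\ref{thm:parallel}, adapted from rounds to events. Fix an arbitrary instance $\Pi_j$ among the concurrently executed instances. For every honest party $P_i$, let $ITM^i_{*j}$ be the inner machine of $ITM^i_j$. In a single execution of $\Pi_{*}$, I build $ITM^i_{*}$ with the same input, random and setup tapes. The goal is an adversary $\mathcal{A}_{*}$ against $\Pi_{*}$, corrupting only the same $t$ parties and no channels, such that every honest $ITM^i_{*}$ has exactly the same view as $ITM^i_{*j}$. Agreement and validity of $\Pi_j$ then follow from those of $\Pi_{*}$.

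The first step is to isolate what $\mathcal{A}(t,c)$ can do to an honest inner machine in $\Pi_j$. Inner machines only see messages accepted through the RMT of Protocol~\ref{protocol:RMT-concurrent}. By unforgeability, an honest $ITM^i_j$ accepts $M$ from an honest $P_k$ only if $ITM^k_j$ actually initiated RMT of $M$ to $P_i$. By correctness, every such initiation is eventually accepted. The remaining influence is of three kinds:
\begin{itemize}
\item controlling corrupted parties;
\item causing acceptance of arbitrary messages attributed to corrupted senders;
\item choosing the order and timing in which accepted messages are delivered to inner machines.
\end{itemize}
In the asynchronous model for $\Pi_{*}$, the third kind is exactly the scheduling power of the standard asynchronous adversary, which may delay honest messages arbitrarily but must eventually deliver them.

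I then define $\mathcal{A}_{*}$ to mirror this. It corrupts the same parties. Whenever an honest $ITM^i_j$ accepts a message $M$ from $P_k$, whether $P_k$ is honest or corrupted, $\mathcal{A}_{*}$ delivers $M$ from $ITM^k_{*}$ to $ITM^i_{*}$ at the corresponding point of the schedule. For corrupted $P_k$ it injects $M$ directly.

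Induction is on the sequence of acceptance events in $\Pi_j$, ordered by the global timeline. The hypothesis is that, before each event, each honest $ITM^i_{*j}$ and $ITM^i_{*}$ have identical views. The views then produce identical state transitions and outgoing messages. Unforgeability guarantees that each honest-to-honest delivery made by $\mathcal{A}_{*}$ corresponds to a message that $ITM^k_{*}$ really sent, so $\mathcal{A}_{*}$ is a legal adversary. Correctness guarantees that every honest message is eventually delivered, which is the fairness requirement of the asynchronous model. Identical final outputs follow, and in particular termination transfers.

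I expect the main obstacle to be making unforgeability hold per instance under concurrency, where the same payload $M$ from $P_k$ to $P_i$ may legitimately occur in several instances or several times within one instance. Forwarded copies from other instances could be reordered into $\Pi_j$ by attacked channels. I would handle this by counting: in $\Pi_j$, at most $t+c$ acceptance-relevant copies can originate outside the honest forwarders of that instance, since each corrupted party or attacked channel contributes at most one distinct forwarder identity. Below the threshold $\frac{n-1}{2}$ this is insufficient, so acceptance still implies a genuine send in $\Pi_j$. For repeated identical messages, I would treat RMT invocations as multiset matching: the $\ell$-th acceptance corresponds to the $\ell$-th initiation, possibly after tagging with a local sequence number inside the message. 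Identical payloads are interchangeable for the inner machine, so the views still coincide.
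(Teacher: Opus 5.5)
Your proposal is correct and follows essentially the same route as the paper: you fix $\Pi_j$, give each $ITM^i_{*}$ the same tapes as the inner machine, build $\mathcal{A}_{*}$ to mirror corruptions and to deliver each accepted message (injecting it directly when the sender is corrupted), and induct over the sequence of acceptance events, using correctness and unforgeability of RMT when the sender is honest. Your extra care about fairness and repeated payloads goes beyond the paper, which tacitly assumes a given payload is sent at most once per sender--receiver pair in an instance; of your two fixes only sequence-number tagging works, because without tags a single corrupted forwarder can pair its repeated copies with disjoint halves of the honest copies of one initiation (e.g.\ $n=4$, $t=1$, $c=0$, threshold $2$), making $P_j$ accept one honest message twice, which $\mathcal{A}_{*}$ cannot reproduce in $\Pi_{*}$.
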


\begin{proof}
    We fix an arbitrary execution instance $\Pi_j$ of the compiled protocol. 
    We prove its security by simulation, comparing the execution of $\Pi_j$ (the real world) with an execution of the original protocol $\Pi_{*}$ (the ideal world).
    
    We model the execution of each honest party $P_i$ in $\Pi_j$ as an Interactive Turing Machine, denoted as $ITM^i_j$.
    Since the compiled protocol $\Pi$ is obtained by replacing all point-to-point communication in the original protocol $\Pi_{*}$ with the RMT protocol, the state of $ITM^i_j$ effectively maintains a simulated execution of $\Pi_{*}$.
    We denote this virtualized execution as the \emph{inner Turing machine}, denoted $ITM^i_{*j}$.
    In an ideal world, we construct an $ITM^i_{*}$ for $\Pi_{*}$, such that $ITM^i_{*}$ has the same initial input tape, random tape (and setup tape) as $ITM^i_j$, and thus also as $ITM^i_{*j}$.

    Now we prove that the state transition of the inner machine $ITM^i_{*j}$ in the real execution should be identical to that of a party $ITM^i_{*}$ executing the original protocol $\Pi_{*}$ natively.
    To prove this, we construct a simulated adversary $\mathcal{A}_{*}$ for $\Pi_{*}$ that simulates the behavior of the real-world adversary $\mathcal{A}$ for $\Pi_j$.
    The adversary $\mathcal{A}$ interacts with $\Pi_j$ by corrupting parties, injecting messages, and controlling message scheduling. $\mathcal{A}_{*}$ translates these actions to $\Pi_{*}$ as follows:
    \begin{itemize}
        \item \emph{Corruption:} If $\mathcal{A}$ corrupts party $P_i$ in $\Pi_j$, $\mathcal{A}_{*}$ corrupts the corresponding party in $\Pi_{*}$.
        \item \emph{Message Injecting:} If $\mathcal{A}$ schedules the network such that an honest party $P_i$ in $\Pi_j$ \emph{accepts} a message $M$ from corrupted party $P_k$ (via the RMT acceptance condition) at time $T$, $\mathcal{A}_{*}$ delivers the direct message $M$ from $P_k$ to $P_i$ in $\Pi_{*}$, and let $P_i$ receive the message at time $T$.
        \item \emph{Message Scheduling:} If $\mathcal{A}$ prevents the RMT protocol from completing (i.e., $P_j$ has not yet received enough deliver to accept $M$ from $P_i$), $\mathcal{A}_{*}$ simply delays the delivery of message $M$ in $\Pi_{*}$.
    \end{itemize}

    Since the network is asynchronous, we cannot induce on rounds like in the proof of Theorem~\ref{thm:parallel}. 
    Instead, we induce on the sequence of message acceptance events at honest parties.
    Let both systems start with identical inputs, random tapes, and setup tapes.
    Consequently, the initial state of the inner machine $ITM^i_{*j}$ in $\Pi_j$ is identical to $ITM^i_{*}$ in $\Pi_{*}$.
    Since for every honest party $P_i$, both $ITM^i_{*j}$ and $ITM^i_{*}$ have the same input, random, and setup tapes at the beginning of the protocol, they have the same start state and the first out-going messages.
    That is, $\text{view}_{\Pi_j}^{0}(ITM^i_{*j}) = \text{view}_{\Pi_{*}}^{0}(ITM^i_{*})$.

    Assume that for the first $s$ message acceptance events across the entire network, the time of messages accepted and the resulting internal states of all honest parties are identical in both $\Pi_j$ and $\Pi_{*}$.
    That is, $\text{view}_{\Pi_j}^{s}(ITM^i_{*j}) = \text{view}_{\Pi_{*}}^{s}(ITM^i_{*})$.
    Consider the $(s+1)$-th event where an honest party $P_i$ accepts a message $M$ from sender $P_k$ in $\Pi_j$:
    \begin{itemize}
        \item \textbf{Case 1: Sender $P_k$ is honest.}
        By the inductive hypothesis, $P_k$'s state is identical in both worlds when it generated $M$. Thus, $P_k$ initiates the RMT for $M$ in $\Pi_j$ if and only if $P_k$ sends $M$ in $\Pi_{*}$.
        By the \emph{Correctness} of RMT, if $M$ delivery in $\Pi_j$, $P_i$ eventually accepts $M$. 
        Assume $P_i$ accepts $M$ in time $T^{s+1}$, the simulated adversary $\mathcal{A}_{*}$ will delivers $M$ in $\Pi_{*}$ at $T^{s+1}$ simultaneously.
        Meanwhile, by \emph{Unforgeability} and the simulation process, $\mathcal{A}_{*}$ will not let $P_i$ accept any other message before time $T^{s+1}$.
        Thus, $P_i$ accepts message $M$ from $P_k$ is also the $(s+1)$-th event in $\Pi_*$.

        \item \textbf{Case 2: Sender $P_k$ is corrupted.}
        We assume $P_i$ accepts $M$ in time $T^{s+1}$.
        Due to \emph{Unforgeability}, $P_i$ in $\Pi_j$ only accepts $M$ if the adversary explicitly constructs the necessary echoes.
        Therefore, in $\Pi_{*}$, by the simulation process, $\mathcal{A}_{*}$ will let the corrupted sender send $M$, and $P_i$ will accept $M$ at the same time $T^{s+1}$.
        Thus, the inputs to $P_i$'s state transition function are identical.
    \end{itemize}

    In both cases, upon processing the $(s+1)$-th message, the internal state of $ITM^i_{*j}$ updates exactly as $ITM^i_{*}$ updates. Therefore, 
    $\text{view}_{\Pi_j}^{s+1}(ITM^i_{*j}) = \text{view}_{\Pi_{*}}^{s+1}(ITM^i_{*})$. 

    Since the internal states and views of honest parties in $\Pi_j$ are indistinguishable from those in $\Pi_{*}$, the output values of the honest parties must be identical.
    Therefore, $\Pi_j$ inherits the Agreement and Validity properties of $\Pi_{*}$. 
    As this holds for any arbitrary instance $j$, $\Pi$ is secure under concurrent composition.
\end{proof}

With the RMT compiler of Protocol~\ref{protocol:RMT-concurrent}, we can directly compile Bracha's protocol~\cite{bracha1987asynchronous} to obtain a concurrently secure protocol, by replacing all broadcasts in the original protocol with point-to-point communication and then with our RMT protocol.
For any message length $|M|$, Bracha's protocol in a single phase (which is an RB protocol) has communication complexity $O(n^2 |M|)$, while the compiled protocol in a single phase (which is a concurrently secure RB protocol) has communication complexity $O(n^3 |M| + n^3 \log n)$.
\section{Scalable RB and RMT for Long Messages}
\label{sec:long}

In the preceding section, we presented concurrently composable RB and RMT primitives with communication complexities of $O(n^3 |M| + n^3 \log n)$ and $O(n|M| + n \log n)$, respectively.
When the message size $|M|$ is large, these primitives introduce significant overheads.
To achieve more efficient RB and RMT protocols for long messages while preserving concurrent composability, we leverage techniques from erasure-correcting codes (ECC) and Sequential Proof Schemes.
Our refined protocols achieve the following communication complexity guarantees (summarized in Table \ref{tab:complexity_results}): 
\begin{itemize}
    \item For long messages where $|M| \in \Omega(\lambda \log n)$, RMT achieves $O(|M| + \lambda n \log n)$ and RB achieves $O(n^2|M| + \lambda n^3 \log n)$; and 
    \item For very long messages where $|M| \in \Omega(\lambda n \log n)$, RB further improves to $O(n|M| + \lambda n^3 \log n)$.
\end{itemize}
The security of these refined protocols relies on the collision resistance of cryptographic hash functions, which naturally assumes polynomial-time adversaries instead of unbounded ones.


\subsection{Erasure-Correcting Codes and Proof Schemes}

We define ECC and its properties below, adopting a notation consistent with recent works~\cite{das2021asynchronous,zhang2025optimistic}.

\begin{definition}[Erasure-Correcting Code]
\label{def:ECC}
    Given a Galois field $\mathbb{F}=\mathrm{GF}(2^l)$ with $|\mathbb{F}| > n$, we define each symbol in the code as an element in $\mathbb{F}$. An $(n,d)$-erasure-correcting code consists of two polynomial-time algorithms, $\mathsf{ECCenc}$ and $\mathsf{ECCdec}$, defined as follows:
    \begin{itemize}
        \item $\mathsf{ECCenc}(T) \rightarrow T' = [s'_1, \ldots, s'_n] \in \mathbb{F}^n$: 
        The encoding algorithm takes as input a sequence of $d$ source symbols $T = [s_1, \ldots, s_d] \in \mathbb{F}^d$, and outputs a sequence $T'$ of $n$ symbols.
        \item $\mathsf{ECCdec}(T') \rightarrow T = [s_1, \ldots, s_d] \in \mathbb{F}^d$: 
        The decoding algorithm takes as input a sequence of symbols $\tilde{T'} \in (\mathbb{F} \cup \{\perp\})^n$, where $\perp\not\in \mathbb{F}$ represents an erased symbol, and outputs the sequence of source symbols $T$.
    \end{itemize}
    The scheme must satisfy the correctness property:
    \begin{itemize}
        \item Correctness: For any $T \in \mathbb{F}^d$, let $T' = [s'_1, \ldots,s'_n] = \mathsf{ECCenc}(T)$. For any sequence of symbols $\tilde{T}' = [\tilde{s}'_1, \ldots, \tilde{s}'_n]$, where $\tilde{T}'$ has at least $d$ non-erased symbols $\tilde{s}'_i \neq \perp$ and these symbols all satisfy $\tilde{s}'_i = s'_i$, it has      $\mathsf{ECCdec}(\tilde{T'}) = T$.
    \end{itemize}
The parameter $d$ is also referred to as the minimum decoding threshold.
\end{definition}

As noted in~\cite{li2005the}, standard Reed-Solomon codes~\cite{reed1960Polynomial} provide an efficient instantiation of such ECC, assuming the locations of erased symbols are known. By utilizing ECC, we can partition a large message into $n$ smaller shares and distribute them among parties, significantly lowering the communication overhead. 
We demonstrate the application of these algorithms in the following message segmentation scenario, which will be used as building blocks in later constructions.

\smallskip

{\bf Message Segmentation Example.}
    Consider a system with $n$ participants. To disseminate a message $M$, we select a symbol size $l = \log n$ such that the Galois field size is $2^l = n$. For any message $M$ with length $|M|$ that satisfies $\log n \leq|M| \leq \frac{n}{2}\log n$, we segment the message into $d = \lceil \frac{|M|}{\log n} \rceil$ pieces. More precisely, the process goes in two phases:
    \begin{itemize}
        \item Message Encode: 
        The message $M$ is segmented into a sequence of $d$ symbols, denoted by $M^* = [m_1, m_2, \cdots, m_d]$, with $M^*$ being the concatenation of $m_1, m_2, \dots, m_d$. The process uses an $(n, d)$-ECC. It invokes $\mathsf{ECCenc}(M^*)$ to produce a symbol sequence $S = [s_1, \dots, s_n]$ and sends each $s_i$ to player $P_i$.
        \item Message Decode: 
        Upon receiving a sequence of symbols $S' = [s'_1, \dots, s'_n]$ with at least $d$ symbols that are not $\bot$, the process invokes $\mathsf{ECCdec}(S')$ to recover~$M^*$, which is then reassembled to construct $M$.
    \end{itemize}
Note that we need to segment $M$ into an integer number of symbols, and each symbol has a fixed length. 
Therefore we will pad $M$ to the length of $\lceil \frac{|M|}{\log n} \rceil \cdot \log n$ in practice. 

\smallskip
{\bf Proof Schemes.}
Standard erasure decoding requires explicit knowledge of which symbols are valid. 
In a Byzantine setting, however, the adversary may corrupt players and send wrong symbols without declaring them as ``erased''. 
Therefore we need a scheme to verify the integrity of each symbol, essentially converting Byzantine faults into identifiable erasures. 
We define this mechanism as a {\em sequential proof scheme,} similar to the formulation in~\cite{baric1997collision,zhang2025optimistic}. 

\begin{definition}[Sequential Proof Scheme]
    A Sequential Proof Scheme allows for the generation and verification of cryptographic proofs for elements within a sequence while asserting their locations. Let each element belong to some domain $\mathcal{B}$. 
    Such a scheme consists of two polynomial-time algorithms $\mathsf{PGen}$ and $\mathsf{PVer}$, such that:
    \begin{itemize}
        \item $\mathsf{PGen}(B) \rightarrow (\Pi,
        r)$: 
        Given a sequence $B = [b_1, \dots, b_n] \in \mathcal{B}^n$, it outputs a vector of proofs $\Pi = [\pi_1, \dots, \pi_n]$ corresponding to each element, and a global commitment root $r$.
        \item $\mathsf{PVer}(i, b, \pi,r) \rightarrow \{True,False\}$: Given an index $i \in [n]$, an element $b \in \mathcal{B}$, a proof $\pi$ and a root $r$, it validates whether $b = b_i$ or not, where $b_i$ is the $i$-th element in the sequence committed by $r$.
    \end{itemize}
    The scheme must satisfy the following properties:
    \begin{itemize}
        \item Completeness: For any sequence $B\in \mathcal{B}^n$, if $(\Pi, r) = \mathsf{PGen}(B)$, then for all $i \in [n]$, it holds that 
        $\mathsf{PVer}(i, b_i, \pi_i, r) = \text{True}$.
        \item Collision-free: Given a triple $(B, \Pi, r)$ such that $(\Pi, r) = \mathsf{PGen}(B)$, for any polynomial-time adversary, it is computationally infeasible to find an index $i \in [n]$, an element 
        $b' \neq b_i$ and a proof $\pi'$ such that $\mathsf{PVer}(i, b', \pi', r) = \text{True}$.
    \end{itemize}
\end{definition}

It is worth noting that the sequential proof scheme is not a deterministic primitive: the collision-free property requires inherent randomness in the setup of the scheme, exemplified by the random selection of a hash function from a universal hash family. 
Therefore, a random key $rk$ must be drawn from the keyspace $\mathcal{K}(\lambda, n)$ where $\lambda$ is the security parameter and $n$ is the number of elements in the sequence. 
While both the generation algorithm $\mathsf{PGen}$ and the verification algorithm $\mathsf{PVer}$ are parameterized by~$rk$, this dependency is notationally omitted for the sake of brevity.

By integrating a Sequential Proof Scheme with ECC, we will generate a proof $\pi_i$ for each ECC symbol $s_i$, allowing receivers to discard corrupted symbols and treat them as erasures. 
A common instantiation of such a scheme is by using Merkle tree~\cite{merkle1989certified}, which in turn relies on a collision-resistant hash function $H: \mathcal{B} \to \{0, 1\}^\lambda$, with the output size $\lambda$ being the security parameter. The Merkle tree generates a specific proof for each leaf node, which can be verified together with its position $i$ and root $r$.
Under this construction, the proof size for each symbol is $O(\lambda \log n)$, while the size of the global commitment $r$ (i.e., the Merkle root) is $O(\lambda)$. 

\subsection{Reliable Message Transmission for Long Messages}
\label{subsec:long for RMT}

We present our RMT protocol for long messages in 
Protocol~\ref{protocol:RMT-long-concurrent}.
This protocol is applicable to any message $M$ with $|M| \geq \log n$, and starts to be better than the communication complexity of Protocol~\ref{protocol:RMT-concurrent} when $|M| \in \Omega(\lambda \log n)$.

The construction utilizes an $(n-1, d)$-ECC, where the minimum decoding threshold $d$ is defined as follows. 
Firstly, let $d_{max} = \lceil\frac{n-1}{2}\rceil$ and $\sigma =  \lceil \frac{|M|}{d_{max} \log n } \rceil$.
Then, let the symbol size be $\sigma \log n$ and $d = \lceil \frac{|M|}{\sigma \log n} \rceil \leq d_{max}$.
%
In particular, the protocol segments a message $M$ into at most $d_{max}$ symbols with symbol size at least $\log n$.



\begin{algorithm}[t]
\caption{RMT for Long Message - Concurrent Composition}
\label{protocol:RMT-long-concurrent}
Say party $P_i$ wishes to send a message $M$ to party $P_j$. 
Let $d_{max} = \lceil\frac{n-1}{2}\rceil$ and $\sigma = \lceil \frac{|M|}{d_{max} \log n } \rceil$. Let the symbol size be $\sigma \log n$ (thus the field size is $|\mathbb{F}| = 2^{\sigma }n$) and $d = \lceil \frac{|M|}{\sigma \log n} \rceil$.
Slightly overloading the notation, $M$ also represents the sequence of $d$ symbols whose concatenation reproduces $M$.
The protocol proceeds as follows:
    \begin{itemize}
        \item Step 0 (performed by sender $P_i$): Let $M^* := [m_1, m_2, \ldots, m_{n-1}] := \mathsf{ECCenc}(M)$. 
        Compute $(\Pi, r) := ([\pi_1, \ldots, \pi_{n-1}], r) := \mathsf{PGen}(M^*)$.
        Send message $(deliver, m_{k'}, \pi_{k'}, P_k, r, P_i, P_j)$ to each party $P_k\neq P_i$, where $k'=k$ if $k<i$ and $k'=k-1$ if $k>i$.
    \end{itemize}
    For each party $P_k$:
    \begin{itemize}
        \item Step 1: Upon receiving message $(deliver, m_{k'}, \pi_{k'}, P_k, r, P_i, P_j)$ from $P_i$, forward it to $P_j$. 
    \end{itemize} 
    For party $P_j$:
    \begin{itemize}
        \item Step 2: Maintain a message box $\mathcal{T}$.\\
        Upon receiving $(deliver, m_{k'}, \pi_{k'}, P_k, r, P_i, P_j)$ from $P_k$, let $k'=k$ if $k<i$ and $k'=k-1$ if $k>i$. If $\mathsf{PVer}(k', m_{k'}, \pi_{k'}, r) = True$, add it to $\mathcal{T}$.
        \item Step 3: Upon seeing a subset of $\mathcal{T}$, denoted by $\mathcal{M} =$ \\$\{(deliver, m_{k'_1}, \pi_{k'_1}, P_{k_1}, r, P_i, P_j), \dots, $ \par
       $(deliver, m_{k'_m}, \pi_{k'_m}, P_{k_m}, r, P_i, P_j)\}$ such that: 
            \begin{enumerate}
                \item All messages in $\mathcal{M}$ share the same triple $(r, P_i, P_j)$.
                
                \item The size of $\mathcal{M}$  is strictly larger than $\frac{n-1}{2}$. That is, $m > \frac{n-1}{2}$.
                
                \item The set of senders $S = \{P_{k_1}, \ldots, P_{k_m}\}$ are all distinct parties, i.e., the fourth element $P_{k_j}$ in each message is unique within $\mathcal{M}$.
            \end{enumerate}
            Construct $\tilde{M} = [\tilde{m}_1, \ldots, \tilde{m}_{n-1}]$, where $\tilde{m}_{k'} = m_{k'}$ \\ if $(deliver, m_{k'}, \pi_{k'}, P_{k}, r, P_i, P_j) \in \mathcal{M}$, and $\tilde{m}_{k'} = \perp$ otherwise. 
            Reconstruct the message
            $M = \mathsf{ECCdec}(\tilde{M})$. $P_j$ accepts message $M$ from $P_i$.
    \end{itemize}   
\end{algorithm}

We show that the RMT protocol for long messages satisfies the security properties of Correctness and Unforgeability under concurrent composition, assuming the underlying ECC and sequential proof scheme are secure.

\begin{theorem}
\label{thm:long-RMT}
    Assuming an $(n-1,d)$-ECC and a Sequential Proof Scheme,
    Protocol~\ref{protocol:RMT-long-concurrent} realizes RMT against any adversary $\mathcal{A}(t,c)$ under concurrent composition, provided that $n > 2c + 2t + 1$; and for any message of length $|M| \in \Omega(\lambda \log n)$, it has communication complexity $O(|M| + \lambda n \log n)$.
\end{theorem}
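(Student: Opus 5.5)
I will follow the same structure as the proof for Protocol~\ref{protocol:RMT-concurrent}: establish \emph{Correctness} by counting forwarders, \emph{Unforgeability} by combining the counting argument with the collision-free property of the Sequential Proof Scheme, and then count bits.

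\emph{Correctness.} Let $P_i,P_j$ be honest and let $P_i$ start the protocol on $M$. Then $P_i$ sends to each $P_k\neq P_i$ the tuple $(deliver, m_{k'},\pi_{k'},P_k,r,P_i,P_j)$, with $(\Pi,r)=\mathsf{PGen}(M^*)$. A fragment fails to reach $P_j$ only if $P_k$ is corrupted or one of $C_{ik}$, $C_{kj}$ is attacked. Each corruption or attacked channel blocks at most one forwarder, so at least $n-1-t-c>\frac{n-1}{2}$ fragments arrive, counting $P_i$'s own copy addressed to $P_j$. By Completeness, every genuine fragment passes $\mathsf{PVer}$ and enters $\mathcal{T}$, all under the common triple $(r,P_i,P_j)$. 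So eventually a set $\mathcal{M}$ as in Step 3 exists.

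It remains to show that whatever qualifying $\mathcal{M}$ $P_j$ first finds decodes to $M$. Take any $\mathcal{M}$ with root $r$. Its entries carry distinct positions $k'$, and there are more than $\frac{n-1}{2}$ of them, hence at least $\lceil\frac{n-1}{2}\rceil = d_{max}\ge d$ entries. Each entry passed $\mathsf{PVer}$ against $r$. By collision-freeness, a polynomial-time adversary cannot produce a verifying symbol different from the true $m_{k'}$. So every non-erased symbol in $\tilde M$ is correct, and ECC Correctness gives $\mathsf{ECCdec}(\tilde M)=M$.

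\emph{Unforgeability.} This is the main obstacle, because a corrupted party (or the adversary, via reordered copies) can compute its own $\mathsf{PGen}$ on an arbitrary $M'$ and obtain a fresh root $r'$ with valid proofs. The proof scheme therefore cannot help here, and the argument must again rest on counting. Any tuple carrying $(r',P_i,P_j)$ that honest $P_i$ never produced can reach $P_j$ only from a corrupted sender or over an attacked channel. That is at most $t+c<\frac{n-1}{2}$ distinct sender labels $P_k$, so no qualifying $\mathcal{M}$ forms for $r'$.

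For the true root $r$, as argued above, decoding yields exactly $M$. One subtlety needs care: the adversary may relay a tuple with root $r$ taken from another concurrent instance in which $P_i$ sent the same $M$ to $P_j$. This only ever delivers $M$, which $P_i$ did send, so Unforgeability is not violated. I will phrase the property per message content, matching the RMT definition. The argument also has to absorb the negligible probability of breaking collision-freeness, so the guarantee holds except with negligible probability.

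\emph{Complexity.} The symbol size is $\sigma\log n$ with $\sigma=\lceil |M|/(d_{max}\log n)\rceil$. Since $|M|\ge\log n$, this gives $\sigma\log n = O(|M|/n+\log n)$. Each of the $O(n)$ transmitted messages (Step 0 plus forwarding in Step 1) carries one symbol, a Merkle proof of size $O(\lambda\log n)$, a root of size $O(\lambda)$ and $O(\log n)$ bits of identifiers. Summing over these messages gives $O(|M| + n\log n + \lambda n\log n) = O(|M|+\lambda n\log n)$. For $|M|\in\Omega(\lambda\log n)$ this beats Protocol~\ref{protocol:RMT-concurrent}.
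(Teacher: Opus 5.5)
Your proposal is correct and follows essentially the same route as the paper. Correctness rests on the $t+c$ counting of blocked forwarders together with Completeness, collision-freeness and ECC correctness; Unforgeability is pure counting over distinct sender labels; and the complexity is $O(n)$ messages, each of size $O(|M|/n+\lambda\log n)$. Your remarks on distinct positions $k'$ and on the negligible collision probability are welcome refinements that the paper leaves implicit.

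One small wording fix: in the Unforgeability count, ``tuples that honest $P_i$ never produced'' should read ``never produced in this instance.'' That way the roots of \emph{other} messages $P_i$ sent in concurrent instances are also covered by the same $t+c$ bound, since they can reach this instance only over attacked channels.
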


\begin{proof}
    \emph{Correctness:}  Since $P_i$ is an honest party, it performs Step 0 by encoding the message $M$ into $M^* = [m_1, \ldots, m_{n-1}]$, generating the proof $(\Pi, r)$, and reliably sending the tuple $(deliver, m_{k'}, \pi_{k'}, P_k, r, P_i, P_j)$ to its designated recipient $P_k$, where $k'=k$ if $k<i$ and $k'=k-1$ if $k>i$. 
    We rely on the asynchronous network's eventual delivery guarantee for messages from honest parties over unattacked channels.

    Each honest party $P_k$ that receives the initial message from $P_i$ via an unattacked channel $C_{i, k}$ will perform Step 1 and forward $(deliver, m_{k'}, \pi_{k'}, P_k, r, P_i, P_j)$ to the receiver $P_j$. This message will eventually reach $P_j$ if $C_{k, j}$ is unattacked.
    A forwarded message can be suppressed or indefinitely delayed if:
    \begin{itemize}
        \item The forwarding party $P_k$ is corrupted by the adversary ($t$ possible corruption), or
        \item The channel $C_{ik}$ (Step 0) or $C_{kj}$ (Step 1) is attacked by the adversary.
        With $c$ channel attacks available, this can suppress at most $c$ distinct forwarded messages to $P_j$.
    \end{itemize}
    The total number of distinct messages that can be suppressed is at most $t + c$.
    Since there are $n-1$ total forwarders, the number of distinct, honestly generated message that eventually reach the honest party $P_j$ is at least $n - 1 - (t + c)$.
    
    By the Completeness property of the sequential proof scheme, all symbols $m_{k'}$ and proofs $\pi_{k'}$ generated by the honest sender $P_i$ will pass the verification check $\mathsf{PVer}$ in Step 2. 
    Therefore, the set $\mathcal{T}$ of collected messages will contain at least $m = n -1 - (t + c)$ entries.
    Given the condition $n > 2c + 2t + 1$, we have:
    $$m = n - 1 - (t + c)  >  \frac{n-1}{2}.$$
    This number $m$ satisfies the condition in Step 3. 
    Specifically, $m \geq \lceil\frac{n-1}{2}\rceil \geq d$ is greater than the minimum number of correct code the ECC should have. 
    Meanwhile, by the Collision-free property of the sequential proof scheme, it is computationally infeasible for the adversary to generate a valid message that $P_i$ did not generate. 
    Thus, $\tilde{M}$ in Step 3 contains at least $d$ correct symbols, and other symbols are $\perp$. 
    By the Correctness Property of the $(n-1,d)$-erasure-correcting code, $P_j$ will successfully reconstruct the unique original message $M = \mathsf{ECCdec}(\tilde{M})$, and eventually accept $M$ from $P_i$. 

    \emph{Unforgeability:} If $P_i$ did not initiate the RMT protocol for message $M$ to $P_j$, the only way for $P_j$ to receive any message of the form $(deliver, m_{k'}, \pi_{k'}, P_k, r, P_i, P_j)$ is via adversarial intervention.
    The adversary $\mathcal{A}(t,c)$ can create and send this message through two mechanisms:
    \begin{itemize}
        \item It can corrupt up to $t$ parties. Then for each corrupted party $P_k$, it forces them to send $(deliver, m_{k'}, \pi_{k'}, P_k, r, P_i, P_j)$ to $P_j$. 
        \item Directly injecting up to $c$ messages $(deliver, m_{k'}, \pi_{k'}, P_k, r, P_i, P_j)$ over $c$ attacked channels leading to $P_j$.
    \end{itemize}
    The maximum number of distinct messages that the adversary can forge and ensure their delivery to the honest $P_j$ is at most $t + c$.
    Since $n > 2c + 2t + 1$, we have $t + c < \frac{n-1}{2}$.
    The number of forged messages $t + c$ is strictly less than the required threshold of $\frac{n-1}{2}$.
    Thus, $P_j$ will never accumulate the messages to satisfy the condition in Step 3, and therefore never accepts $M$ from $P_i$.

    \emph{Complexity:} The length of each symbols $m_{k'}$ is $\sigma \log n$ and the size of the pair $(r, \pi_{k'})$ is $O(\lambda \log n)$ when using a Merkle tree.
    The protocol involves $O(n)$ messages in total.
    Therefore, the overall complexity is the product of the message count and length: $O(n) \cdot O(\sigma \log n + \lambda \log n) = O(|M| + \lambda n \log n)$, where $\sigma = \lceil \frac{|M|}{d_{max} \log n } \rceil$ and $d_{max} = \lceil \frac{n-1}{2} \rceil$.
\end{proof}

Notice that when the message length $|M| \in \Omega(\lambda \log n)$, the communication complexity of this protocol, $O(|M| + \lambda n \log n)$, starts to be better than the complexity of Protocol~\ref{protocol:RMT-concurrent} (which is $O(n|M| + n \log n)$) as $|M|$ increases. 
This efficiency gain stems from the fact that instead of sending the entire message $M$ redundantly $O(n)$ times, only the encoded symbols and their proofs are transmitted, allowing the $O(n)$ transmissions to carry only one symbol length $\sigma \log n$ plus the proof whose length doesn't depend on $|M|$.

We classify different messages according to their lengths, and the optimal RMT protocols for different scenarios are as follows:
\begin{itemize}
    \item Short message: For messages whose length $|M| \in O(\lambda  \log n)$, we choose protocol \ref{protocol:RMT-concurrent} and the communication complexity is $O(n|M| +  n \log n)$;
    \item Long message: For messages whose length $|M| \in \Omega(\lambda \log n)$, we choose protocol \ref{protocol:RMT-long-concurrent} and the communication complexity is $O(|M| + \lambda n \log n)$.
\end{itemize}

\subsection{Reliable Broadcast for Very Long Messages}

By combining Bracha's RB protocol~\cite{bracha1987asynchronous} with our RMT Protocol~\ref{protocol:RMT-long-concurrent} for long messages, we have an RB protocol for long messages with communication complexity $O(n^2|M| + \lambda n^3 \log n)$.
In order to achieve a more efficient and concurrently composable RB protocol for longer messages, we leverage a similar idea in~\cite{das2021asynchronous} for message dissemination.

The new protocol, detailed in Protocol~\ref{protocol:broadcast-long-concurrent}, 
utilizes our RMT Protocol~\ref{protocol:RMT-long-concurrent} and RB Protocol~\ref{protocol:broadcast}, denoted by $\textsf{ConcurrentRMT}$ and $\textsf{ConcurrentRB}$ respectively when they are invoked.
%
Furthermore, Protocol~\ref{protocol:broadcast-long-concurrent} employs the Asynchronous Data Dissemination (ADD) protocol~\cite{das2021asynchronous}, which efficiently disseminates long messages in asynchronous networks: 
if at least $t+1$ honest parties are given a message $M$ as input and all honest parties are given either $M$ or $\perp$, 
then all honest parties eventually reconstruct and output $M$ (see Algorithm 1 in~\cite{das2021asynchronous}, present in Appendix~\ref{appendix:ADD} of this paper for completeness).
For concurrent composition, we replace all point-to-point communication in ADD by $\textsf{ConcurrentRMT}$ and the resulting protocol is referred as $\textsf{ConcurrentADD}$.

Protocol~\ref{protocol:broadcast-long-concurrent} is applicable to any message $M$ with $|M| \geq n \log n$, and achieves superior communication complexity compared with the construction using Protocol~\ref{protocol:RMT-long-concurrent} when the message length satisfies $|M| \in \Omega(\lambda n \log n)$. 

\begin{algorithm}[t]
\caption{Reliable Broadcast for very long messages - Concurrent Composition}
\label{protocol:broadcast-long-concurrent}
Say party $P_g$ wishes to broadcast message $M$.
The protocol proceeds as follows:
    \begin{itemize}
        \item Step 0: $P_g$ sends $(Propose, M)$ by $\mathsf{ConcurrentRMT}$ to all parties.
    \end{itemize}
    For each party $P_i$:
    \begin{itemize}
        \item Step 1: upon receiving $(Propose, M)$ from the $P_g$, let $h = hash(M)$ and act as if $P_i$ receive $(initial, h)$ from $P_g$, continue with protocol $\mathsf{ConcurrentRB}(h)$.
        \item Step 2: upon accepting $h$ from $\mathsf{ConcurrentRB}$, do:
        \begin{enumerate}
            \item if it has received $(Propose, M)$ and $h = hash(M)$ then start $\mathsf{ConcurrentADD}(M)$.
            \item else start $\mathsf{ConcurrentADD}(\perp)$.
        \end{enumerate}
    \end{itemize} 
\end{algorithm}

We now prove that Protocol~\ref{protocol:broadcast-long-concurrent} satisfies the properties of reliable broadcast under concurrent composition.

\begin{lemma}\label{lemma:RB-threshold}
    If any honest party accepts a hash $h^*$ from $\textsf{ConcurrentRB}$, then at least $t+1$ honest parties must have received a message $(Propose, M)$ from $P_g$ such that $\mathsf{hash}(M) = h^*$.
\end{lemma}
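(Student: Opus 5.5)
We trace backwards through the chain of thresholds in the $\mathsf{ConcurrentRB}$ instance (Protocol~\ref{protocol:broadcast}) that runs on $h^*$. In Protocol~\ref{protocol:broadcast-long-concurrent}, an honest party $P_i$ behaves in $\mathsf{ConcurrentRB}$ as if it received $(initial, h)$ from $P_g$ exactly when it has received $(Propose, M)$ from $P_g$ via $\mathsf{ConcurrentRMT}$ and computed $h=\mathsf{hash}(M)$. Only then does it send $(receive, h, P_g)$ in Step~1. The goal is therefore to show that at least $t+1$ honest parties sent $(receive, h^*, P_g)$.

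\textbf{Step 1 (accept to ready to echo).} Suppose an honest party accepts $h^*$. By the argument of Lemma~\ref{lemma:accept-equal}, which uses Lemma~\ref{lemma:message-to-state}, some honest party has sent $(ready, h^*, \cdot)$. Let $P_x$ be the first honest party to send such a message. It cannot have sent it through Step~8: that step requires more than $\frac{n-1}{3}$ parties in state $\{commit,h^*\}$, which includes at least one honest party. By Lemma~\ref{lemma:message-to-state}, that honest party would have sent ready earlier, contradicting the minimality of $P_x$.

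Hence $P_x$ sent ready via Step~5. It therefore observed more than $\frac{2n-1}{3}$ parties $P_j$ with $S^x_j=\{prepare,h^*\}$. Since $t<n/3$, more than $\frac{2n-1}{3}-t$ of these are honest. By Lemma~\ref{lemma:prepare-to-echo}, each of those honest parties actually sent $(echo,h^*,P_j)$. In particular at least one honest party sent an echo for $h^*$.

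\textbf{Step 2 (echo to receive).} An honest party sends $(echo,h^*,\cdot)$ only in Step~2, after collecting more than $\frac{n-1}{2}$ $(receive,h^*,P_g)$ messages from distinct parties. Among these senders, at most $t$ are corrupted. An attacked channel could at most relabel or inject a message on that channel, which accounts for at most $c$ more. This leaves more than $\frac{n-1}{2}-t-c$ honest parties that genuinely sent $(receive,h^*,P_g)$ in this instance.

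\textbf{Step 3 (counting).} We must show $\frac{n-1}{2}-t-c \ge t+1$ or better, and this is the main obstacle. The hypothesis $n>2t+2c+1$ only gives $\frac{n-1}{2}-t-c>0$, so a single echoing honest party yields just one honest receiver. That is too weak.

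The fix is to aggregate over many honest echoers. From Step~1 there are more than $\frac{2n-1}{3}-t$ honest echoers, and each saw a receive-majority. I would argue via the union of their receive quorums: any honest party counted in a quorum without having sent $(receive,h^*)$ must arise from a channel reordering on a channel incident to that echoer. Bounding such injections globally by $c$ (a channel attack is shared by both endpoints) and combining this with $n>3t$ should give at least $t+1$ distinct honest senders of $(receive,h^*,P_g)$.

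If that global count fails, the fallback is a per-quorum count. Each quorum has more than $\frac{n-1}{2}$ members, of which at least $\frac{n-1}{2}-t-c$ are honest-genuine. I would then check whether the intended reading of the thresholds, for example using $t+c$ in place of $\frac{n-1}{2}$ as in the footnote, already yields $t+1$. Finally, honest senders of receive have $\mathsf{hash}(M)=h^*$ for their received $M$, by Step~1 of Protocol~\ref{protocol:broadcast-long-concurrent}. By unforgeability of $\mathsf{ConcurrentRMT}$, that $M$ genuinely came from $P_g$ in this instance.
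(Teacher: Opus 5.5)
\textbf{Gap: Step 3 cannot be completed, because the statement is false for Protocol~\ref{protocol:broadcast} as written.}

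Your Steps 1 and 2 are correct. Step 2 is actually more faithful to Protocol~\ref{protocol:broadcast} than the paper's proof. The paper goes from ``at least $t+1$ honest echoers'' to the conclusion by asserting that an honest party echoes $h^*$ only if it received $h^*$ directly from the sender. That holds for Bracha's original broadcast, but not here: in Protocol~\ref{protocol:broadcast} an echo is triggered by more than $\frac{n-1}{2}$ $(receive,h^*,P_g)$ messages from distinct parties.

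The genuine gap is your Step~3, and no counting argument closes it. Your aggregation idea fails because the obstruction is not the $c$ attacked channels. The $t$ corrupted parties' receive messages can be reused in every echoer's quorum, together with one small fixed set of honest genuine senders. So the union of genuine honest senders is no larger than what a single quorum guarantees. Even with $c=0$ that is only more than $\frac{n-1}{2}-t$, about $t/2$ when $n=3t+1$. Your fallback of replacing the threshold by $t+c$ only shrinks the quorum and makes this worse.

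Here is a counterexample. Take $n=4$, $t=1$, $c=0$, so $n>3t$ and $n>2t+2c+1$ both hold.
\begin{itemize}
\item The corrupted sender $P_g=P_1$ delivers $(Propose,M)$ only to $P_2$.
\item $P_1$ itself sends $(receive,h^*,P_1)$ and $(echo,h^*,P_1)$ to everyone.
\item $P_3$ and $P_4$ each receive two receive messages, from $P_1$ and $P_2$. Since $2>\frac{3}{2}$, both echo $h^*$.
\item Every honest view then holds $\{prepare,h^*\}$ for $P_1,P_3,P_4$. Since $3>\frac{7}{3}$, all honest parties send ready.
\item All honest parties then commit $P_2,P_3,P_4$ and accept $h^*$, since $3>2$.
\end{itemize}
Yet only one honest party received $(Propose,M)$, which is fewer than $t+1=2$. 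Obtaining $t+1$ honest holders requires changing the protocol or its thresholds, not a sharper proof.
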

\begin{proof}
    In $\textsf{ConcurrentRB}$, for any honest party to accept $h^*$, at least one honest party must have sent a \emph{ready} message for $h^*$, which in turn requires at least $2t+1$ parties to have sent \emph{echo} messages. 
    Among these $2t+1$ parties, at least $t+1$ are honest. 
    An honest party only sends an \emph{echo} for $h^*$ if it received $h^*$ directly from the sender. 
    By the construction of Step 1, an honest party only contributes $h^*$ to the RB if it received a valid $(Propose, M)$ from $P_g$ where $\mathsf{hash}(M) = h^*$. Thus, at least $t+1$ honest parties must have received such a proposal.
\end{proof}

\begin{lemma}\label{lemma:HADD-input}
    If any honest node executes $\textsf{ConcurrentADD}(M)$ with $M \neq \perp$, then at least $t+1$ honest parties will also initiate $\textsf{ConcurrentADD}(M)$, while all other honest parties will initiate $\textsf{ConcurrentADD}(\perp)$.
\end{lemma}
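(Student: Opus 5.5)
The plan is to argue directly from the structure of Step 2 of Protocol~\ref{protocol:broadcast-long-concurrent}, using the agreement property of $\textsf{ConcurrentRB}$ (Theorem~\ref{thm:concurrent-RB}), Lemma~\ref{lemma:RB-threshold}, and collision resistance of $\mathsf{hash}$. Suppose some honest $P_i$ executes $\textsf{ConcurrentADD}(M)$ with $M\neq\perp$ in a fixed instance. By Step 2, $P_i$ has accepted some hash $h^*$ from $\textsf{ConcurrentRB}$, has received $(Propose, M)$ from $P_g$ (via $\textsf{ConcurrentRMT}$, so by unforgeability this really came from $P_g$'s instance-local sending, or $P_g$ is corrupted), and $\mathsf{hash}(M)=h^*$.

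First I will use the agreement property of $\textsf{ConcurrentRB}$ in this instance to conclude that every honest party eventually accepts the same $h^*$, and accepts no other hash. Hence every honest party eventually reaches Step 2 with the same $h^*$ and starts $\textsf{ConcurrentADD}$ with either some $M'$ satisfying $\mathsf{hash}(M')=h^*$ or with $\perp$. Second, I invoke Lemma~\ref{lemma:RB-threshold}: since an honest party accepted $h^*$, at least $t+1$ honest parties received some $(Propose, M')$ from $P_g$ with $\mathsf{hash}(M')=h^*$; each of these takes branch 1 of Step 2 and starts $\textsf{ConcurrentADD}(M')$.

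The main obstacle is showing that all these $M'$ equal $M$, i.e., that no two honest parties start ADD with different non-$\perp$ messages. A corrupted $P_g$ could send distinct proposals $M\neq M'$ to different honest parties via $\textsf{ConcurrentRMT}$. Here I rely on collision resistance of $\mathsf{hash}$ against the polynomial-time adversary: if two honest parties held $M\neq M'$ with $\mathsf{hash}(M)=\mathsf{hash}(M')=h^*$, the adversary would have produced a collision, which happens with at most negligible probability. So, except with negligible probability, every honest party in branch 1 uses exactly $M$; there are at least $t+1$ of them, and every remaining honest party falls into branch 2 and runs $\textsf{ConcurrentADD}(\perp)$. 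A subtle point to state explicitly is that an honest party may receive $(Propose,\cdot)$ only after accepting $h^*$; I will read Step 2 as being evaluated at acceptance time, so such a party is in the $\perp$ group. This does not affect the count, because the $t+1$ parties guaranteed by Lemma~\ref{lemma:RB-threshold} hold valid proposals and, under the intended reading that the proposal arrives before acceptance, run $\textsf{ConcurrentADD}(M)$. If needed, I would strengthen the wording so that parties who echoed $h^*$ are exactly those holding $M$.
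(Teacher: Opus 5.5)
Your proposal is correct relative to the paper's earlier results and follows the paper's proof step for step: agreement of \textsf{ConcurrentRB} fixes a common $h^*$, Lemma~\ref{lemma:RB-threshold} supplies $t+1$ honest holders of a preimage, and the case split of Step~2 sends everyone else to $\textsf{ConcurrentADD}(\perp)$. Your explicit use of collision resistance is a useful addition, since the paper leaves it implicit here and only names it in the proof of Theorem~\ref{thm:long-RB}.

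The timing point you flag is equally implicit in the paper, and it is covered by the \emph{proof} of Lemma~\ref{lemma:RB-threshold}: there an honest echoer is asserted to have received the proposal before echoing, and it echoes only while $S^i_i=\{initial,\perp\}$, hence before it commits and accepts. Be aware, however, that this assertion does not match Step~2 of Protocol~\ref{protocol:broadcast}, which triggers an echo on a majority of \emph{receive} messages rather than on direct receipt. For $n=7$, $t=2$, $c=0$, a corrupted sender can give $M$ to only two honest parties and, with the other corrupted party also sending \emph{receive}, still make every honest party echo and accept $h^*$. So your closing idea of forcing every honest echoer to hold $M$ is a repair that the paper's argument needs just as much as yours.
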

\begin{proof}
    If an honest party $P_i$ executes $\textsf{ConcurrentADD}(M)$, it must have accepted $h^* = \mathsf{hash}(M)$ from $\textsf{ConcurrentRB}$. 
    By the \emph{Agreement} property of $\textsf{ConcurrentRB}$, every honest party eventually accepts the same $h^*$. 
    By Lemma~\ref{lemma:RB-threshold}, at least $t+1$ honest parties received $(Propose, M)$ from the sender. 
    These $t+1$ parties will verify $\mathsf{hash}(M) = h^*$ and thus initiate $\textsf{ConcurrentADD}(M)$. 
    Any honest party that did not receive $M$ (or received a different message $M'$ such that $\mathsf{hash}(M') \neq h^*$) will initiate $\textsf{ConcurrentADD}(\perp)$ in accordance with Step 2.2.
\end{proof}

\begin{lemma}\label{lemma:HADD-output}
    If an honest node outputs $M$, then no honest node will output $M' \neq M$, and every honest node will eventually output $M$.
\end{lemma}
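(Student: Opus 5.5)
The plan is to reduce the lemma to the guarantee of ADD stated earlier: if at least $t+1$ honest parties take $M$ as input and every other honest party takes $M$ or $\perp$, then all honest parties eventually reconstruct and output $M$. The main step is to show that this same guarantee survives in $\textsf{ConcurrentADD}$, where every point-to-point message of ADD is sent through $\textsf{ConcurrentRMT}$ (Protocol~\ref{protocol:RMT-long-concurrent}). Here I would reuse the simulation argument of Theorem~\ref{thm:concurrent-RMT} essentially verbatim. The correctness and unforgeability of RMT under concurrent composition (Theorem~\ref{thm:long-RMT}) let us map any adversary $\mathcal{A}(t,c)$ against the fixed instance of $\textsf{ConcurrentADD}$ to an adversary against a single, stand-alone execution of ADD in an asynchronous network with $t$ corruptions. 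The induction runs over message-acceptance events, and it shows that honest views coincide in the two executions, so every property of ADD transfers to $\textsf{ConcurrentADD}$ for each instance. The only new point is that this equivalence holds except with negligible probability, because collision-freeness of the proof scheme is computational; I would note this explicitly, since the adversary is now polynomial-time.

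With that in hand, suppose an honest node outputs $M$ in $\textsf{ConcurrentADD}$. First I would show that some honest node started $\textsf{ConcurrentADD}$ with a non-$\perp$ input. If every honest party started with $\perp$, I would argue, following the structure of ADD in Appendix~\ref{appendix:ADD}, that honest parties never dispatch encoded symbols, so no honest output can arise. This step is the delicate one, and I flag it below.

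Given a non-$\perp$ input, Lemma~\ref{lemma:HADD-input} says the input is some $M^\ast$ held by at least $t+1$ honest parties, with all other honest parties holding $\perp$. Moreover, $M^\ast$ is unique: by Lemma~\ref{lemma:accept-equal} / Theorem~\ref{thm:concurrent-RB}, all honest parties accept the same hash $h^\ast$, and collision resistance of $\mathsf{hash}$ rules out two distinct messages with hash $h^\ast$. So the precondition of ADD holds with message $M^\ast$. The ADD guarantee, transferred through the previous step, then gives that every honest party eventually outputs $M^\ast$. In particular the honest node that output $M$ must have $M = M^\ast$, and no honest node outputs any $M' \neq M$. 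This yields both the consistency half and the totality half of the statement.

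I expect the main obstacle to be the ``all-$\perp$'' case together with the exact form of the ADD guarantee. The cited ADD property only addresses the situation where its precondition holds; it does not by itself say what happens when every honest input is $\perp$. If the appendix version of ADD does not immediately show that no honest output is possible there, I would avoid the issue by arguing through Step 2 of Protocol~\ref{protocol:broadcast-long-concurrent}. An honest party invokes $\textsf{ConcurrentADD}$ only after accepting $h^\ast$ from $\textsf{ConcurrentRB}$. By Lemma~\ref{lemma:RB-threshold}, such acceptance already forces at least $t+1$ honest parties to hold a proposal $M$ with $\mathsf{hash}(M) = h^\ast$, and those parties start $\textsf{ConcurrentADD}(M)$. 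So the all-$\perp$ case simply cannot occur once any honest party runs ADD, and the argument above goes through.
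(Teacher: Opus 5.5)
Your proposal is correct and takes the paper's route: combine Lemma~\ref{lemma:HADD-input} (at least $t+1$ honest parties input a single $M$, all other honest parties input $\perp$) with the ADD reconstruction guarantee, carried over to $\textsf{ConcurrentADD}$ by the RMT simulation argument that the paper packages as Lemma~\ref{lemma:HADD-concurrent}. You are more explicit than the paper on three points it leaves implicit: you identify the output $M$ with the common input $M^\ast$, you rule out the all-$\perp$ case via Step 2 and Lemma~\ref{lemma:RB-threshold}, and you note that collision resistance and the computational proof scheme make the guarantee hold only up to negligible probability.
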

\begin{proof}
    The \textsf{ConcurrentADD} primitive guarantees that if at least $t+1$ honest parties provide $M$ as input and all honest parties provide either $M$ or $\perp$, then all honest parties will eventually reconstruct and output $M$. (Lemma 3.3 in~\cite{das2021asynchronous}) 
    From Lemma~\ref{lemma:HADD-input}, we have at least $t+1$ honest parties inputting $M$ and no honest party will input a different $M' \neq M$ into $\textsf{ConcurrentADD}$. Consequently, the reconstruction property of $\textsf{ConcurrentADD}$ ensures all honest parties converge on $M$.
\end{proof}

Also, by Theorem~\ref{thm:concurrent-RMT}, we have the following lemma:
\begin{lemma}\label{lemma:HADD-concurrent}
    If $\textsf{ConcurrentADD}$ is secure in a standalone execution under $n > 3t$ faults, it remains secure under concurrent composition with $n > 2c + 2t + 1$ and $n > 3t$.
\end{lemma}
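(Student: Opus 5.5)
The plan is to derive the lemma directly from Theorem~\ref{thm:concurrent-RMT}, treating ADD as a black box in the same way a BA protocol is treated there. ADD (Algorithm 1 of~\cite{das2021asynchronous}) is an asynchronous protocol whose only interaction between parties is point-to-point messaging. Its security guarantee is stated entirely in terms of the honest parties' outputs: if at least $t+1$ honest parties input $M$ and all others input $M$ or $\perp$, then every honest party eventually outputs $M$. So ADD fits the hypotheses of Theorem~\ref{thm:concurrent-RMT}, with one change: the theorem is phrased for BA, so I would first note that its proof never uses the BA properties themselves. It only establishes that the honest inner machines $ITM^i_{*j}$ in a fixed instance $\Pi_j$ have views identical to those of $ITM^i_{*}$ in a standalone run of $\Pi_{*}$ against a simulated adversary $\mathcal{A}_{*}$ that corrupts the same $t$ parties. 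Any property that is a predicate on honest inputs and outputs therefore transfers.

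Concretely, I would fix one instance of $\textsf{ConcurrentADD}$ among arbitrarily many concurrent ones, and also among the concurrently running instances of the other sub-protocols, which all share channels. I would then rerun the event-by-event induction of Theorem~\ref{thm:concurrent-RMT} in three steps:
\begin{enumerate}
\item Honest senders' RMT transmissions are eventually accepted, by the correctness of the concurrent RMT, which needs $n>2c+2t+1$.
\item No honest party accepts a message from an honest sender that the sender did not initiate in this instance. This is the unforgeability of the concurrent RMT, and it is the step that neutralizes the reorder attacks, including cross-instance reordering.
\item Messages accepted from corrupted senders are reproduced by $\mathcal{A}_{*}$, which delivers them in the standalone execution.
\end{enumerate}
Once this is done, the standalone security of ADD under $n>3t$ gives the output guarantee for the fixed instance. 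The honest inputs are the same in both worlds, because the inner machines receive identical input tapes.

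The step I expect to be the main obstacle is the claim that unforgeability is \emph{instance-local}. The RMT message $(deliver, m_{k'}, \pi_{k'}, P_k, r, P_i, P_j)$ carries no session identifier. An honest sender in a different concurrent instance may legitimately send the same message to the same receiver, and reordering could then move copies into this instance. I would handle this the way the counting arguments earlier in the section do. Within the fixed instance, the only copies that arrive without an honest initiation in that instance are those coming from the at most $t$ corrupted parties or over the at most $c$ attacked channels. Forwards from honest parties along unattacked channels stay inside the instance. Since $t+c<\frac{n-1}{2}$, the majority threshold still cannot be reached. For the long-message variant I would add that, by collision-freeness, no valid proof exists for a symbol under root $r$ other than the committed one. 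Hence a forgery would need the adversary to supply a majority of validly proved symbols, which is again capped at $t+c$.

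Finally, the statement's hypotheses are exactly what each part needs. The bound $n>3t$ is required by the standalone ADD guarantee, and $n>2c+2t+1$ is required by the RMT layer.
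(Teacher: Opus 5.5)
Your proposal is correct and takes the same route as the paper, which obtains this lemma in one line by invoking Theorem~\ref{thm:concurrent-RMT} (the RMT compiler) on ADD. Your added observations make explicit what the paper leaves implicit: the view-equivalence argument in that theorem never uses BA-specific properties, so any guarantee stated in terms of honest inputs and outputs transfers; and the long-message RMT of Protocol~\ref{protocol:RMT-long-concurrent}, which is what $\textsf{ConcurrentADD}$ actually uses, inherits correctness and unforgeability through collision-freeness and the same $t+c$ counting bound.
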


Finally, we have the following:
\begin{theorem}
\label{thm:long-RB}
    Protocol~\ref{protocol:broadcast-long-concurrent}
    realizes RB under concurrent composition against any adversary $\mathcal{A}(t,c)$, provided that $n > 2c + 2t + 1$ and $n > 3t$; and for message length $|M| \in \Omega(\lambda n \log n)$, it has communication complexity $O(n|M| + \lambda n^3 \log n)$.
\end{theorem}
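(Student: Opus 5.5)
We prove Validity and Agreement separately and then count communication. The only composition-specific concern is that messages of different instances might be mixed. Every primitive used by Protocol~\ref{protocol:broadcast-long-concurrent} is already concurrently secure under $n>2c+2t+1$ and $n>3t$: $\mathsf{ConcurrentRMT}$ by Theorem~\ref{thm:long-RMT}, $\mathsf{ConcurrentRB}$ by Theorem~\ref{thm:concurrent-RB}, and $\mathsf{ConcurrentADD}$ by Lemma~\ref{lemma:HADD-concurrent}. So we may fix one instance and argue as in a standalone execution, assuming the hash is collision resistant against the polynomial-time adversary.

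\emph{Validity.} Let $P_g$ be honest with input $M$. By Correctness of $\mathsf{ConcurrentRMT}$, every honest $P_i$ eventually receives $(Propose,M)$ in this instance. By Unforgeability, it receives no other proposal attributed to $P_g$. Hence every honest party feeds $(initial,hash(M))$ into $\mathsf{ConcurrentRB}$, and that RB instance behaves as one with an honest sender broadcasting $h=hash(M)$. By Lemma~\ref{lemma:broadcast-accept}, all honest parties accept $h$. Each of them holds $M$ with $hash(M)=h$, so all start $\mathsf{ConcurrentADD}(M)$. Since $n-t\ge t+1$ honest parties input $M$, the ADD guarantee gives that all honest parties output $M$.

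\emph{Agreement.} Suppose some honest party outputs. Then it accepted some $h^*$ from $\mathsf{ConcurrentRB}$, and by the Agreement property of RB every honest party eventually accepts the same $h^*$. By Lemma~\ref{lemma:HADD-input}, at least $t+1$ honest parties start $\mathsf{ConcurrentADD}(M)$ with $hash(M)=h^*$, and the rest start with $\perp$. Collision resistance rules out two honest parties holding distinct $M,M'$ with the same hash. Lemma~\ref{lemma:HADD-output} then gives that all honest parties output the same $M$.

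The step needing most care is Lemma~\ref{lemma:RB-threshold}, namely that an RB instance fed by $(Propose,\cdot)$ messages only lets $h^*$ be accepted if at least $t+1$ honest parties actually hold a matching preimage. In Protocol~\ref{protocol:broadcast} the echo step requires more than $\frac{n-1}{2}$ receive messages, not direct receipt, so the count must be redone. Acceptance requires more than $\frac{2(n-1)}{3}$ commit states. Tracing back through Lemmas~\ref{lemma:message-to-state}, \ref{lemma:prepare-to-echo} and~\ref{lemma:ready-equal}, some honest party saw more than $\frac{2n-1}{3}$ prepare states, so more than $\frac{2n-1}{3}-t$ honest parties sent echo for $h^*$. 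Each of these saw more than $\frac{n-1}{2}$ receive messages. I would bound the number of receive messages that corruptions or attacked channels can fabricate by $t+c<\frac{n-1}{2}$. It then follows that at least one honest party genuinely sent $(receive,h^*,P_g)$, and therefore holds $M$ with $hash(M)=h^*$.

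To strengthen this to $t+1$ such parties, I would use the $\frac{2n-1}{3}$ prepare threshold together with $n>3t$. If this is delicate, the fallback is to note that the ADD guarantee only needs enough correct holders of $M$ to reconstruct. Alternatively, the RB step-2 threshold can be raised so that more than $\frac{n-1}{2}$ receive messages, minus the $t+c$ forged ones, forces $t+1$ honest holders. Since $n>2c+2t+1$ gives $\lfloor\frac{n-1}{2}\rfloor+1-(t+c)\ge 1$, this fallback may need the explicit argument.

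\emph{Complexity.} There are three cost components.
\begin{itemize}
\item Step 0 sends $n$ RMTs of $M$. By Theorem~\ref{thm:long-RMT} this costs $n\cdot O(|M|+\lambda n\log n)=O(n|M|+\lambda n^2\log n)$.
\item $\mathsf{ConcurrentRB}$ on a $\lambda$-bit hash costs $O(n^3\lambda+n^3\log n)$ by Theorem~\ref{thm:concurrent-RB}.
\item ADD sends $O(n^2)$ point-to-point messages, each of size $O(|M|/n+\lambda\log n)$ (symbol plus Merkle proof). After RMT compilation these cost $O(|M|/n+\lambda n\log n)$ each, for a total of $O(n|M|+\lambda n^3\log n)$.
\end{itemize}
Summing, and using $|M|\in\Omega(\lambda n\log n)$ so that the long-message regime of Theorem~\ref{thm:long-RMT} applies to each ADD share, gives $O(n|M|+\lambda n^3\log n)$.
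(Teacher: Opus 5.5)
\textbf{There is a genuine gap: Lemma~\ref{lemma:RB-threshold} (at least $t+1$ honest holders) is unproven, and none of your three routes closes it.} Your outline (validity, agreement via RB agreement plus Lemma~\ref{lemma:RB-threshold}, and the three cost terms) is the paper's. You rightly spot its weak point. In Protocol~\ref{protocol:broadcast} an honest party echoes after more than $\frac{n-1}{2}$ receive messages, not after hearing from $P_g$ directly, so the paper's justification of Lemma~\ref{lemma:RB-threshold} does not apply.

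The $\frac{2n-1}{3}$ prepare threshold with $n>3t$ gives at least $t+1$ honest \emph{echoers}. But an echoer needs only $\lfloor\frac{n-1}{2}\rfloor+1$ receive messages, and $t$ of these may come from corrupted parties, so far fewer honest \emph{holders} suffice. Concretely, let $n=4$, $t=1$, $c=0$ (both hypotheses hold), with $P_g$ corrupted and $(Propose,M)$ delivered by $\mathsf{ConcurrentRMT}$ only to the honest $P_1$:
\begin{enumerate}
\item The receive messages of $P_1$ and $P_g$ exceed $\frac{n-1}{2}=1.5$, so the other two honest parties echo $h^*$.
\item With $P_g$'s echo, every honest party sees $3>\frac{7}{3}$ prepare states, so all commit.
\item With $3>2$ commit states, all accept $h^*$, yet only one honest party holds $M$.
\end{enumerate}

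Your second fallback then fails. In ADD the two non-holders wait for $t+1=2$ identical DISPERSE shares, receive only $P_1$'s, and block forever. Meanwhile $P_1$ outputs $M$ immediately, so RB agreement is violated.

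Your third fallback is to raise the Step~2 threshold so that one echoer's count certifies $t+1$ honest holders. That needs at least $2t+c+1$ receive messages. With an honest sender, an honest party is guaranteed only $n-t-c$, so validity would require $n\ge 3t+2c+1$. The hypotheses do not give this (e.g.\ $n=10$, $t=3$, $c=1$).

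So the missing step is not just delicate: for Protocol~\ref{protocol:broadcast-long-concurrent} as written, the claim is false. The paper's proof obtains it only by treating the echo as triggered directly by the sender. The same mechanism also undermines your validity sentence that all honest parties hold $M$ and start $\mathsf{ConcurrentADD}(M)$. Under asynchrony, an honest party can accept $h$ from others' receive messages before its own $(Propose,M)$ arrives, and it then starts $\mathsf{ConcurrentADD}(\perp)$. The same $n=4$ setting shows this, with $P_g$ honest, the corrupted party replaying $(receive,h,P_g)$, and the RMT deliveries delayed.

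\textbf{Suggested repair.} Gate the echo on possession: an honest $P_i$ sends $(echo,h,P_i)$ only once it has itself received $(Propose,M)$ with $\mathsf{hash}(M)=h$. This is what happens in Bracha's RB run over $\mathsf{ConcurrentRMT}$, where the echo is triggered by the sender's message itself. Validity survives: with an honest sender, every honest party eventually holds $M$ and collects at least $n-t-c>\frac{n-1}{2}$ receive messages. Consider the more than $\frac{2n-1}{3}-t\ge t+\frac{1}{3}$ honest parties whose prepare states the first honest readier sees (your count via Lemma~\ref{lemma:prepare-to-echo}). Under gating these are holders who had $M$ before they could accept $h^*$. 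Hence at least $t+1$ honest parties start $\mathsf{ConcurrentADD}(M)$, in both the validity and the agreement argument, and your outline then goes through.
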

\begin{proof}
    \emph{Validity:} If the sender $P_g$ is honest and broadcasts $M$, then by the \emph{Correctness} of \textsf{ConcurrentRMT}, all honest parties receive $(Propose, M)$. 
    Consequently, all honest parties initiate \textsf{ConcurrentRB} with $h = \mathsf{hash}(M)$. By the \emph{Validity} of \textsf{ConcurrentRB}, all honest parties accept $h$. 
    Since they all possess $M$, they all initiate \textsf{ConcurrentADD}(M). By the properties of \textsf{ConcurrentADD} with $n > 3t$, all honest parties will eventually output $M$.
    
    \emph{Agreement:} Suppose an honest party $P_i$ outputs $M$. 
    This implies $P_i$ accepted $h = \mathsf{hash}(M)$ from \textsf{ConcurrentRB}.
    By the \emph{Agreement} property of \textsf{ConcurrentRB}, any other honest party $P_j$ must have accepted the same $h$. 
    As shown in Lemma~\ref{lemma:HADD-output}, the collision resistance of the hash function ensures that no honest party inputs a different $M' \neq M$ into \textsf{ConcurrentADD}. 
    The security of \textsf{ConcurrentADD} under concurrent composition (Lemma~\ref{lemma:HADD-concurrent}) then ensures that all honest parties output the same message $M$.

    \emph{Complexity:} 
    The original ADD protocol involves $n^2$ point-to-point communications.
    For messages of length $|M| \in \Omega(\lambda n \log n)$, each communication in the original ADD protocol satisfies the length condition required for \textsf{ConcurrentRMT}.
    Consequently, applying our \textsf{ConcurrentRMT} compiler to ADD transforms its communication complexity from $O(n|M| + \lambda n^2)$ to $O(n|M| + \lambda n^3 \log n)$ for \textsf{ConcurrentADD}.
    
    The communication complexity of Protocol~\ref{protocol:broadcast-long-concurrent} consists of three parts: 
    \begin{itemize}
        \item the $n$ invocations of $\textsf{ConcurrentRMT}$ in Step 0 requires $n \cdot O(|M| + \lambda n \log n)$;

        \item one global invocation of 
    $\textsf{ConcurrentRB}$ for a hash requires $O(n^3 \lambda + n^3 \log n)$; and 

        \item for $|M| \in \Omega(\lambda n \log n)$, one global invocation of the $\textsf{ConcurrentADD}$ protocol requires  $O(n|M| + \lambda n^3 \log n)$.
    \end{itemize}
Adding them together, therefore, the total communication complexity of the protocol is $O(n|M| + \lambda n^3 \log n)$. 
\end{proof}

We now further expand the selection of the optimal RB protocol under different message lengths as follows:
\begin{itemize}
    \item Short message: For messages of length $|M| \in O(\lambda \log n)$, we use Protocol~\ref{protocol:broadcast} and the communication complexity is $O(n^3|M| +  n^3 \log n)$;
    \item Long message: For messages of length $|M| \in \Omega(\lambda  \log n) \cap O(\lambda n \log n)$, we use Bracha's RB protocol compiled with our RMT Protocol~\ref{protocol:RMT-long-concurrent}, and the communication complexity is $O(n^2|M| + \lambda n^3 \log n)$;
    \item Very long message: For messages of length $|M| \in \Omega(\lambda n \log n)$, we use Protocol \ref{protocol:broadcast-long-concurrent} and the communication complexity is $O(n|M| + \lambda n^3 \log n)$.
\end{itemize}

\subsection{Complexity Analysis of Compiled Protocols}

Table~\ref{table:before-after} in Section~\ref{sec:intro} summarizes the communication complexity of several representative asynchronous consensus protocols before and after applying our compilers.
The compiled complexity depends on the original protocol's message length and the appropriate choice of our primitives (short, long, or very long message versions).
Indeed, even within the same protocol, different operations may have different message lengths and correspond to different primitives in our transformation. 
Noticeably, the primitives' communication complexity typically has a message-dependent component and an additive message-independent component. Thus the compiled complexity is not calculated by direct multiplication of the primitive's complexity with the number of times when they are applied. Instead, it can be upper-bounded by the sum of the message-dependent costs across all invocations, where the total message length is the original protocol's communication complexity, and the cumulative message-independent cost of the primitives, which is the number of messages times the message-independent component.

We now elaborate on how our compiled protocol is derived for each original protocol, as well as the resulting communication complexity.

\paragraph{\textbf{Binary BA Protocol}}
The protocol of Huang et al.~\cite{huang2024byzantine} is the first binary asynchronous BA protocol that achieves polynomial communication complexity without requiring any cryptographic assumption.
Since the protocol operates on messages of single bits, we directly replace all point-to-point message transmissions with our short message RMT primitive, which has complexity $O(n|M| + n\log n)$.
When the number of Byzantine parties satisfies $n = 3t + 1$, the original protocol achieves communication complexity $\tilde{O}(n^{12})$, where $\tilde{O}$ omits poly-logarithmic factors in $n$.
Therefore, the complexity of the compiled protocol is $\tilde{O}(n^{13})$ for binary BA, with an extra $\log n$ factor absorbed by the $\tilde{O}$ notation.
For multi-bit inputs, the compiled protocol directly executes multiple instances concurrently, yielding compiled complexity $\tilde{O}(n^{13} |L|)$, where $|L|$ is the input length.

\paragraph{\textbf{Asynchronous Common Subset using Hash}}
The protocol of Das et al.~\cite{das2024asynchronous} achieves Asynchronous Common Subset (ACS)
with communication complexity of $O(\lambda n^3)$ for short inputs, 
relying only on cryptographic hash functions modeled as a random oracle.
The protocol consists of a standard ACS that calls $n$ number of reliable broadcasts, and an Index ACS, which further calls Index Validated Asynchronous Byzantine Agreement (VABA) and other subprotocols, including Asynchronous Secret Key Sharing (ASKS), Index Gather, and Reliable Agreement.
\begin{itemize}
    \item For the case where $|L| \in O(\lambda \log n)$, we directly replace all point-to-point message transmissions with our short message RMT primitives, which have complexity $O(n|M| + n \log n)$.
    As the original protocol achieves a message complexity of $O(n^3)$ and a total communication complexity of $O(\lambda n^3)$,
    after compilation the concurrently secure protocol has message-dependent complexity component $O(\lambda n^3) \cdot O(n) = O(\lambda n^4)$ and message-independent complexity component $O(n^3) \cdot O(n \log n) = O(n^4 \log n)$.
    Therefore, assuming $\lambda > \log n$ as common practice, the compiled protocol achieves total communication complexity $O(\lambda n^4 + n^4 \log n) = O(\lambda n^4)$.
    
    \item For the case where $|L| \in O(\lambda n \log n)$, in the original protocol the $n$ reliable broadcasts actually have a total communication complexity of $O(\lambda n^3 \log n)$, which dominates the $O(\lambda n^3)$ communication of the Index ACS component.
    To achieve concurrent security,  we use our RB compiler for very long message, with complexity $O(n|M| + \lambda n^3 \log n)$,
    thus the reliable broadcast yields communication complexity $O(n) \cdot O(\lambda n^3 \log n) = O(\lambda n^4 \log n)$.
    At the same time, the Index ACS component contributes $O(\lambda n^4)$ communication complexity for the compiled protocol, same as in the previous cases.
    Therefore, the total communication complexity of the compiled protocol is $O(\lambda n^4 \log n)$.
    
    \item For the case where $|L| \in \Omega(\lambda n^2 \log n)$, the original protocol has complexity $O(n^2|L|)$, as the reliable broadcast of long input messages dominates all other components. To achieve concurrent security, we also use our RB primitive for very long messages, and the additive overhead $O(\lambda n^3 \log n)$ is negligible compared to $O(n^2|L|)$ when $|L| \in \Omega(\lambda n^2 \log n)$.
    Therefore, the total communication complexity of the compiled protocol remains $O(n^2|L|)$.
\end{itemize}

\paragraph{\textbf{Asynchronous Common Subset using Public-key Infrastructure}}
The protocol of Guo et al.~\cite{guo2022speeding} achieves Asynchronous Common Subset based on threshold public key encryption. It has 
communication complexity $O(n^2|L|+\lambda n^3\log n)$, achieving optimal complexity for the broadcast phase for long inputs $|L| \in \Omega(\lambda n \log n)$, and optimal message complexity of $O(n^2)$.
The protocol structure consists of three main phases: a broadcast phase using the $n$ parallel PB (Provable Broadcast) components, each with communication complexity $O(n|L|+\lambda n)$, an MVBA phase using Speeding MVBA (sMVBA) protocol with complexity $O(\lambda n^2)$, and a recovery phase for handling weak consistency of PB that contributes $O(n^2|L|+\lambda n^3\log n)$ complexity.
\begin{itemize}
    \item For the case where $|L| \in O(\lambda n \log n)$, the original protocol achieves communication complexity of $O(\lambda n^3 \log n)$.
    For the compiled protocol, 
    we use our short-message RMT primitive for sMVBA, very-long-message RB for PB, and very-long-message RMT for messages in the recovery phase.
    
    By replacing PB with the very-long-message RB protocol, the complexity of the broadcast phase becomes $O(n) \cdot O(\lambda n^3 \log n) =O(\lambda n^4 \log n)$.
    The sMVBA phase, compiled with the short-message RMT protocol, contributes at most $O(\lambda n^2) \cdot O(n \log n) = O(\lambda n^3 \log n)$ complexity.
    Meanwhile, the recovery phase has message length $O(|L|+ \lambda \log n)$ and at most $O(n^2)$ messages, so after compiling with the very-long-message RMT protocol, it contributes a complexity of $O(\lambda n^3 \log n)$.
    Therefore, the total communication complexity of the compiled protocol is $O(\lambda n^4 \log n)$.

    \item For the case where $|L| \in \Omega(\lambda n^2 \log n)$, the original protocol achieves complexity $O(n^2|L|)$, as this term dominates the $\lambda n^3 \log n$ term when $|L| \in \Omega(\lambda n^2 \log n)$.
    To achieve concurrent security, the usage of our primitives is the same as above. Most of the analysis remains the same, except that the recovery phase now contributes a complexity of $O(n^2|L|)$.
    Thus the total communication complexity of the compiled protocol is $O(n^2|L|)$.
\end{itemize}
\section{Conclusion and Future Directions}
The primary contributions of this work include the introduction of a novel adversarial model that simultaneously enables the corruption of parties and communication channels. 
The reorder attack is both theoretically interesting and practically viable, 
enhancing the security analysis for Byzantine Agreement protocols under both parallel and concurrent executions.

Building upon this model, our work establishes strong positive and negative results. On the one hand, authenticated Byzantine Agreement becomes insecure when either $n \leq 3t$ or $n \leq 2c + 2t + 1$. 
While on the other hand, for unauthenticated Byzantine Agreement, secure protocols exist under both parallel and concurrent compositions when $n > \max\{3t, 2c+2t+1\}$. 
Notably, these findings provide tight conditions for the security of Byzantine Agreement under compositional executions ---a fundamental advance in understanding protocol resilience against corruption and reorder attacks.

The framework and results established in this paper motivate several promising directions for future research:
\begin{itemize}
    \item In terms of positive results, we provide black-box compilers for both parallel and concurrent executions of BA protocols. 
    For sufficiently long messages, our optimized RB and RMT protocols achieve communication complexities of $O(n|M|)$ and $O(|M|)$, respectively, modulo an additive term that doesn't rely on $|M|$. Ignoring the additive term, they match the asymptotic optimality of standard (non-compositional) RB and RMT protocols. 
    This suggests that further improvements in communication complexity would require fundamentally different technical insights beyond current approaches, which would be both interesting and challenging to investigate. 

    \item
    Another important open question is to establish tight lower bounds on the communication complexity of compositionally-secure BA protocols, which would provide a theoretical foundation for understanding the inherent cost of security under reorder attacks.
    Intuitively, such attacks necessitate additional communication to preserve security, but the exact complexity trade-offs require further studies.


    \item Finally, in this work we focus on the standard notion of validity commonly considered in the literature for BA protocols. Since other validity notions such as {\em weak validity} have also been studied \cite{lamport1983weak}, we are interested in whether our conclusions still hold with respect to those notions. Intuitively, weaker validity conditions would relax the constraints that BA protocols must satisfy, hence may allow the impossibility results to be circumvented. Thus extending our current model to weaker validity notions constitutes a promising direction for the future.
\end{itemize}

\bibliographystyle{IEEEtran}
\bibliography{ref}

\bigskip
\appendix
\section*{Consensus Protocol in Figure 4 of~\cite{bracha1987asynchronous}}\label{appendix:bracha-BA}
This section presents the unauthenticated BA protocol from Bracha 1987~\cite{bracha1987asynchronous}, which serves as a building block in our black-box compiler for concurrent composition.
The protocol operates in phases, where each phase consists of three steps of reliable broadcast operations, and uses coin tossing to ensure termination.

\begin{algorithm}[ht]
  \caption{The Consensus Protocol in Figure 4 of~\cite{bracha1987asynchronous}}
  \label{protol:consensus}
  
  \textbf{Phase}(i): (by process $p$)\;
  
  \textbf{Step 1:} $Broadcast (p, 3i+1, value_p)$. Wait until validate $n-t$ $(3i+1)$-messages. 
    \begin{itemize}
        \item $value_p :=$ majority value of the $n-t$ validated messages.
    \end{itemize}
  \textbf{Step 2:} $Broadcast(p, 3i+2, value_p)$. Wait until validate $n-t$ $(3i+2)$-messages.
    \begin{itemize}
      \item \textbf{(i)} If more than $\frac{n}{2}$ of the messages have the same value $v$, then $value_p = (d,v)$.
      \item \textbf{(ii)} Otherwise, $value_p := value_p$.
    \end{itemize}
  \textbf{Step 3:} $Broadcast(p, 3i+3, value_p)$. Wait until validate $n-t$ $(3i+3)$-messages.
    \begin{itemize}
      \item \textbf{(i)} If validated more than $2t$ messages with value $(d,v)$ then $decision_p := value_p := v$.
      \item \textbf{(ii)} If validated more than $t$ messages with value $(d,v)$ then $value_p := v$.
      \item \textbf{(iii)}  Otherwise, $value_p := coin\_toss$ (0 or 1 with probability $\frac{1}{2}$).
    \end{itemize}
Go to round 1 of phase $i+1$ \;
\end{algorithm}

\section*{Asynchronous Data Dissemination}\label{appendix:ADD}
This section presents the Asynchronous Data Dissemination (ADD) protocol from Das et al.~\cite{das2021asynchronous}, which efficiently disseminates long messages in asynchronous networks.
The protocol uses Reed-Solomon error-correcting codes to encode messages into shares, and ensures that if at least $t+1$ honest parties provide the same message~$M$ as input, all honest parties eventually reconstruct and output $M$.

\begin{algorithm}[ht]
  \caption{Asynchronous Data Dissemination (ADD) of~\cite{das2021asynchronous}}
  \label{protol:ADD}

  \textbf{input $M_i$}: either $M_i = M$ or $M_i = \perp$
  
  if $M_i \neq \perp$ then
  \begin{itemize}
      \item Let $M' := [m_1, m_2, \ldots, m_{n}] := \mathsf{RSEnc}(M_i, n, t+1)$
  \end{itemize}
  
  if $M_i \neq \perp$ then
  \begin{itemize}
      \item Let $m^*_i := m_i$
      \item send $<DISPERSE, m_j>$ to node $j$ for every $j = 1,2,\ldots,n$
  \end{itemize}
  else
  \begin{itemize}
      \item upon receiving $t+1$ identical $<DISPERSE, m_i>$ do
      \item Let $m^*_i := m_i$.
  \end{itemize}
  
  // reconstruction phase

  send $<RECONSTRUCT, m^*_i>$ to all nodes \\
  if $M_i \neq \perp$ then
  \begin{itemize}
      \item output $M$ and return
  \end{itemize}

  Let $T := \{\}$ \\
  For every $<RECONSTRUCT, m^*_j>$ received from node $j$, add $(j, m^*_j)$ to $T$.
  For $0 \leq r \leq t$ do:
  \begin{itemize}
      \item Wait till $|T| \geq 2t+r+1$
      \item Let $p_r(\cdot) := \mathsf{RSDec}(t+1,r,T)$
      \item If $2t+1$ elements $(j,a) \in T$ satisfy $p_r(j)= a$ then output coefficient of $p_r(\cdot)$ as $M$ and return
  \end{itemize}

\end{algorithm}

\section*{Proof of Theorem \ref{thm:impossibility}}\label{appendix:proof of impossibility}

\begin{proof}

    For condition (1), the result follows directly from Theorem 1 in~\cite{lindell2006composition}.

    We prove the result for condition (2) by contradiction.
    Assume that there exists an authenticated Byzantine Agreement protocol $\Pi$ that remains secure under two parallel executions against an adversary $\mathcal{A}(t,c)$ with $n \leq 2c+2t+1$. We assume $c > 0$, otherwise we have $2t+1 \geq n$ and it can be reduced to condition (1).

    We consider two independent executions of $\Pi$: Let $A_1, A_2, \cdots, A_n$ and $B_1, \cdots, B_n$ be independent copies of $n$ parties participating in protocol $\Pi$. The independent copies mean that for each $i$, $A_i$ and $B_i$ are the same party that runs in two different parallel executions of $\Pi$. For the purpose of the proof below, we denote the set $T = \{A_2,\cdots,A_{\lfloor \frac{n+1}{2} \rfloor}\}$, $\widetilde{T} = \{B_2,\cdots,B_{\lfloor \frac{n+1}{2} \rfloor}\}$, $H = \{A_{\lfloor \frac{n+3}{2} \rfloor},\cdots,A_{n}\}$, 
    and $\widetilde{H} = \{B_{\lfloor \frac{n+3}{2} \rfloor},\cdots,B_{n}\}$. Note that we have $n \leq 2c+2t+1$, so we have $c + t \geq |T| = |\widetilde{T}|$ and $c + t \geq |H| = |\widetilde{H}|$. 
    
    Let $rounds(\Pi)$ denote the maximum number of communication rounds required for protocol $\Pi$ to achieve termination in any execution. By the termination guarantees of the agreement property of Byzantine Agreement protocols, $rounds(\Pi)$ is finite and well-defined. Furthermore, given the security assumption that $\Pi$ remains secure under parallel composition, it follows that $\Pi$ must terminate within $rounds(\Pi)$ rounds even when executed concurrently across multiple protocol instances.

    We now introduce an important abstract concept and will instantiate it in different ways in the proof. A system \textsf{X} is a tuple:
    \begin{itemize}
         \item A set $P=\{P_1,\dots,P_n\}$ of $n$ parties, each party $P_j$ runs two copies of ITMs, $A_j$ and $B_j$, for protocol $\Pi$; 
         \item An adversary $\mathcal{A}(t,c)$ controlling $t$ corrupted parties and reordering $c$ channels;
         \item Initial input values for all ITMs on their input tapes;
         \item A network topology governing inter-party connectivity.
    \end{itemize}

Intuitively, if the system contains a non-trivial adversary who corrupts some parties and reorders some channels, then the network topology is consistent with the adversary's reordering of the channels.
In some mental games, we may instantiate the system without an adversary, in which case the network topology dictates how the parties' communication channels are inter-connected crossing the two executions of $\Pi$.
    
    For any ITM $M$ run by a party in $P$, its \emph{view} in system \textsf{X}, denoted by $\mathsf{view}_X(M)$, consists of
the content of $M$'s input tapes and random tape, where the former in particular includes the initial input value and the messages received by $M$ from other parties during the execution of \textsf{X}.

    If all parties (including $\mathcal{A}(t,c)$) execute deterministically, then $ \mathsf{view}_X(M) $ is completely specified by the initial tape contents and the actions of the adversary. In contrast, if randomness is involved (with random tapes drawn uniformly at random), then $ \mathsf{view}_X(M) $ is a random variable defined on the corresponding probability space.

    \begin{figure}
    \centering
    \includegraphics[width=0.9\linewidth]{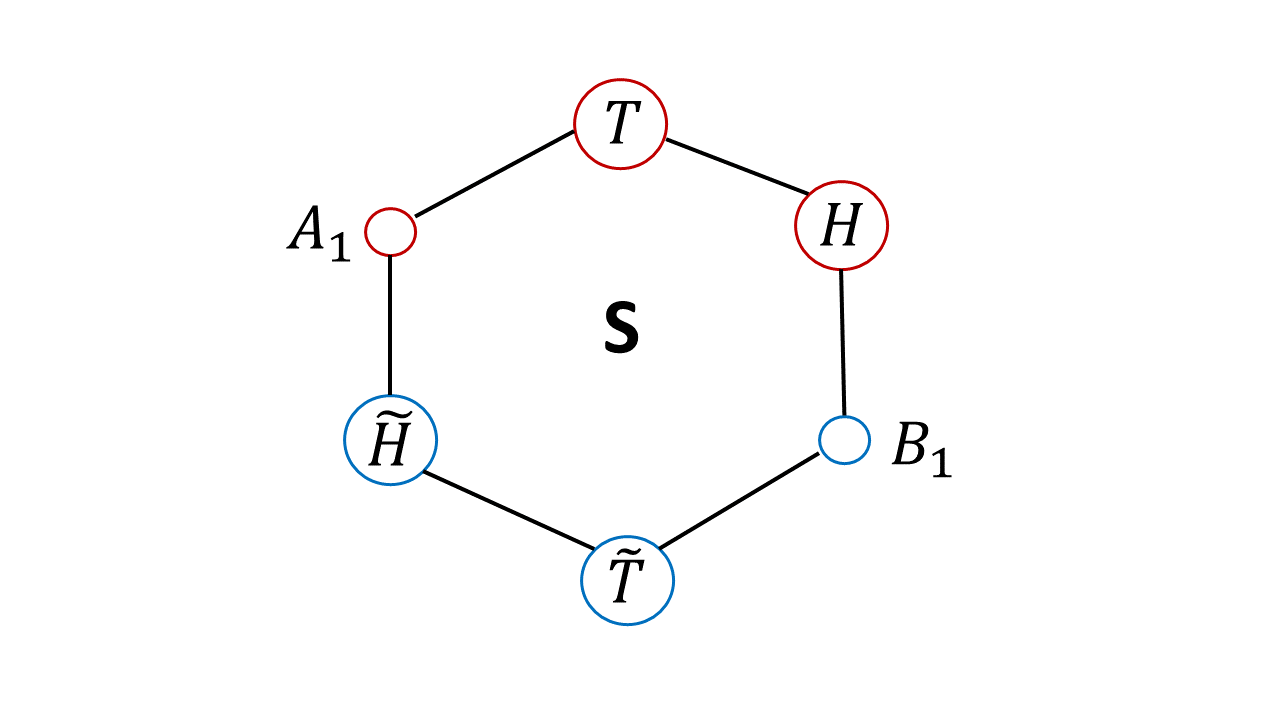}
    \caption{Network topology of System \textsf{S}.}
    \label{fig:system s}
\end{figure}

    We now instantiate the system \textsf{S} through the following configuration:
\begin{itemize}
    \item \textbf{Network Topology}: ITM $ A_1 $ establishes connectivity with the set $ \widetilde{H} $ (replacing its original link to $ H $), while ITM $ B_1 $ interfaces with $ H $ instead of $ \widetilde{H} $. This structural reconfiguration is illustrated in Figure~\ref{fig:system s}.
    
    \item \textbf{Initial Input Value}: 
    \begin{itemize}
        \item ITMs in $ \{A_1\} \cup T \cup H $ receive input $ 0 $;
        \item ITMs in $ \{B_1\} \cup \widetilde{T} \cup \widetilde{H} $ receive input $ 1 $.
    \end{itemize}

    \item \textbf{Adversary}: In this system, there is no adversary; formally, we have $t = 0$ and $c = 0$.

    \item \textbf{Protocol Execution}: All ITMs in $\mathsf{S}$ strictly adhere to the instruction of protocol $ \Pi $, simulating an \emph{authenticated Byzantine Agreement} environment. Specifically, each party:
    \begin{itemize}
        \item Generates messages as prescribed by $ \Pi $’s honest execution semantics;
        \item Sends these messages via the  network topology defined above.
    \end{itemize}
\end{itemize}

    Now we first state the following lemmas and we will prove them afterwards.
    \begin{lemma}
        In system \textsf{S}, all ITMs in $\{A_1\} \cup T$ output 0, and there exists at least one ITM in $H$ outputs 0. Symmetrically, all ITMs in $\{B_1\} \cup \widetilde{T}$ output 1, and there exists at least one ITM in $\widetilde{H}$ outputs 1.
        \label{lemma:system 2PBA}
    \end{lemma}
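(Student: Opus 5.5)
The plan is to realize system \textsf{S} \emph{inside} a legitimate two-instance parallel execution of $\Pi$ against an admissible adversary $\mathcal{A}(t,c)$. Then the assumed composition security of $\Pi$ forces validity on the relevant ITMs, and indistinguishability carries their outputs back to \textsf{S}. I treat the first half of the lemma; the second half follows by exchanging the roles of the two executions.

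Let $J = \{\lfloor \frac{n+3}{2}\rfloor,\dots,n\}$ be the index set of $H$, so $|J| = |H| \le c+t$. Consider a real parallel execution of two instances of $\Pi$, each party $P_j$ running $A_j$ in instance 1 and $B_j$ in instance 2. The adversary sets all inputs of instance 1 to $0$ and all inputs of instance 2 to $1$. It picks a set $J_c \subseteq J$ of size $\min\{c,|J|\}$ and attacks the channels $C_{1j}$ for $j\in J_c$. On these channels it swaps the two instances in both directions: whatever $A_1$ sends to $P_j$ is written on $B_j$'s input tape from $P_1$ and vice versa, and whatever $A_j$ sends to $P_1$ is delivered to $B_1$ and vice versa. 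It corrupts the remaining parties $P_j$, $j\in J\setminus J_c$; there are at most $|J|-c\le t$ of them when $|J|>c$, and none otherwise. Each corrupted $P_j$ runs $A_j$ and $B_j$ exactly as the honest code prescribes, using its own setup tape and fresh random tapes. The only deviation is that it feeds what it receives from $P_1$ in one instance to the ITM of the other instance, and sends its $A_j$-messages for $P_1$ over instance 2 and its $B_j$-messages for $P_1$ over instance 1. Because all parties, corrupted ones included, use their genuine keys and honest code, no signature forgery is needed. The setup tape is common to both instances, so these cross-delivered signed messages are exactly those that would be exchanged in \textsf{S}. This is the point where the stateless, shared-setup model is used.

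Next I will argue, by induction on rounds, that the joint distribution of views of all ITMs in this real execution equals that in \textsf{S}. Inputs and random tapes are distributed identically in the two systems. In every round, each ITM receives on each of its tapes the message produced by the same partner ITM as dictated by the topology of \textsf{S}: $A_1$ is linked with $\widetilde H$ and $B_1$ with $H$, and all other links are untouched. By the parallel-composition assumption, $\Pi$ terminates within $rounds(\Pi)$ rounds, so the views, and in particular the outputs, of all ITMs coincide in distribution.

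Finally I apply security to instance 1. Every honest party has input $0$ there, so validity forces all honest ITMs of instance 1 to output $0$. Party $P_1$ and the parties in the index set of $T$ are never corrupted, so $A_1$ and all of $T$ output $0$. Since $c>0$, the set $J_c$ is nonempty and contains an honest $P_j$, so some $A_j\in H$ outputs $0$. Transferring these outputs to \textsf{S} via the view equality gives the claim. The symmetric statement follows by the same argument with the inputs and instances interchanged, using validity of instance 2 with input $1$.

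The main obstacle I expect is the view-equivalence step. There I must check carefully that a corrupted party's swapped routing is indistinguishable from a channel reorder attack, in both directions and within the same round, so that no ITM's view depends on whether a link in $J$ was handled by corruption or by a channel attack. I must also confirm that the budget $|J\setminus J_c|\le t$ holds, which relies on $n\le 2c+2t+1$.
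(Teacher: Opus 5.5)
Your proposal is correct and is essentially the paper's own proof. The paper builds the same auxiliary two-instance execution (its system \textsf{2PBA}): it reorders the channels between $P_1$ and a set $C$ of $\min\{c,|H|\}$ parties from $H$'s index set, and corrupts the remaining parties of that index set, which fit in the budget $t$ because $n\le 2c+2t+1$. It then shows by induction on rounds that views coincide with those in \textsf{S}, and applies validity in each instance to obtain output $0$ for $A_1$, $T$, $ITM^C_0$ and output $1$ for $B_1$, $\widetilde{T}$, $ITM^C_1$. Your explicit use of $c>0$ to guarantee an honest ITM in $H$ matches the paper's standing assumption. Your view equality $\mathsf{view}_{2PBA}(H)=\mathsf{view}_S(H)$ is the correct one; the $\widetilde{H}$ on the right-hand side in the paper's text is a slip.
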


\begin{lemma}
    In system \textsf{S}, all ITMs in $\{A_1\} \cup \widetilde{H}$ and at least one ITM in $\widetilde{T}$ output the same value; Symmetrically, all ITMs in $\{B_1\} \cup H$ and at least one ITM in $T$ outputs the same value.
\label{lemma:system 2PBA'}
\end{lemma}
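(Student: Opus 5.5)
The plan is to prove Lemma~\ref{lemma:system 2PBA'} by an indistinguishability argument, in the same style as the one we expect for Lemma~\ref{lemma:system 2PBA}. We build a second system $\mathsf{S}'$: a genuine parallel execution of $\Pi$ with an admissible adversary $\mathcal{A}(t,c)$. In $\mathsf{S}'$ the ITMs $\{A_1\}\cup\widetilde{H}\cup\widetilde{T}$ form one honest-looking instance, and every ITM of $\{A_1\}\cup\widetilde{H}$ has exactly the view it has in $\mathsf{S}$.

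\textbf{Construction of $\mathsf{S}'$.} Consider the second execution (the $B$-copies), but identify party $P_1$'s ITM in that instance with $A_1$.
\begin{itemize}
\item We take $P_1$ and the parties of $H$ (equivalently of $\widetilde{H}$) to be honest.
\item Since $|T|\le t+c$, we split $T$ (the parties $P_2,\dots,P_{\lfloor (n+1)/2\rfloor}$) into at most $t$ corrupted parties and at most $c$ remaining parties.
\item For each remaining party $P_k$ we attack the channel $C_{1k}$ and reorder it. Party $P_1$'s ITM in this instance then talks with $A_k$ (the other copy) instead of $B_k$, exactly as the topology of $\mathsf{S}$ connects $B_1$ with $T$ and $A_1$ with $\widetilde{T}$.
\end{itemize}
Concretely, we reorder $C_{1k}$ so that $P_1$'s instance-1 ITM is linked to $P_k$'s instance-2 ITM and vice versa. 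For the instance $\{A_1\}\cup\widetilde{H}\cup\widetilde{T}$ we give $A_1$ input $0$ and $\widetilde{H},\widetilde{T}$ input $1$. The corrupted parties of $T$ simulate, in both instances, what their honest copies would send in $\mathsf{S}$.

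\textbf{Conditioning on parties of $T$.} If a member of $T$ is corrupted, its messages are fabricated by $\mathcal{A}$. It has access to the party's signing key because the party is corrupted, and it runs the honest code on the inputs of $\mathsf{S}$. If a member of $T$ is honest but behind an attacked channel, it behaves honestly, and reordering alone reproduces the $\mathsf{S}$ cross-wiring. In both cases every ITM in $\{A_1\}\cup\widetilde{H}$ sees a view distributed identically to its view in $\mathsf{S}$: same inputs, same random tapes, same messages on each tape. We establish this by induction on rounds, exactly as in the proof of Theorem~\ref{thm:parallel}.

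\textbf{Applying agreement.} In $\mathsf{S}'$ the set $\{A_1\}\cup\widetilde{H}$ consists of honest parties, plus at least one honest member of $\widetilde{T}$ if fewer than $|T|$ parties are corrupted, i.e.\ if $c>0$. The case $c=0$ was excluded earlier. Agreement for that instance then forces all of these ITMs to output the same value, and because the views coincide, the same outputs occur in $\mathsf{S}$. Choosing the reordered members of $T$ to be the honest ones gives the ``at least one ITM in $\widetilde{T}$'' clause. The symmetric claim for $\{B_1\}\cup H$ and $T$ follows by swapping the roles of the two instances.

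The main obstacle is bookkeeping for the channel budget. We must check that a single reordered channel $C_{1k}$ suffices to rewire all communication between $P_1$ and $P_k$ in both instances simultaneously. We must also check that no channel between $H$ and $\widetilde{T}$ needs attacking, which requires verifying that topology $\mathsf{S}$ only rewires links incident to $P_1$.
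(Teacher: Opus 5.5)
Your proposal is correct and follows essentially the same route as the paper: the paper's system $\mathsf{2PBA'}$ is your $\mathsf{S}'$. It uses the instance $\{A_1\}\cup\widetilde{H}\cup\widetilde{T}$, reorders up to $c$ channels $C_{1k}$ with $P_k$ from $T$, corrupts the remaining $T$-parties to replay their $\mathsf{S}$-behaviour, shows view equivalence round by round, and then applies Agreement rather than Validity. Two small slips: $\mathsf{S}$ links $A_1$ with $T$ and $B_1$ with $\widetilde{T}$, not the reverse (the rewiring you actually specify is the right one); and the inputs of the other instance must also be copied from $\mathsf{S}$ ($B_1$ gets $1$, $H\cup T$ get $0$) for the view equivalence to go through.
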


Combining Lemma~\ref{lemma:system 2PBA} with Lemma~\ref{lemma:system 2PBA'} we have $Output_S(\widetilde{H}) = Output_S(A_1) = 0$ and at least one ITM $\widetilde{h} \in \widetilde{H}$ satisfies $Output_S(\widetilde{h}) = Output_S(B_1) = 1$. 
Hence, we derive a contradiction on the output of $\widetilde{h}$, thereby completing the proof of the theorem. 
\end{proof}

Next, we complete the proofs of Lemma \ref{lemma:system 2PBA} and Lemma \ref{lemma:system 2PBA'}.

\begin{proof}[Proof of Lemma \ref{lemma:system 2PBA}]

To prove the lemma, we define a new system \textsf{2PBA} as follows and as shown in Figure~\ref{fig:system 2pba}:

\begin{figure}
    \centering
    \includegraphics[width=0.9\linewidth]{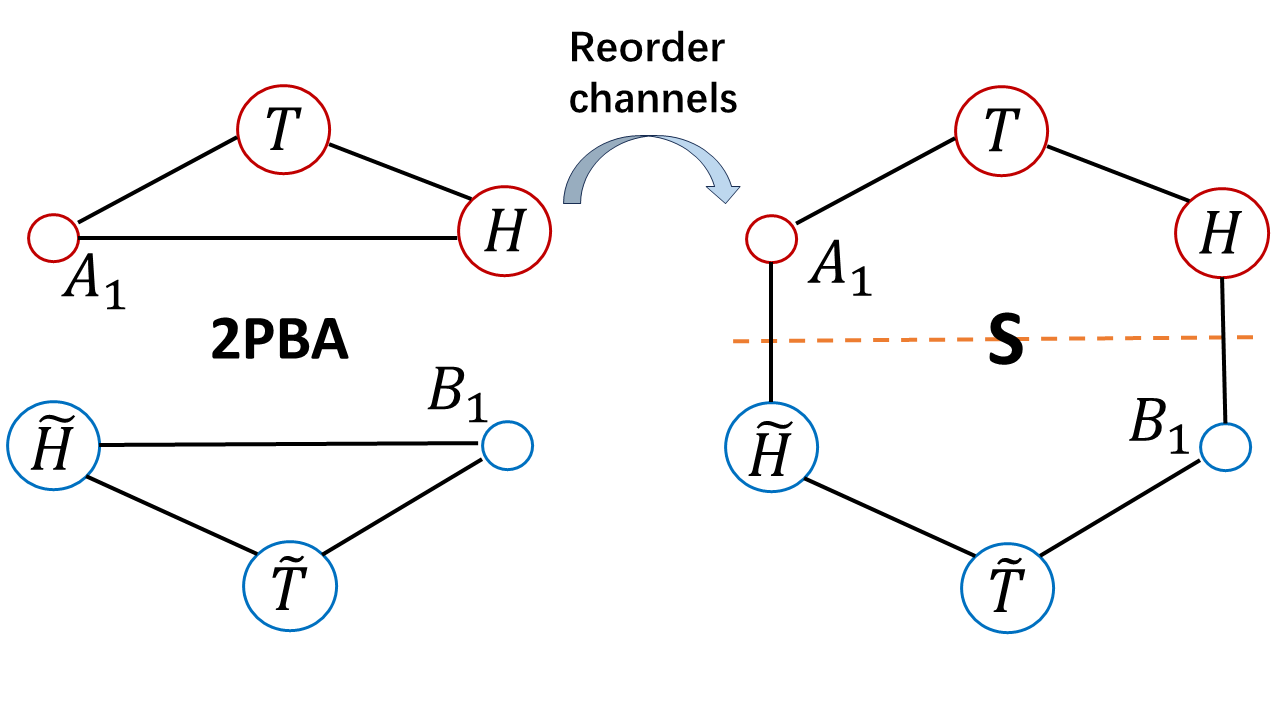}
    \caption{A party's view is identical between System \textsf{2PBA} and System \textsf{S}.}
    \label{fig:system 2pba}
\end{figure}

\begin{itemize}
    \item \textbf{Network Topology}: The system consists of two parallel executions of protocol $\Pi$, denoted $\Pi_0$ (consisted of $A_1$, $T$ and $H$) and $\Pi_1$ (consisted of $B_1$, $\widetilde{T}$ and $\widetilde{H}$). 
    
    \item \textbf{Initial Input Value}:
    \begin{itemize}
        \item ITMs in $ \{A_1\} \cup T \cup H $ receive input $ 0 $;
        \item ITMs in $ \{B_1\} \cup \widetilde{T} \cup \widetilde{H} $ receive input $ 1 $.
    \end{itemize}

    \item \textbf{Adversary}: The adversary $\mathcal{A}(t,c)$ chooses an arbitrary subset of parties $\{P_{\lfloor \frac{n+3}{2} \rfloor},\cdots,P_n\}$, denoted by $C$, which satisfies $|C| = \min\{c, n - \lfloor \frac{n+3}{2} \rfloor + 1\}$. 
    Then the adversary reorders all the channels between $C$ and $P_1$, and corrupts the parties in $\{P_{\lfloor \frac{n+3}{2} \rfloor},\cdots,P_n\} \setminus C$. As we have $2c+2t+1 \geq n$, the adversary can indeed corrupt all these parties. In the following proof, we denote $ITM^C_0 = \{ITM_0^j : j \in C\}$ as the set of ITMs run by $C$ in the protocol $\Pi_0$, and as the same, we denote $ITM^C_1 = \{ITM_1^j : j \in C\}$. 
    
    \item \textbf{Protocol Execution}:
    \begin{itemize}
        \item All honest parties in $\mathsf{S}$ strictly adhere to the instruction of protocol $ \Pi $;
        \item The corrupted parties reorder their channels between $P_1$, and besides this action, they adhere to the instruction of protocol $ \Pi $.
    \end{itemize}
\end{itemize}

We then prove the following view equivalences:
$\mathsf{view}_{2PBA}(A_1) = \mathsf{view}_S(A_1)$,
$\mathsf{view}_{2PBA}(T) = \mathsf{view}_S(T)$,
and $\mathsf{view}_{2PBA}(H) = \mathsf{view}_S(\widetilde{H})$.

We proceed by induction on the round counter $j \geq 1$.

\noindent \textbf{Base Case ($j=1$)}:
\begin{itemize}
    \item \emph{Message Generation}: All parties in both $\mathsf{2PBA}$ and $\mathsf{S}$ generate identical message sets due to protocol $\Pi$ strictly.
    
    \item \emph{Message Routing}:
    \begin{itemize}
        \item In $\mathsf{S}$: Messages propagate through the hexagonal graph.
        \item In $\mathsf{2PBA}$: The adversary $\mathcal{A}(t,c)$ enforces the following redirections:
        $$
        \begin{aligned}
            A_1 \to H \stackrel{\text{reorder}}{\mapsto} \widetilde{H}&, \quad 
            \widetilde{H} \to B_1 \stackrel{\text{reorder}}{\mapsto} A_1, \\
            A_1 \to T & , \quad T \to H \\
            H \to T & , \quad T \to A_1 \\
            H \to A_1 \stackrel{\text{reorder}}{\mapsto} B_1&, \quad 
            B_1 \to \widetilde{H} \stackrel{\text{reorder}}{\mapsto} H.
        \end{aligned}
        $$
    \end{itemize}
    After reordering, $B_1$ communicates with $H$ rather than $\widetilde{H}$, $A_1$ communicates with $\widetilde{H}$ rather than $H$. Other than that, all the ITMs strictly adhere to the instruction of $\Pi$. It is easy to see that the actual message routing is the hexagonal graph the same as in \textsf{S}.
    \smallskip
    
    \item \emph{View Equivalence}: The above routing rules induce identical message distributions at all parties' interfaces. Thus, we have
    $\mathsf{view}_{2PBA}(A_1) = \mathsf{view}_S(A_1)$,
    $\mathsf{view}_{2PBA}(T) = \mathsf{view}_S(T)$,
    and $\mathsf{view}_{2PBA}(H) = \mathsf{view}_S(\widetilde{H})$.
\end{itemize}

\noindent \textbf{Inductive Step}: Assume equivalence holds through step $j$. For step $j+1$:
\begin{itemize}
    \item \emph{Message Generation}: Identical message distributions emerge from equivalent historical views (by inductive hypothesis).
    
    \item \emph{Message Routing}: The hexagonal routing in $S$ and adversarial redirection in $\mathsf{2PBA}$ preserve the same correspondence as $j=1$.
    
    \item \emph{View Update}: Therefore, $\mathsf{view}^{(j+1)}_{2PBA}(A_1) = \mathsf{view}^{(j+1)}_S(A_1)$ and $\mathsf{view}^{(j+1)}_{2PBA}(T) = \mathsf{view}^{(j+1)}_S(T)$ maintain for the $j+1$ rounds.
\end{itemize}

Owing to the termination property of Byzantine Agreement, which mandates that the protocol $\Pi$ halts within $rounds(\Pi)$, the aforementioned procedure completes in at most $rounds(\Pi)$ rounds. Consequently, we derive that $\mathsf{view}_{2PBA}(A_1) = \mathsf{view}_S(A_1)$, $\mathsf{view}_{2PBA}(T) = \mathsf{view}_S(T)$, and $\mathsf{view}_{2PBA}(H) = \mathsf{view}_S(\widetilde{H})$.

    Given the equivalence of views, the parties $A_1$, $T$ and $H$ in both systems \textsf{2PBA} and~\textsf{S} must yield identical final outputs. By the validity property of Byzantine Agreement, since 
    $A_1$, $T$, and $H$ in $\Pi_0$ are initialized with input 0 and $A_1$, $T$, $ITM^C_0 \subset H$ are honest, then $Output_{2PBA}(A_1) = Output_{2PBA}(T) = Output_{2PBA}(ITM^C_0)= 0$. Consequently, system \textsf{S} inherits this outcome, we have $Output_{S}(A_1) = Output_{S}(T) = Output_{S}(ITM^C_0) = 0$.

Analogously, through the system \textsf{2PBA}, we can rigorously demonstrate the following view equivalences:
$\mathsf{view}_{2PBA}(B_1) = \mathsf{view}_S(B_1)$,
$\mathsf{view}_{2PBA}(\widetilde{T}) = \mathsf{view}_S(\widetilde{T})$,
and $\mathsf{view}_{2PBA}(\widetilde{H}) = \mathsf{view}_S(\widetilde{H})$,
thereby establishing $Output_{S}(B_1) = Output_{S}(\widetilde{T}) = Output_{S}(ITM^C_1) = 1$.
\end{proof}

\begin{proof}[Proof of Lemma \ref{lemma:system 2PBA'}]
Similarly, we construct  system \textsf{2PBA'} as instantiated below, shown in Figure~\ref{fig:system 2pba'}:

\begin{figure}
    \centering
    \includegraphics[width=0.9\linewidth]{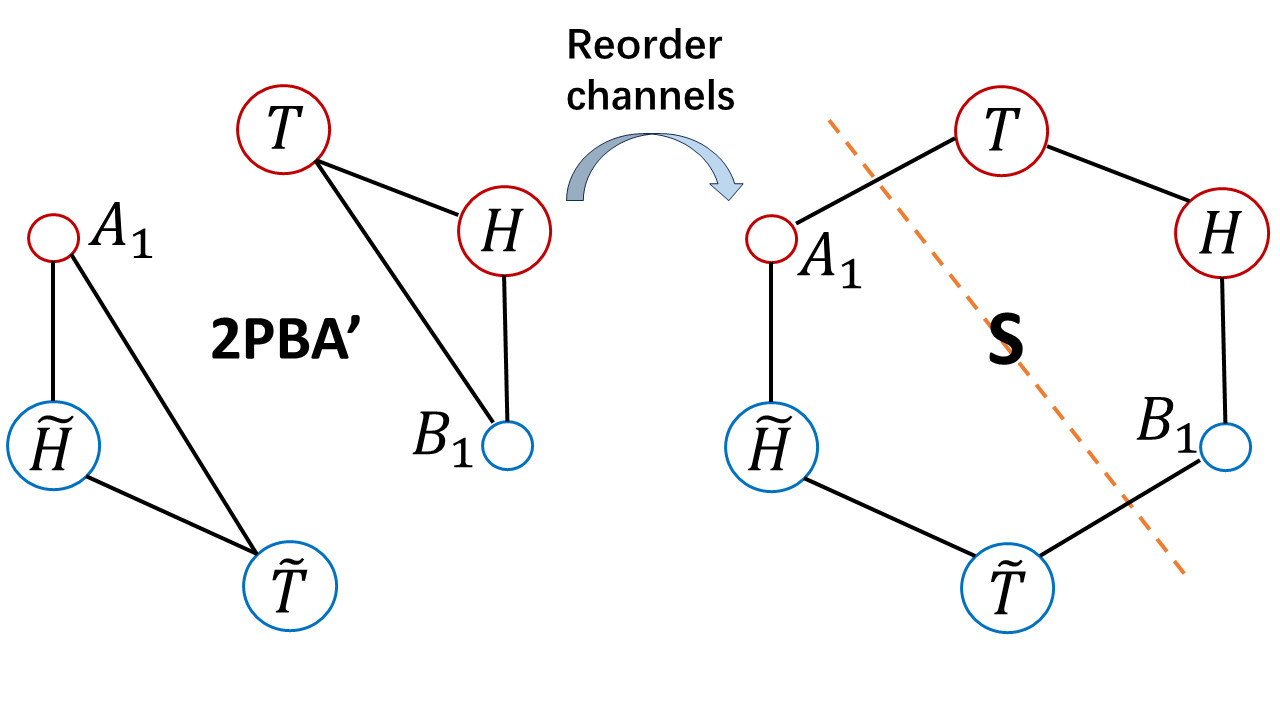}
    \caption{A party's view is identical between System \textsf{2PBA'} and System \textsf{S}.}
    \label{fig:system 2pba'}
\end{figure}
    \begin{itemize}
    \item \textbf{Network Topology}: The system consists of two parallel executions of protocol $\Pi$, denoted $\Pi'_0$ (consisted of $A_1$, $\widetilde{H}$ and $\widetilde{T}$) and $\Pi'_1$ (consisted of $B_1$, $H$ and $T$). 
    
    \item \textbf{Initial Input Value}:
    \begin{itemize}
        \item ITM $ \{A_1\}$ receives input $ 0 $ and ITMs in $\widetilde{H} \cup \widetilde{T}$ receive input $ 1 $;
        \item ITM $ \{B_1\}$ receives input $ 0 $ and ITMs in $H \cup T$ receive input $ 1 $.
    \end{itemize}

    \item \textbf{Adversary}: 
    The adversary $\mathcal{A}(t,c)$ chooses an arbitrary subset of parties $\{P_2,\cdots,P_{\lfloor \frac{n+1}{2} \rfloor}\}$, denoted by $C'$, which satisfies $|C'| = \min\{c, \lfloor \frac{n+1}{2} \rfloor - 1\}$. Then the adversary reorders all the channels between $C'$ and $P_1$, and corrupts the parties in $\{P_2,\cdots,P_{\lfloor \frac{n+1}{2} \rfloor}\} \setminus C'$. As we have $2c+2t+1 \geq n$, the adversary can indeed corrupt all these parties. In the following proof, we denote $ITM^{C'}_0 = \{ITM_0^j : j \in C'\}$ as the set of ITMs run by $C'$ in the protocol $\Pi_0$, and as the same, we denote $ITM^{C'}_1 = \{ITM_1^j : j \in C'\}$.
    
    \item \textbf{Protocol Execution}:
    \begin{itemize}
        \item All honest parties in $\mathsf{S}$ strictly adhere to the instruction of protocol $\Pi$;
        \item The corrupted parties reorder their channels between $P_1$, and besides this action, they adhere to the instruction of protocol $ \Pi $.
    \end{itemize}
\end{itemize}

The subsequent proof by mathematical induction is analogous to the reasoning in system \textsf{2PBA} of the preceding proof; therefore it is omitted here. 
It is noteworthy that after we get $\mathsf{view}_{2PBA'}(A_1) = \mathsf{view}_S(A_1)$, $\mathsf{view}_{2PBA'}(\widetilde{H}) = \mathsf{view}_S(\widetilde{H})$, and $\mathsf{view}_{2PBA'}(\widetilde{T}) = \mathsf{view}_S(\widetilde{T})$, 
we use the Agreement property (rather than Validity) of Byzantine agreement to deduce $Output_{S}(A_1) = Output_{S}(\widetilde{H}) = Output_{S}(ITM^{C'}_1)$. 
Similarly, $Output_{S}(B_1) = Output_{S}(H) = Output_{S}(ITM^{C'}_0)$.
\end{proof}

\end{document}